\newtheorem{theorem}[equation]{Theorem}
\newtheorem{proposition}[equation]{Proposition}
\newtheorem{claim}[equation]{Claim}
\numberwithin{equation}{section}
\newcommand{\rank}{\operatorname{rank}}
\newcommand{\im}{\operatorname{im}}
\newcommand{\sm}{\smallsetminus}
\newcommand{\mdeg}{\operatorname{max-deg}}
\newcommand{\mldeg}{\operatorname{max-left-deg}}
\newcommand{\mrdeg}{\operatorname{max-right-deg}}
\newcommand{\mord}{\operatorname{max-ord}}
\newcommand{\Plus}{\mathord{\tikz[scale=1.5,baseline=1pt]\draw[line width=0.2ex, x=1ex, y=1ex] (0.5,0) -- (0.5,1)(0,0.5) -- (1,0.5);}}
\newcommand{\Zero}{{\mathsf 0}}
\newcommand{\One}{{\mathsf 1}}
\def\({\left(}
\def\){\right)}
\def\>{\rangle}
\def\<{\langle}
\begin{document}

\begin{titlepage}
	

\hfill \\
\hfill \\
\vskip 1cm

\title{Deformed LDPC codes with spontaneously broken non-invertible duality symmetries}

\author{Pranay Gorantla and Tzu-Chen Huang}

\address{Leinweber Institute for Theoretical Physics \& Enrico Fermi Institute, University of Chicago}


\vspace{2.0cm}

\begin{abstract}\noindent
Low-density parity check (LDPC) codes are a well known class of Pauli stabiliser Hamiltonians that furnish fixed-point realisations of nontrivial gapped phases such as symmetry breaking and topologically ordered (including fracton) phases. In this work, we propose symmetry-preserving deformations of these models, in the presence of a transverse field, and identify special points along the deformations with interesting features: (i) the special point is frustration-free, (ii) its ground states include a product state and the code space of the underlying code, and (iii) it remains gapped in the thermodynamic (infinite volume) limit. So the special point realises a first-order transition between (or the coexistence of) the trivial gapped phase and the nontrivial gapped phase associated with the code. In addition, if the original model has a non-invertible duality symmetry, then so does the deformed model. In this case, the duality symmetry is spontaneously broken at the special point, consistent with the associated anomaly.

A key step in proving the gap is a coarse-graining/blocking procedure on the Tanner graph of the code that allows us to apply the martingale method successfully. Our model, therefore, provides the first application of the martingale method to a frustration-free model, that is not commuting projector, defined on an arbitrary Tanner graph.

We also discuss several familiar examples on Euclidean spatial lattice. Of particular interest is the 2+1d transverse field Ising model: while there is no non-invertible duality symmetry in this case, our results, together with known numerical results, suggest the existence of a tricritical point in the phase diagram.
\end{abstract} 

\vfill
	
\end{titlepage}

\eject

\tableofcontents

\section{Introduction}

The connection between error correcting codes and gapped phases of matter has received a lot of attention in the recent years. A concrete demonstration of this connection is provided by \emph{low-density parity check} (LDPC) codes: they are local Pauli stabiliser Hamiltonians\footnote{A Pauli stabiliser is a product of Pauli $Z$/$X$ operators. A stabiliser is said to be $Z$-type ($X$-type) if it is a product of only Pauli $Z$ (Pauli $X$) operators. A local Pauli stabiliser Hamiltonian is a local Hamiltonian where each local term is a Pauli stabiliser, and any two local terms commute with each other.} that are exactly solvable and furnish fixed-point realisations of a variety of gapped phases of matter. Typically, they fall into two categories: \emph{classical LDPC codes}, where all the stabilisers are of $Z$-type, such as the 1+1d Ising model, and \emph{quantum LDPC codes}, where stabilisers need not all be of $Z$-type, such as the 2+1d Toric Code and the X-Cube model. The former describe symmetry breaking phases, whereas the latter capture topologically ordered phases (including fracton phases). For a thorough review of this subject from the condensed matter physics point of view, see \cite{Rakovszky:2023fng,Rakovszky:2024iks}. (See also \cite{Breuckmann:2021yvk} for a complementary perspective from the quantum information side.)

Meanwhile, another development has made a huge impact on our understanding of phases of matter: \emph{generalised symmetries} \cite{Gaiotto:2014kfa}. For instance, phase transitions that were once thought to be beyond the Landau-Ginzburg paradigm are now reinterpreted as spontaneous breaking of \emph{higher-form symmetries}, i.e., group-like symmetries generated by unitary operators supported on submanifolds with codimension larger than $1$. Another class of symmetries that has received significant attention is \emph{non-invertible symmetries}, i.e., symmetries whose generators lack inverses (and so are far from being group-like). See \cite{McGreevy:2022oyu,Cordova:2022ruw,Brennan:2023mmt,Bhardwaj:2023kri,Schafer-Nameki:2023jdn, Luo:2023ive,Shao:2023gho,Carqueville:2023jhb} for a review of these developments.

While they were originally studied in the context of continuum quantum field theory, there has been a growing interest in realising non-invertible symmetries exactly on the lattice, where new features, such as mixing with lattice translations, arise. A prototypical example is the Kramers-Wannier (KW) duality symmetry of the 1+1d critical Ising model \cite{Grimm:1992ni,Oshikawa:1996dj,Ho:2014vla,Hauru:2015abi,Aasen:2016dop,Seiberg:2023cdc,Seiberg:2024gek}. (See \cite{Inamura:2021szw,Tan:2022vaz,Eck:2023gic,Mitra:2023xdo,Sinha:2023hum,Fechisin:2023odt,Yan:2024eqz,Okada:2024qmk,Seifnashri:2024dsd,Bhardwaj:2024wlr,Chatterjee:2024ych,Bhardwaj:2024kvy,Khan:2024lyf,Jia:2024bng,Lu:2024ytl,Li:2024fhy,Ando:2024hun,ODea:2024tkt,Pace:2024tgk} for more 1+1d examples and \cite{Delcamp:2023kew,Inamura:2023qzl,Moradi:2023dan,Cao:2023doz, Cao:2023rrb,ParayilMana:2024txy,Spieler:2024fby,Choi:2024rjm,Hsin:2024aqb,Cao:2024qjj} for 2+1d examples.) Concretely, there are qubits on the sites of the 1d chain, and they interact according to the Hamiltonian
\ie
H_\text{crit-Is} = - \sum_i (Z_i Z_{i+1} + X_i)~.
\fe
The KW duality operator acts as
\ie
\mathsf D_\mathrm{KW}~:~ X_i \to Z_i Z_{i+1} \to X_{i+1}~.
\fe
It is easy to see that the Hamiltonian is invariant under this action, and moreover, applying $\mathsf D_\mathrm{KW}$ twice is equivalent to the lattice translation by one unit (within the $\mathbb Z_2$-symmetric subspace) \cite{Seiberg:2023cdc,Seiberg:2024gek}. The latter fact is codified in the non-invertible fusion rule
\ie
\mathsf D_\mathrm{KW}^2 = T (1+\eta)~,
\fe
where $\eta = \prod_i X_i$ is the $\mathbb Z_2$ symmetry operator.

In a recent work \cite{Gorantla:2024ocs}, we generalised this example to non-invertible duality symmetries in a large class of models based on LDPC codes in the presence of transverse field, including the Wegner duality symmetry of the 3+1d self-dual lattice $\mathbb Z_2$ gauge theory.\footnote{The topological defect associated with the Wegner duality symmetry was constructed in the Euclidean lattice gauge theory in \cite{Koide:2021zxj}, and its continuum counterpart was found in \cite{Choi:2021kmx,Kaidi:2021xfk}.} In particular, we gave a tensor network representation of the non-invertible duality operator based on the Tanner graph of the code and analysed its fusion algebra using a powerful graphical tool from quantum information known as \emph{ZX-calculus} \cite{Coecke:2008lcg,Duncan:2009ocf} (see \cite{vandeWetering:2020giq} for a review). Our work provided a unifying framework to study non-invertible duality symmetries in any dimension, and also beyond Euclidean lattices.

The presence of non-invertible symmetries exactly on the lattice imposes powerful constraints on the low energy physics---e.g., in the case of KW duality symmetry, it has been shown that the low energy phase is either gapless or gapped with a ground state degeneracy that is a multiple of $3$ \cite{Levin:2019ifu,Seiberg:2024gek}. Such constraints do not allow for a trivially gapped phase, but they always allow a gapped phase where the non-invertible duality symmetry is spontaneously broken. While the 1+1d critical Ising model does not realise a spontaneously broken KW duality symmetry, it is realised at a \emph{frustration-free} point\footnote{A frustration-free Hamiltonian is a local Hamiltonian such that (i) each local term is positive semi-definite and (ii) there are zero energy ground states. The first property implies that all energies are nonnegative, and the second property implies that the zero energy states (i.e., the ground states) must be annihilated by each and every local term of the Hamiltonian.} along a KW duality-preserving deformation of this model \cite{OBrien:2017wmx}:\footnote{See also \cite{Sannomiya:2017foz} for a similar deformation, and see \cite{2015PhRvL.115p6401R,2015PhRvB..92w5123R,Seiberg:2024gek} for other deformations preserving the $\mathbb Z_2$ and the Kramers-Wannier duality symmetries.}
\ie\label{obrien-fendley-deform}
H^\lambda_\text{crit-Is} = - \sum_i \left[Z_i Z_{i+1} + X_i - \frac\lambda2 (X_{i-1} Z_i Z_{i+1} + Z_{i-1} Z_i X_{i+1})\right]~.
\fe
The frustration-free point is at $\lambda = 1$, where the ground states are $|{+}\cdots{+}\>$, $|0\cdots0\>$, and $|1\cdots1\>$.\footnote{Here, $|0\>$ and $|1\>$ are the eigenstates of $Z$ with eigenvalues of $1$ and $-1$, and $|{+}\>$ and $|{-}\>$ are the eigenstates of $X$ with eigenvalues $1$ and $-1$, respectively.} An observant reader might notice that the first state is the ground state of the trivial paramagnet without the Ising term, whereas the last two are the ground states of the Ising model without the transverse field term (i.e., the classical LDPC code associated with the Ising model). Indeed, the KW duality symmetry exchanges the former with the symmetric superposition of the latter.

In another recent work \cite{Gorantla:2024ptu}, we proposed a similar Wegner duality-preserving deformation of the 3+1d self-dual lattice $\mathbb Z_2$ gauge theory. Along this deformation, we identified a frustration-free point where the Wegner duality symmetry is spontaneously broken. This model features (i) nine exactly solvable and exactly degenerate ground states (a trivial product state and eight topologically ordered Toric Code ground states), and (ii) a gap that remains nonzero in the thermodynamic limit. We emphasise that this model is not commuting projector, so the fact that the ground states remain gapped in the thermodynamic limit is highly nontrivial, even more so given the model is in 3+1d.

It seems only natural to expect that the above deformations should extend to models based on LDPC codes in the presence of transverse field. In this work, we confirm this expectation, i.e., we propose a symmetry-preserving deformation of an arbitrary classical LDPC code, as well as a quantum \emph{Calderbank–Shor–Steane} (CSS) code,\footnote{A quantum CSS code is a special kind of quantum LDPC code where each stabiliser is a product of either Pauli $Z$ operators or Pauli $X$ operators, but not both. Examples include 2+1d Toric Code, X-Cube model, etc.} in transverse field. Furthermore, we identify a special point along the deformation with the following features:
\begin{enumerate}
\item the Hamiltonian is frustration-free, but not commuting projector,
\item the ground states are exactly solvable and exactly degenerate---one of them is a trivial product state and the rest form the \emph{code space} of the original code\footnote{The code space of an LDPC code is the space of ground states of the associated Hamiltonian.}---and
\item they remain gapped in the thermodynamic (infinite volume) limit.
\end{enumerate}
In other words, the special point realises a first-order transition between the trivial gapped phase and the nontrivial gapped phase described by the original code. If, moreover, the original model has a non-invertible duality symmetry, then the deformation preserves this symmetry, and it is spontaneously broken at the special point.

The main technical achievement of this work is to show that the Hamiltonian at the frustration-free point is gapped in the thermodynamic limit. Although generalising our proposal to LDPC codes provides a unifying framework to analyse such duality-preserving deformations in any dimension, and also beyond Euclidean lattices, proving gaps of local Hamiltonians at this level of generality is hard, if not impossible \cite{Cubitt:2015lta}. To the best of our knowledge, commuting projector models, such as LDPC codes themselves, are the only models that can be defined this generally and can be shown to be gapped (for obvious reasons). Our deformed model, therefore, provides a first example of a non-commuting projector model that can be defined on arbitrary lattices, Euclidean or otherwise, and can be shown to be gapped.

The breakthrough that led to this achievement is as follows. There are a couple of well known methods to prove the gap of a frustration-free Hamiltonian: Knabe's method \cite{Knabe:1988} and the martingale method \cite{Fannes:1992}. These methods have found numerous important applications in diverse dimensions \cite{Nachtergaele:1996,Verstraete:2006mdr,Hastings:2005pr,Bravyi:2015ksj,Gosset:2016,Kastoryano_2018,Lemm:2019pxa,Abdul-Rahman:2019wna,Anshu:2019mhi,Lemm:2019jxn,Pomata:2019uap,Andrei:2022yym,Young_2024,Lemm:2024yre,Rai:2024rge} (see also \cite{Cirac:2020obd} for a related review). A key step in both methods is \emph{coarse-graining/blocking} the lattice into large, but finite, and overlapping blocks. Crucially, the overlap between adjacent blocks must be large enough, but the blocks themselves cannot be too large, and the number of blocks that overlap simultaneously cannot be too large. It is clear that there is delicate balance in choosing the blocks appropriately, and this balance can be achieved on Euclidean lattices. To apply these methods to models defined on arbitrary (bounded-degree, bipartite) graphs, one has to come up with a suitable coarse-graining/blocking procedure on the graph. This seems impossible at first, but this is precisely what we achieve in this work, which then allows us to prove the gap of the deformed model. It is natural to ask if our procedure can be used to obtain finite-size criteria on the gaps of frustration-free Hamiltonians beyond Euclidean lattice, similar to the ones on Euclidean lattice \cite{Gosset:2016,Lemm:2019pxa,Anshu:2019mhi}.

The rest of this work is organised as follows. In Section \ref{sec:graph}, we briefly discuss some useful notions of graph theory. Readers who wish to jump to the physics can skip this section without much difficulty. In Section \ref{sec:gTFIM}, we review the quantum Hamiltonian associated to a classical LDPC code in the presence of transverse field. We review the invertible and non-invertible symmetries, and discuss the ground states at special points in the phase diagram. In Section \ref{sec:graph-deformed}, which is the meat of this work, we introduce a symmetry-preserving deformation of this Hamiltonian and identify a special point where the Hamiltonian is frustration-free. We find that the ground states at this point include a product state and the code space of the code, and show that there are no other ground states. We then review the martingale method of proving the gap of a frustration-free Hamiltonian, point out and resolve some subtleties in using this method on a graph, and then apply this method to prove that the deformed model is gapped at the frustration-free point. We end this section with deformations of some familiar examples, such as the 1+1d and 2+1d Ising models, and the 2+1d plaquette Ising model. In Section \ref{sec:css}, we extend our proposed deformation to quantum CSS codes in the presence of transverse field. We then discuss the deformations in various well known examples, such as the 2+1d and 3+1d Toric Codes, and the X-Cube model. In Section \ref{sec:discuss}, we summarise our results and comment on the phase diagram of the deformed model and its generalisation to $\mathbb Z_N$ qudits. In particular, we combine our purely analytical results with known numerical data to sketch the plausible phase diagram in the translation-invariant deformed model on a Euclidean lattice in any dimension. We then end this section with some potential future directions.

There are three appendices containing some technical details. In Appendix \ref{app:eq-wt}, we provide the proof of a claim made in Section \ref{sec:graph-proofofGSD}. This is the main combinatorial component of this work that underlies many other results concerning the deformed model. In Appendix \ref{app:coarse-grain}, we spell out a general coarse-graining procedure on arbitrary hypergraphs associated to local (frustration-free) Hamiltonians. This is the key step in the martingale method that allows us to prove the gap of the deformed model. In Appendix \ref{sec:graph-delta}, we prove the upper bound on the martingale function, a quantity that controls the gap of the Hamiltonian in the thermodynamic limit, for the deformed model.

\section{Basic notions of graphs}\label{sec:graph}
In this section, we discuss some useful notions in graph theory, which can be found in any standard textbook on graph theory, such as \cite{Diestel}.

A \emph{graph} $\mathcal G$ is given by a set of \emph{vertices} $\mathcal V$ and a set of \emph{edges} $\mathcal E \subseteq \binom{\mathcal V}2$ between the vertices.\footnote{The notation $\binom{\mathcal V}k$ denotes the collection of all $k$-element subsets of $\mathcal V$. This is analogous to the notation $2^{\mathcal V}$, which denotes the power set of $\mathcal V$, i.e., the collection of all subsets of $\mathcal V$. Indeed, $2^{\mathcal V} = \bigsqcup_{k=0}^{|\mathcal V|} \binom{\mathcal V}k$, which resembles the identity $2^n = \sum_{k=0}^n \binom nk$. Here and below, we write $A\sqcup B$ to denote the union of two disjoint sets (a.k.a. the \emph{disjoint union} of) $A$ and $B$.}
For simplicity, we write $x\sim y$ if and only if there is an edge between $x$ and $y$, i.e., $x\sim y \iff \{x,y\} \in \mathcal E$.

The \emph{adjacency matrix} of $\mathcal G$ is a $|\mathcal V|\times |\mathcal V|$ matrix $\mathfrak a$ with entries $\mathfrak a_{xy} = 1$ if $x\sim y$ and $0$ otherwise. The \emph{degree} of a vertex $x$, denoted as $\deg(x)$, is the number of edges incident to it, i.e., $\deg(x) := \sum_{y\in \mathcal V} \mathfrak a_{xy}$. Let $\mdeg(\mathcal G) := \max_{x\in \mathcal V} \deg(x)$ be the maximum among all the degrees. We say $\mathcal G$ is of \emph{bounded-degree} if there is a positive integer $\Delta$ such that $\mdeg(\mathcal G) \le \Delta$.

Given two vertices $x$ and $y$, a \emph{path} between them is a sequence of distinct vertices $x = x_1,x_2,\ldots,x_n,x_{n+1} = y$ such that $x_i \sim x_{i+1}$ for $i=1,\ldots,n$. We say $\mathcal G$ is \emph{connected} if there is a path between any two vertices.

The \emph{length} of a path is the number of edges in the path. A \emph{shortest path} between $x$ and $y$ is a path with smallest possible length between those vertices, and the length of the shortest path is known as the \emph{distance}, denoted as $d(x,y)$. Given a vertex $x$ and a nonnegative integer $r$, the \emph{ball} of radius $r$ centred at $x$ is $B(x,r) := \{ y\in \mathcal V: d(x,y) \le r\}$, and the \emph{sphere} of radius $r$ centred at $x$ is $S(x,r) := \{y \in \mathcal V: d(x,y) = r\}$. In particular, $B(x,0) = S(x,0) =\{x\}$, and $\mathscr N(x) := S(x,1)$ denotes the set of vertices adjacent to $x$, i.e., the \emph{neighbourhood} of $x$.

An \emph{automorphism} $\phi : \mathcal V \to \mathcal V$ of a graph is a bijective map on its vertices such that $\phi(x) \sim \phi(y) \iff x \sim y$ for all $x,y\in \mathcal V$. In particular, any automorphism preserves the degrees, i.e., $\deg(\phi(x)) = \deg(x)$, and the distances, i.e., $d(\phi(x),\phi(y)) = d(x,y)$. It follows that $|B(\phi(x),r)| = |B(x,r)|$, or equivalently, $|S(\phi(x),r)| = |S(x,r)|$ for any $r\ge 0$.

A \emph{subgraph} $\mathcal G'$ of a graph $\mathcal G$ is a subset of vertices $\mathcal V' \subset \mathcal V$ and edges $\mathcal E'\subset \mathcal E$ such that the ends of edges in $\mathcal E'$ are contained in $\mathcal V'$. The subgraph is said to be \emph{induced} if all the edges in $\mathcal E$ between the vertices of $\mathcal V'$ are present in $\mathcal E'$, i.e., for any $x,y\in \mathcal V'$, $\{x,y\}\in \mathcal E\implies \{x,y\}\in\mathcal E'$. It is said to be \emph{connected} if any two vertices in $\mathcal V'$ are connected by a path within $\mathcal G'$.

Given two subgraphs $\mathcal G_1$ and $\mathcal G_2$, their \emph{intersection} $\mathcal G_1 \cap \mathcal G_2$ is defined as the subgraph on vertices $\mathcal V_1 \cap \mathcal V_2$ with edges $\mathcal E_1 \cap \mathcal E_2$. The \emph{union} of subgraphs is defined similarly. Note that if $\mathcal G_{1,2}$ are induced, then their intersection, but not necessarily their union, is induced as well. In contrast, if $\mathcal G_{1,2}$ are connected and $\mathcal G_1 \cap \mathcal G_2 \ne \emptyset$, then their union, but not necessarily their intersection, is connected as well.

\subsection{Bipartite graphs}

Let us now specialise the above notions to bipartite graphs.
A graph $\mathcal G$ is \emph{bipartite} if the vertex-set is a disjoint union of two parts $\mathcal V = \widehat V \sqcup V$ and there are no edges within either part, i.e., $\mathcal E \cap \binom{\widehat V}2 = \emptyset$ and $\mathcal E \cap \binom V2 = \emptyset$. In other words, all the edges go between $\widehat V$ and $V$. We use $v,w,\ldots$ to denote the vertices in $V$ and $\hat v,\hat w,\ldots$ to denote vertices in $\widehat V$. Note that the distance $d(v,\hat v)$ is always odd, whereas $d(v,w)$ and $d(\hat v,\hat w)$ are always even.

The adjacency matrix of a bipartite graph takes the form
\ie\label{graph-adj-matrix}
\begin{pmatrix}
0_{|\widehat V|\times |\widehat V|} & \mathfrak h_{|\widehat V|\times |V|}\\
\mathfrak h_{|V|\times |\widehat V|}^\intercal & 0_{|V|\times |V|}
\end{pmatrix}~,
\fe
where $\mathfrak h$ is known as the \emph{biadjacency matrix}. It is given by $\mathfrak h_{\hat v,v} = 1$ if $v \sim \hat v$ and $0$ otherwise. Let $\mrdeg(\mathcal G) := \max_{v\in V} \deg(v)$ be the maximum among all the degrees of the vertices in $V$. We say that $\mathcal G$ is of \emph{bounded-right-degree} if there is a positive integer $D$ such that $\mrdeg(\mathcal G) \le D$. Similar comments apply to left-degrees, i.e., degrees of vertices in $\widehat V$.

For a bipartite graph $\mathcal G$, there are two kinds of automorphisms: (i) \emph{preserving} automorphisms which fix the vertex-sets $V$ and $\widehat V$, and (ii) \emph{reversing} automorphisms which exchange the vertex-sets $V$ and $\widehat V$.

A subgraph $\mathcal G'$ of the bipartite graph $\mathcal G$ is said to be \emph{left-closed} if for any $\hat v \in \widehat V'$, its neighbours in $\mathcal G'$ are the same as its neighbours in $\mathcal G$. A \emph{right-closed} subgraph is defined similarly. Note that the intersection and the union of two left-(right-)closed subgraphs are also left-(right-)closed.

\section{Review of classical LDPC code in transverse field}\label{sec:gTFIM}

In this section, we review local Hamiltonians based on classical LDPC codes in the presence of a transverse field. 

A classical LDPC code corresponds to a bipartite graph $\mathcal G$ on vertex-set $\widehat V \sqcup V$ with biadjacency matrix $\mathfrak h$.\footnote{In the quantum information literature, $\mathfrak h$ and $\mathcal G$ are called the \emph{parity check matrix} and the corresponding \emph{Tanner graph}, respectively.} The graph $\mathcal G$ is assumed to satisfy the following properties: (i) it is connected, and (ii) there are positive integers $D,\widehat D$ such that $\mldeg(\mathcal G) \le \widehat D$ and $\mrdeg(\mathcal G) \le D$.\footnote{The ``low-density'' in LDPC refers to the second assumption on $\mathcal G$, i.e., the left- and right-degrees of $\mathcal G$ are both bounded by constants. Equivalently, the parity check matrix $\mathfrak h$ is sparse, i.e., it has a bounded number of nonzero entries in each row and each column.} Together, these assumptions imply $|\widehat V| \le D |V|$ and $|V| \le \widehat D |\widehat V|$.

One can associate a quantum Hamiltonian to this code as follows. The total Hilbert space is the tensor product of qubits on the vertices in $V$, i.e., $\mathcal H := \bigotimes_{v\in V} \mathbb C^2$. The Pauli operators acting on the qubit $v$ are denoted as $Z_v$ and $X_v$. They satisfy $Z_v^2 = X_v^2 = 1$ and $Z_v X_v = - X_v Z_v$. The eigenstates of $Z_v$ are denoted as $|0\>$ and $|1\>$, whereas the eigenstates of $X_v$ are denoted as $|{+}\>$ and $|{-}\>$.

Each $\hat v \in \widehat V$ corresponds to a generalised Ising-type interaction term involving the qubits on the vertices adjacent to $\hat v$:\footnote{This is known as a \emph{(parity) check} in the quantum information literature.}
\ie
F_{\hat v} := \prod_{v\in V} Z_v^{\mathfrak h_{\hat v,v}} = \prod_{v\in V\,:\,v \sim \hat v} Z_v~.
\fe
The full Hamiltonian, including the transverse field terms, is given by
\ie\label{H-gTFIM}
H = - \sum_{\hat v\in\widehat V} J_{\hat v} F_{\hat v} - \sum_{v\in V} h_v X_v~,\qquad 
\fe
where $J_{\hat v}$ and $h_v$ are coupling constants. In \cite{Gorantla:2024ocs}, we referred to this model as the \emph{generalised transverse-field Ising model}.

The assumptions on the graph $\mathcal G$ have the following physical interpretation:
\begin{itemize}
\item Since $\mathcal G$ is connected, the model does not factorise into decoupled models.

\item Each interaction term involves at most $\widehat D$ qubits, and each qubit participates in at most $D$ interaction terms. One can interpret this as the graph-theoretic version of a ``local, short-ranged'' Hamiltonian.

\item We further assume that $|V|> 2D \widehat D$, which in turn implies $|\widehat V| > 2D$. Since we are interested in taking the thermodynamic (infinite-volume) limit, where $|V| \to \infty$ (and hence $|\widehat V|\to \infty$) while $D$ and $\widehat D$ are fixed, this is not an unreasonable assumption.
\end{itemize}

\subsection{Symmetries}\label{sec:gTFIM-sym}

The model \eqref{H-gTFIM} has several symmetries at various values of couplings. They can be organised into three categories.

\subsubsection{Internal symmetries (logical operators)}
There are several $\mathbb Z_2$ \emph{internal symmetries} generated by $\eta_a := \prod_{v\in V} X_v^{a_v}$,\footnote{In the quantum information literature, they are known as \emph{logical operators}.} where $a\in \ker \mathfrak h$, i.e., $\sum_{v\in V} \mathfrak h_{\hat v v} a_v = 0 \mod 2$. The condition on $a$ ensures that $\eta_a$ commutes with $F_{\hat v}$ for all $\hat v$, and hence it commutes with the Hamiltonian $H$ for any values of the couplings.

These operators are invertible and they satisfy the fusion rule
\ie
\eta_a \eta_{a'} = \eta_{a+a'}~,\qquad \forall a,a'\in \ker \mathfrak h~.
\fe
Therefore, the full internal symmetry group is $\mathbb Z_2^\nu$, where $\nu := \dim \ker \mathfrak h$.

\subsubsection{Spatial symmetries}
Every preserving automorphism $\pi$ of $\mathcal G$ corresponds to a \emph{spatial symmetry} generated by the operator $T_\pi$ that acts on local operators as
\ie
T_\pi X_v T_\pi^{-1} = X_{\pi(v)}~,\qquad T_\pi Z_v T_\pi^{-1} = Z_{\pi(v)}~.
\fe
In other words, the operator $T_\pi$ permutes the qubits according to the permutation $\pi|_V$, which explains the nomenclature ``spatial symmetries''. It acts on the generalising Ising interaction term as
\ie
T_\pi F_{\hat v} T_\pi^{-1} = F_{\pi(\hat v)}~.
\fe
Therefore, it commutes with $H$ provided $J_{\pi(\hat v)} = J_{\hat v}$ for all $\hat v\in \widehat V$ and $h_{\pi(v)} = h_v$ for all $v\in V$. One particular choice of couplings that satisfies these requirements is
\ie\label{J-h-choice}
J_{\hat v} = J |S(\hat v,r)|~,\quad h_v = h |S(v,r)|~,
\fe
for some $r\ge 0$ and constants $J,h$. Specifically, setting $r=0$ gives the more familiar model where $J_{\hat v} = J$ and $h_v = h$.

These operators are also invertible and they satisfy the fusion rule
\ie
T_\pi T_{\pi'} = T_{\pi \circ \pi'}~,
\fe
where $\pi,\pi'$ are preserving automorphisms.

\subsubsection{Non-invertible duality symmetries}
Every reversing automorphism $\rho$ of $\mathcal G$ corresponds to a \emph{non-invertible duality symmetry} generated by the operator $\mathsf D_\rho$. More concretely, it is a composition of the reversing automorphism with the gauging map that gauges the internal $\mathbb Z_2^\nu$ symmetry. (See \cite{Gorantla:2024ocs} for a discussion of this operator, its algebra, and a tensor network representation based on the graph $\mathcal G$.) It acts on the local $\mathbb Z_2^\nu$-symmetric operators as
\ie
\mathsf D_\rho X_v = F_{\rho(v)} \mathsf D_\rho~,\qquad \mathsf D_\rho F_{\hat v} = X_{\rho(\hat v)} \mathsf D_\rho~.
\fe
It is referred to as a ``duality symmetry'' because it is built from the $\mathbb Z_2^\nu$-gauging-map that gauges the $\mathbb Z_2^\nu$ internal symmetries and exchanges the generalised Ising and transverse-field terms in the Hamiltonian. It commutes with $H$ provided $J_{\rho(v)} = h_v$ for all $v\in V$ and $h_{\rho(\hat v)} = J_{\hat v}$ for all $\hat v\in \widehat V$. One particular choice of couplings that satisfies these requirements is the one in \eqref{J-h-choice} with $J = h$.

These operators are non-invertible and they satisfy the fusion rule
\ie\label{gTFIM-D^2}
\mathsf D_\rho \mathsf D_{\rho'} = T_{\rho \circ \rho'} \mathsf C~,\qquad \mathsf C \sim \sum_{a\in \ker \mathfrak h} \eta_a~,
\fe
where $\rho,\rho'$ are reversing automorphisms, and we omitted a constant factor in the definition of the condensation operator $\mathsf C$.

For completeness, we note the rest of the fusion algebra here:
\ie\label{gTFIM-fusion-alg}
&T_\pi \eta_a = \eta_{\pi(a)} T_\pi~,\qquad \mathsf D_\rho \eta_a = \eta_a \mathsf D_\rho = \mathsf D_\rho~,
\\
&T_\pi \mathsf D_\rho = \mathsf D_{\pi\circ\rho}~,\qquad \mathsf D_\rho T_\pi = \mathsf D_{\rho\circ\pi}~,
\fe
where $\pi(a)_v := a_{\pi^{-1}(v)}$.

\subsection{Ground states}
Solving for the spectrum of the Hamiltonian $H$ at generic couplings is hard. However, the model is exactly solvable in some special cases. For instance, when $J_{\hat v} = 0$ and $h_v>0$, $H$ has only the transverse-field terms, so there is a unique ground state,
\ie\label{plus-state}
|\Plus\> := |{+}\cdots{+}\>~.
\fe
This corresponds to the $\mathbb Z_2^\nu$-preserving (paramagnetic) phase.

On the other hand, when $J_{\hat v}>0$ and $h_v = 0$, $H$ has only the generalised Ising interaction terms, and in fact, it is the commuting projector Hamiltonian based on the classical LDPC code. There are $2^\nu$ ground states, one for each $a\in \ker\mathfrak h$, that form the code space:
\ie\label{graph-eta-state}
|a\>:= \eta_a |\Zero\>~, \quad\text{where}\quad |\Zero\> := |0\cdots 0\>,
\fe
This corresponds to the $\mathbb Z_2^\nu$-broken (ferromagnetic) phase. The two phases are exchanged by the non-invertible duality symmetries, i.e.,\footnote{The action of $\mathsf D_\rho$ on $|\Zero\>$ can be computed using ZX-calculus:
\ie
\mathsf D_\rho |\Zero\> \sim \begin{ZX}[circuit]
\zxN{} &[-8pt] \zxN-{v\in V} & \zxN-{\hat v\in \widehat V} & \zxN{} &\zxN-{v\in V} &[-8pt]\zxN{}\\[-8pt]
\zxX{} \rar & \zxZ{} \ar[d,3 vdots] \ar[dr] & \zxX{} \ar[dl,"\mathfrak h" {description,font=\normalsize}] \ar[d,3 vdots] \rar[H] & \zxN{} \ar[dr] & \zxN{} \ar[d,3 vdots] \ar[dl,"\rho" {description,font=\normalsize}] \rar & \zxN{}\\[8pt]
\zxX{} \rar & \zxZ{} & \zxX{} \rar[H] & \zxN{} &\zxN{} \rar & \zxN{}\\
\end{ZX}
\overset{\text{SC,SF}}{\sim} \begin{ZX}[circuit]
\zxN-{\hat v\in \widehat V} & \zxN{} &\zxN-{v\in V} &[-8pt]\zxN{}\\[-8pt]
\zxX{} \ar[d,3 vdots] \rar[H] & \zxN{} \ar[dr] & \zxN{} \ar[d,3 vdots] \ar[dl,"\rho" {description,font=\normalsize}] \rar & \zxN{}\\[8pt]
\zxX{} \rar[H] & \zxN{} &\zxN{} \rar & \zxN{}\\
\end{ZX}
\overset{\text{CC'}}{=} \begin{ZX}[circuit]
\zxN-{v\in V} &[-8pt]\zxN{}\\[-8pt]
\zxZ{} \ar[d,3 vdots] \rar & \zxN{}\\[8pt]
\zxZ{} \rar & \zxN{}\\
\end{ZX}
= |\Plus\>~.
\fe
Since $\mathsf D_\rho \eta_a = \mathsf D_\rho$ \eqref{gTFIM-fusion-alg}, the action on $|a\>$ is similar. The action of $\mathsf D_\rho$ on $|\Plus\>$ can then be inferred using the fact that $\mathsf D_\rho \mathsf D_{\rho^{-1}} = \mathsf C$ \eqref{gTFIM-D^2}, and $\mathsf C|\Zero\> \sim \sum_{a\in \mathfrak h} |a\>$. For a quick review of ZX-calculus, see \cite{vandeWetering:2020giq}, and for an application of ZX-calculus to non-invertible symmetries---in particular, for examples of derivations similar to the ones here---see \cite[Section 6]{Gorantla:2024ocs}.}
\ie\label{gTFIM-D-action}
\mathsf D_\rho |\Plus\> \sim \sum_{a\in \ker \mathfrak h} |a\>~,\qquad \mathsf D_\rho |a\> \sim |\Plus\>~,
\fe
where we omitted some constant factors.

One can write the ground states \eqref{graph-eta-state} as
\ie\label{graph-xistate}
|\xi\> := \prod_{a\in \mathcal B} \eta_a^{\xi_a} |\Zero\>~,
\fe
where $\xi \in \{0,1\}^\nu$ and $\mathcal B$ is a basis of $\ker \mathfrak h$ over $\mathbb Z_2$. It is also useful to write these ground states in a different basis:
\ie
|\zeta\> := \frac1{2^{\nu/2}} \sum_{\xi \in \{0,1\}^\nu} (-1)^{\sum_{a\in \mathcal B} \zeta_a \xi_a} |\xi\>
= 2^{\nu/2} \prod_{a\in \mathcal B} \left(\frac{1+(-1)^{\zeta_a} \eta_a}2\right) |\Zero\>~,
\fe
where $\zeta \in \{0,1\}^\nu$. The $\mathbb Z_2^\nu$ symmetry operators are diagonal in this basis:
\ie
\prod_{a\in \mathcal B} \eta_a^{\xi_a}~|\zeta\> = (-1)^{\sum_{a\in\mathcal B}\xi_a\zeta_a} |\zeta\>~.
\fe
That is, the state $|\zeta\>$ carries the charge $\zeta_a$ under $\eta_a$ for each $a\in \mathcal B$.

Next, we want to write the states $|\zeta\>$ in terms of the charged operators. Define the \emph{charged operator} as
\ie
W_b := \prod_{v\in V} Z_v^{b_v}~,
\fe
where $b\in \mathbb Z_2^{|V|}$. It carries the charge $b\cdot a = \sum_{v\in V} b_v a_v$ under the internal symmetry generated by $\eta_a$ because
\ie
\eta_a W_b \eta_a^{-1} = (-1)^{b\cdot a} W_b~.
\fe
Note that, since $F_{\hat v}$ commutes with $\eta_a$, $W_b F_{\hat v}$ and $W_b$ carry the same charges. In fact, the converse is also true, i.e., if $W_b$ and $W_{b'}$ carry same charges, then $W_{b'}W_b^{-1}$ is a product of some $F_{\hat v}$'s.\footnote{The distinct charged operators are labelled by elements of the quotient space $(\ker \mathfrak h)^* := \mathbb Z_2^{|V|}/(\ker \mathfrak h)^\perp$, known as the \emph{dual space} of $\ker \mathfrak h$. Here, $(\ker \mathfrak h)^\perp$ is the \emph{orthogonal complement} of $\ker \mathfrak h$, i.e., the set of all $b\in \mathbb Z_2^{|V|}$ such that $b\cdot a = 0$ for all $a\in \ker \mathfrak h$. The orthogonal complement is generated by the generalised Ising interaction terms $F_{\hat v}$.}

Let $\mathcal B^*$ be a dual (or reciprocal) basis of $(\ker \mathfrak h)^*$. One can pick a dual basis so that for any $a\in \mathcal B$ there is a unique $a^* \in \mathcal B^*$ that satisfies $a^* \cdot b = \delta_{a,b}$ for all $b \in \mathcal B$, where $\delta_{a,b} = 1$ if $a=b$ and $0$ if $a\ne b$.

In terms of the charged operators associated with the dual basis, we can write\footnote{We can derive \eqref{zeta-basis-ch-op} using ZX-calculus (see \cite[Section 6]{Gorantla:2024ocs} for similar manipulations):
\ie\label{zeta-basis-zx-deriv}
|\zeta=0\> \sim \prod_{a\in \mathcal B} \left(\frac{1+\eta_a}2\right) |\Zero\> \sim \begin{ZX}[circuit,mbr=4.5]
\zxN{} & \zxN-{v\in V} & \zxN-{\hat v \in \widehat V}
\\[-8pt]
\zxN{} & \zxZ{} \ar[d,3 vdots] \ar[dr] & \zxX{} \ar[d,3 vdots] \ar[dl,"\mathfrak h"{description,font=\normalsize}]
\\[8pt]
\zxN{} & \zxZ{} & \zxX{}
\\
\zxX{} \rar & \zxX{} \ar[d,3 vdots] \ar[uu,('] \ar[r] & \zxN{}
\\[8pt]
\zxX{} \rar & \zxX{} \ar[uu,('] \ar[r] & \zxN{}
\\[-8pt]
\zxN{} & \zxN-{v\in V} & \zxN{}
\end{ZX}
\overset{\text{SF,I}}{=} \begin{ZX}[circuit,mbr=4.5]
\zxN{} & \zxN-{v\in V} & \zxN-{\hat v \in \widehat V}
\\[-8pt]
\zxN{} & \zxZ{} \ar[d,3 vdots] \ar[dr] & \zxX{} \ar[d,3 vdots] \ar[dl,"\mathfrak h"{description,font=\normalsize}]
\\[8pt]
\zxN{} & \zxZ{} & \zxX{}
\\
\zxZ{} \rar & \zxZ{} \ar[d,3 vdots] \ar[uu,('] \ar[r] & \zxN{}
\\[8pt]
\zxZ{} \rar & \zxZ{} \ar[uu,('] \ar[r] & \zxN{}
\\[-8pt]
\zxN{} & \zxN-{v\in V} & \zxN{}
\end{ZX}
\sim \prod_{\hat v \in \widehat V} \left(\frac{1+F_{\hat v}}2\right) |\Plus\>~.
\fe}
\ie\label{zeta-basis-ch-op}
|\zeta\> \sim \prod_{\hat v \in \widehat V} \left(\frac{1+F_{\hat v}}2\right) \prod_{a\in \mathcal B} W_{a^*}^{\zeta_a}|\Plus\>~,
\fe
where we omit the normalisation factor. This follows from \eqref{zeta-basis-zx-deriv}. Expanding the product over $\hat v$, we can rewrite this state as
\ie\label{zeta-eq-wt-sup}
|\zeta\> \sim \sum_{\widehat U \subseteq \widehat V}~ \prod_{\hat v\in \widehat U} F_{\hat v} ~ \prod_{a\in \mathcal B} W_{a^*}^{\zeta_a}|\Plus\> = \sum_{\widehat U \subseteq \widehat V} |\sigma^{\zeta,\widehat U}\>~,
\fe
where for each subset $\widehat U \subseteq \widehat V$ and each $\zeta\in\{0,1\}^\nu$, we define the state
\ie\label{eta-product-eigenstates}
|\sigma^{\zeta,\widehat U}\> := \prod_{\hat v\in \widehat U} F_{\hat v} |\sigma^\zeta\>~,\qquad \text{where}\qquad |\sigma^\zeta\> := \prod_{a\in \mathcal B} W_{a^*}^{\zeta_a}|\Plus\>~.
\fe

The utility of these states comes from the following interpretation. First, $|\sigma^\zeta\>$ is a product state with a $|{+}\>$ or $|{-}\>$ at each vertex, so it represents a configuration of ``signs'' on the vertices $V$. In other words, $|\sigma^\zeta\>$ is an eigenstate of $X_v$'s, so we refer to it as an \emph{X-state}. Acting by $F_{\hat v}$ on this state ``flips the signs'' on all vertices adjacent to $\hat v$. Then, the state $|\sigma^{\zeta,\widehat U}\>$ is obtained by applying such flips on all $\hat v \in \widehat U$. Since $F_{\hat v}$ commutes with $\eta_a$, the flips do not change the charges carried by the state under $\mathbb Z_2^\nu$, i.e., $|\sigma^{\zeta,\widehat U}\>$ carries the charge $\zeta_a$ under $\eta_a$ for all $a\in \mathcal B$, independent of $\widehat U$.

In fact, the converse is also true, i.e., given any two X-states $|\sigma\>$ and $|\sigma'\>$ that carry the same charges, there is a ``set of flips'' $\widehat U \subseteq \widehat V$ such that $|\sigma'\> = \prod_{\hat v\in \widehat U} F_{\hat v} |\sigma\>$. To see this, write $|\sigma\> = W_b |\Plus\>$ and $|\sigma'\> = W_{b'} |\Plus\>$ for some $b,b'\in \mathbb Z_2^{|V|}$. Then, $W_b$ and $W_{b'}$ carry the same charges, so $W_{b'} W_b^{-1}$ is uncharged. This means it can be written as a product of $F_{\hat v}$'s, i.e., there is a $\widehat U \subseteq \widehat V$ such that $W_{b'} W_b^{-1} = \prod_{\hat v \in \widehat U} F_{\hat v}$.

Finally, the expression for the state $|\zeta\>$ in \eqref{zeta-eq-wt-sup} can be interpreted as an equal-weight superposition of all X-states $|\sigma\>$ that carry the same charges under $\mathbb Z_2^\nu$ as $|\zeta\>$.

\section{Deformed classical LDPC code in transverse field}\label{sec:graph-deformed}
We now propose a deformation of the classical LDPC code in transverse field that preserves all its symmetries. Its ground states include a trivial product state and the code space. Moreover, they are exactly degenerate at finite volume and remain gapped in the thermodynamic (infinite-volume) limit. In other words, the deformed model realises a coexistence of the trivial phase and the $\mathbb Z_2^\nu$-symmetry-broken phase. In fact, it realises a gapped phase where the non-invertible duality symmetries are spontaneously broken.

The deformation we propose is the following:\footnote{The choice of distance $3$ is not special. One could choose a different distance that is odd and greater than $1$, and our conclusions remain unchanged.}
\ie\label{graph-deformedH}
H_\lambda = \sum_{v,\hat v\,:\, d(v,\hat v)=3} \left( - J F_{\hat v} - h X_v + \frac{J\lambda}2 X_v F_{\hat v} \right)~,
\fe
where $\lambda$ is the coupling associated with the deformation. The first thing to note about this Hamiltonian is that the deformation is short-ranged and local, i.e., each term in the Hamiltonian involves only a finite number of qubits and each qubit participates in a finite number of terms, both independent of the system size.

Next, the Hamiltonian $H_{\lambda = 0}$ is just a special case of $H$ \eqref{H-gTFIM} with a particular choice of couplings $J_{\hat v}$ and $h_v$---specifically, the coupling constants are given by setting $r=3$ in \eqref{J-h-choice}. Therefore, $H_\lambda$ is indeed a deformation of $H$ that preserves all of its symmetries.

Finally, since each term of the Hamiltonian is a product of local $\mathbb Z_2^\nu$-symmetric operators, it is manifestly invariant under the $\mathbb Z_2^\nu$ symmetry. Moreover, it is invariant under the spatial symmetries, as well as the non-invertible duality symmetries if we further set $J=h$, because for any (preserving or reversing) automorphism $\phi$ of $\mathcal G$, we have $d(\phi(v),\phi(\hat v)) = d(v,\hat v)$.

In the rest of this section, we focus on the point $J=h$ and $\lambda = 1$. Then, the Hamiltonian can be written as
\ie\label{graph-deformedH'}
H' = 2J \sum_{v,\hat v\,:\, d(v,\hat v)=3} P_v Q_{\hat v}~,
\fe
where we defined the orthogonal projections $P_v := \frac12(1-X_v)$ and $Q_{\hat v} := \frac12(1-F_{\hat v})$.

\subsection{Exact ground states and frustration-freeness}\label{sec:exact-gnd-st}
Note that $P_v$ commutes with $Q_{\hat v}$ if and only if $v\not\sim \hat v$. Since product of commuting orthogonal projections is positive semi-definite, each term in $H'$ is positive semi-definite, and hence $H'$ is itself positive semi-definite, i.e., $H'\ge 0$. Moreover, it is easy to see that
\ie
H' |\Plus\> = 0~,\qquad H' |a\> = 0~,
\fe
where $|\Plus\>$ and $|a\>$ is defined in \eqref{plus-state} and \eqref{graph-eta-state}, respectively. Therefore, these states are ground states of $H'$; in particular, $H'$ is frustration-free.

In the following sections, we show that there are no other ground states and that these ground states remain gapped in the infinite volume limit, i.e., $|V| \to \infty$ (which implies $|\widehat V| \to \infty$) with $D,\widehat D$ held fixed.

If the original Hamiltonian $H$ has a duality symmetry, then $H'$ also preserves it. In that case, by the action of the non-invertible duality symmetry operators on the above ground states \eqref{gTFIM-D-action}, we conclude that $H'$ provides a concrete lattice realisation of a gapped phase where the duality symmetry is spontaneously broken.

\subsection{Proof of ground state degeneracy}\label{sec:graph-proofofGSD}
Since the Hamiltonian $H'$ is frustration-free, any ground state $|\psi\>$ is annihilated by $H'$. Since $H'$ is a sum of positive semi-definite terms, any state in the kernel of $H'$ is annihilated by each term separately, i.e., for any $v,\hat v$ such that $d(v,\hat v)=3$, we have
\ie\label{graph-local-constraint}
P_v Q_{\hat v} |\psi\> = 0~.
\fe

Let us decompose $|\psi\>$ in the eigenbasis of $X_v$'s:
\ie\label{psi-decomp}
|\psi\> = \sum_{\sigma \in \{+,-\}^{|V|}} \psi_\sigma |\sigma\>~,
\fe
where, as mentioned below \eqref{eta-product-eigenstates}, $|\sigma\>$ is an X-state associated with a configuration of ``signs'' on the vertices $V$, and $\psi_\sigma \in \mathbb C$ is the ``wave function'' in this sign-configuration space. Consider the action of $P_v Q_{\hat v}$ on the X-state $|\sigma\>$:
\ie
&P_v Q_{\hat v} |\sigma_v\sigma_{v_1}\cdots \sigma_{v_k}\> = \begin{cases}0~, & \sigma_v = +~,
\\
\frac12 \left( |\sigma_v\sigma_{v_1}\cdots \sigma_{v_k}\> - |\sigma_v\bar\sigma_{v_1}\cdots \bar\sigma_{v_k}\> \right)~, & \sigma_v = -~.
\end{cases}
\fe
Here, we show only the vertices $v,v_1,\ldots,v_k$ explicitly---where $v_1,\ldots,v_k$ are the vertices adjacent to $\hat v$---and omit the other vertices. We also use $\sigma_w = \pm$ to denote the sign on the vertex $w\in V$ and write $\bar\sigma_w := -\sigma_w$. Then, the constraint \eqref{graph-local-constraint} implies that
\ie\label{psi-constraint}
\psi_{\sigma_v\sigma_{v_1}\cdots \sigma_{v_k}} = \psi_{\sigma_v\bar\sigma_{v_1}\cdots \bar\sigma_{v_k}}~,\quad \text{if} \quad \sigma_v = -~,
\fe
where, once again, we omit the other vertices. In particular, $\psi_{\Plus}$ is unconstrained because $|\Plus\>$ does not have a vertex with $-$.

Let us rephrase the above constraints. Given an X-state $|\sigma\> \ne |\Plus\>$, a vertex $v$ with $\sigma_v = -$, and a vertex $\hat v$ such that $d(v,\hat v)=3$, let $|\sigma'\> = F_{\hat v} |\sigma\>$ be the X-state obtained by flipping the signs on all the vertices adjacent to $\hat v$. Then, the constraint \eqref{graph-local-constraint} implies that $\psi_{\sigma'} = \psi_\sigma$.
In fact, we claim that
\begin{claim}\label{clm:eq-wt}
If two X-states $|\sigma\>, |\sigma'\> \ne |\Plus\>$ carry the same charges under $\mathbb Z_2^\nu$, then $\psi_{\sigma'} = \psi_\sigma$.
\end{claim}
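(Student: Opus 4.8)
The plan is to turn Claim~\ref{clm:eq-wt} into a purely combinatorial statement about connectivity of X-states under a natural class of moves, and then to prove that connectivity by a routing argument on the Tanner graph $\mathcal G$. First I would unpack the local constraints \eqref{graph-local-constraint} in the $X$-basis, continuing what the text begins: for $|\psi\rangle=\sum_\sigma\psi_\sigma|\sigma\rangle$, the relation $P_vQ_{\hat v}|\psi\rangle=0$ with $d(v,\hat v)=3$ is equivalent to $\psi_\sigma=\psi_{F_{\hat v}\sigma}$ for every X-state $|\sigma\rangle$ that has $\sigma_v=-$ at some $v$ with $d(v,\hat v)=3$. Call the passage $|\sigma\rangle\mapsto|F_{\hat v}\sigma\rangle$ subject to this condition a \emph{legal flip}, and let $\mathcal F$ be the graph whose vertices are the X-states and whose edges are legal flips; then $\psi$ is constant on each connected component of $\mathcal F$. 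Crucially, $|\Plus\rangle$ is an isolated vertex of $\mathcal F$, since it has no $-$ to flip, so no legal-flip chain can enter or leave it. Hence Claim~\ref{clm:eq-wt} is equivalent to: within each $\mathbb Z_2^\nu$-charge class, the X-states other than $|\Plus\rangle$ form a single connected component of $\mathcal F$.

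The next ingredient is the dictionary established around \eqref{eta-product-eigenstates}: $|\sigma\rangle$ and $|\sigma'\rangle$ carry the same $\mathbb Z_2^\nu$-charges if and only if $|\sigma'\rangle=\prod_{\hat v\in\widehat U}F_{\hat v}|\sigma\rangle$ for some $\widehat U\subseteq\widehat V$. It therefore suffices to realize a single check $F_{\hat w}$ by legal flips, i.e.\ to show: if $|\tau\rangle\ne|\Plus\rangle$ and $F_{\hat w}|\tau\rangle\ne|\Plus\rangle$, then $|\tau\rangle$ and $F_{\hat w}|\tau\rangle$ are in the same component of $\mathcal F$. Granting this, one links $|\sigma\rangle$ to $|\sigma'\rangle$ by applying the checks of $\widehat U$ one at a time in an order for which no partial product equals $|\Plus\rangle$; such an order exists because $|\sigma\rangle,|\sigma'\rangle\ne|\Plus\rangle$, with the apparent exceptions forcing $|\sigma\rangle=|\sigma'\rangle$ and hence making the claim trivial in those cases.

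For the single-check step, if $|\tau\rangle$ already has a $-$ somewhere on $S(\hat w,3)$ then $F_{\hat w}$ is itself a legal flip. Otherwise I would transport a $-$ sign into $S(\hat w,3)$, apply $F_{\hat w}$, and then reverse the transport. The graph hypotheses make this possible: since $\mathcal G$ is connected, bipartite, of bounded left/right degree, and large enough ($|V|>2D\widehat D$), every vertex of $V$ has a check at distance exactly $3$ and every check a vertex at distance exactly $3$ --- otherwise a radius-$2$ ball would exhaust $\mathcal G$, contradicting the size bound --- so from any X-state $\ne|\Plus\rangle$ a legal flip is always available. Following a shortest path from a $-$ vertex of $|\tau\rangle$ towards $\hat w$, one chooses flips $F_{\hat c}$ that push a $-$ sign one step closer to $S(\hat w,3)$ while keeping the collateral changes on $\mathscr N(\hat c)$ under control; once a $-$ reaches $S(\hat w,3)$ one applies $F_{\hat w}$ and then plays the same moves in reverse, checking that no intermediate X-state is $|\Plus\rangle$.

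The main obstacle is exactly this transport argument. A legal flip $F_{\hat c}$ changes the signs on the \emph{entire} neighbourhood $\mathscr N(\hat c)$, not a single vertex, so one has to track the evolving set of $-$ signs, guarantee at every stage that the flip being used is legal --- i.e.\ that there is a $-$ at distance exactly $3$ from the relevant check --- and that it can be undone later, and ensure the chain avoids $|\Plus\rangle$ throughout. Making this bookkeeping rigorous on an arbitrary bounded-degree bipartite graph is the combinatorial heart of the matter, and is what Appendix~\ref{app:eq-wt} is devoted to; it is also where the connectedness, bipartiteness, bounded-degree, and size assumptions on $\mathcal G$ are all used.
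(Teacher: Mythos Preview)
Your proposal is correct and follows essentially the same approach as the paper: reformulate the constraints as ``legal flips'' (the paper calls them ``special flips''), reduce to simulating a single check $F_{\hat w}$ via a transport-along-a-path argument, and then order the checks in $\widehat U$ so as to avoid passing through $|\Plus\rangle$. The paper organises the single-check step into three explicit scenarios (a $-$ sign already outside $\mathscr N(\hat w)$; all $-$'s confined to $\mathscr N(\hat w)$ but not all of $\mathscr N(\hat w)$; and the obstructed case $|\tau\rangle=F_{\hat w}|\Plus\rangle$), and handles the ordering via a greedy argument that defers any check landing in the obstructed scenario, but these are exactly the details your sketch anticipates.
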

\noindent We defer the proof of this claim to Appendix \ref{app:eq-wt}. What the claim means is that all X-states, except $|\Plus\>$, with equal charges under $\mathbb Z_2^\nu$ have equal weights in the decomposition \eqref{psi-decomp}. Recall that the equal-weight superposition of all X-states with equal charges under $\mathbb Z_2^\nu$ is precisely the state $|\zeta\>$ in \eqref{zeta-eq-wt-sup}. Therefore, we can write
\ie
|\psi\> &= \alpha_{\Plus} |\Plus\> + \sum_{\zeta\in \{0,1\}^\nu} \alpha_\zeta |\zeta\>~,
\fe
for some $\alpha_{\Plus},\alpha_\zeta \in \mathbb C$.

\subsection{Proof of gap}

In this section, we prove that $H'$ \eqref{graph-deformedH'} is gapped in the thermodynamic (infinite-volume) limit, i.e., $|V| \to \infty$ (and hence, $|\widehat V| \to \infty$) while $D,\widehat D$ are held fixed.

\subsubsection{Overview of the martingale method}\label{sec:mart-meth}
In general, proving that a local Hamiltonian on a tensor product Hilbert space is gapped in the thermodynamic limit is a very hard problem, and in fact, undecidable \cite{Cubitt:2015lta}. Fortunately, when the Hamiltonian is frustration-free, there are a few methods to prove the gap . The one we use here is the \emph{martingale method} \cite{Fannes:1992}. The philosophy behind this method is summarised in the following steps:
\begin{itemize}
\item \emph{Knabe's argument \cite{Knabe:1988}:} Say $H$ is a positive semi-definite operator with a zero eigenvalue and let $\gamma(H)>0$ be its smallest nonzero eigenvalue, a.k.a., the gap of $H$. Then, for any $\epsilon>0$, $H^2 \ge \epsilon H \implies \gamma(H) \ge \epsilon$. Therefore, in order to prove that $H$ is gapped in the thermodynamic limit, it suffices to find an $\epsilon>0$ independent of the system size such that $H^2 \ge \epsilon H$ for all sufficiently large systems.

\item \emph{Coarse-graining (blocking):} Let us schematically write the Hamiltonian as $H = \sum_i H_i$, where $i$ labels the local interaction terms. We can divide the full system into many finite-sised \emph{blocks}, each of which contains several interaction terms. The size of each block can be as large as needed for the problem at hand, but should be {\color{blue}finite and independent of the system size}. Moreover, the blocks need not be disjoint, so an interaction term may be contained in multiple blocks. If each interaction is contained in {\color{blue}at most $m$ blocks}, then the Hamiltonian satisfies $H \ge \frac1m \sum_I H_I$, where $I$ labels the blocks and $H_I = \sum_{i\in I} H_i$. So, proving the gap for $\sum_I H_I$ proves the gap for $H$. Note that since $H$ is positive semi-definite and frustration-free, so is each $H_I$, and also $\sum_I H_I$.

\item \emph{Reduce to projections:} Let $\Pi_I$ be the orthogonal projection onto the ground state space of $H_I$, so that $H_I \ge \gamma(H_I) \Pi_I^\perp$. Now, {\color{blue}if there is an $\alpha>0$, independent of the system size, such that $\gamma(H_I) \ge \alpha$ for all $I$},\footnote{This follows from the assumptions that (i) the maximum of the block sizes is independent of the system size and (ii) the block Hilbert space $\mathcal H_I = \bigotimes_{v\in I} \mathcal H_v$ is finite dimensional.} then $\sum_I H_I \ge \alpha \sum_I \Pi_I^\perp$. So it suffices to prove the gap for $\bar H := \sum_I \Pi_I^\perp$. By Knabe's argument, we have to show that there is an $\epsilon >0$ independent of the system size such that $\bar H^2 \ge \epsilon \bar H$ for all sufficiently large systems. We have,
\ie
\bar H^2 &= \sum_I (\Pi_I^\perp)^2 + \frac12\sum_{I\ne J}' \{\Pi_I^\perp,\Pi_J^\perp\} + \frac12\sum_{I\ne J}'' \{\Pi_I^\perp,\Pi_J^\perp\}
\\
&\ge \bar H + \frac12\sum_{I\ne J}' \{\Pi_I^\perp,\Pi_J^\perp\}~,
\fe
where the sum with a single prime involves projections $\Pi_I^\perp$ and $\Pi_J^\perp$ that do not commute, whereas the sum with two primes involves those that commute. The inequality follows from the fact that product of commuting projections is positive semi-definite.

\item \emph{Martingale condition:} There are two reasons for reducing to projections in the last step: (i) the first term in $\bar H^2$ reduces to $\bar H$ itself, and (ii) we can exploit the following identities satisfied by any two projections $\Pi_{1,2}$:
\ie
&\{ \Pi_1, \Pi_2\} \ge - \Vert  \Pi_1  \Pi_2 -  \Pi_1 \wedge  \Pi_2 \Vert~( \Pi_1 +  \Pi_2)~,
\\
&\Vert  \Pi_1  \Pi_2 -  \Pi_1 \wedge  \Pi_2 \Vert = \Vert  \Pi_1^\perp  \Pi_2^\perp -  \Pi_1^\perp \wedge  \Pi_2^\perp \Vert~,
\fe
where $\Pi_1 \wedge \Pi_2$ is the projection onto $\im(\Pi_1) \cap \im(\Pi_2)$. Using these identities for the sum with single prime, we get
\ie
\bar H^2 \ge \sum_I \left(1-\sum_{J\ne I}'\delta_{I,J}\right) \Pi_I^\perp~,
\fe
where we defined
\ie
\delta_{I,J} := \Vert  \Pi_I  \Pi_J -  \Pi_I \wedge  \Pi_J \Vert~,
\fe
and refer to it as the \emph{martingale function}. Now, {\color{blue}if there is a $0\le\beta<1$, independent of the system size, such that $\sum_{J\ne I}' \delta_{I,J} \le \beta$ for all $I$}, which we refer to as the \emph{martingale condition}, then $\bar H^2 \ge (1-\beta) \bar H$, thereby proving the gap.
\end{itemize}
Some comments are in order:
\begin{enumerate}
\item One might wonder what the purpose of coarse-graining was. This can be appreciated in the last step. In any physical system with local interactions, it is typically the case that $\delta_{I,J}$ gets smaller as the size of the intersection $I\cap J$ increases, i.e., when $I$ and $J$ share more and more qubits. Indeed, for any gapped system on a Euclidean lattice, it is known that $\delta_{I,J} \sim e^{-c |I\cap J|^r}$ where $c,r$ are some positive constants \cite{Kastoryano_2018}. Conversely, if $\delta_{I,J}$ has this behaviour, then one can satisfy the martingale condition $\sum_{J\ne I}' \delta_{I,J} \le \beta < 1$ for all $I$ by ensuring that each block intersects only a finite number (independent of the system size) of other blocks, and each such intersection is large. A larger intersection means larger blocks, and hence the need for coarse-graining.

\item While making the intersections arbitrarily large might seem like a good idea, the flip side is that larger blocks lead to smaller $\alpha$. So we cannot make the blocks arbitrarily large. In particular, the maximum size of a block must be independent of the system size---else, it is possible that $\alpha \to 0$ in the thermodynamic limit making the inequality $\sum_I H_I \ge \alpha \sum_I \Pi_I^\perp$ useless.

\item In a similar vein, the blocks cannot be too dense/crowded, i.e., too many blocks cannot overlap simultaneously, because that would increase the value of $m$ making the inequality $H \ge \frac1m \sum_I H_I$ useless.\footnote{This is usually not an issue on Euclidean lattices where each block intersects a finite number (that depends mostly only on the dimension and coordination number of the lattice) of adjacent blocks. However, on hyperbolic lattices, this can be a major issue.} In particular, we need $m$ to be independent of the system size as well. But if the blocks are too far apart, then the intersection between them is too small, making it harder to satisfy the martingale condition.
\end{enumerate}
One can see that there is a delicate balance between various aspects of the coarse-graining procedure, such as the size of a block, the number of blocks adjacent to it, and the size of intersection of two adjacent blocks. This balance can be struck on Euclidean lattices; see \cite{Nachtergaele:1996,Hastings:2005pr,Bravyi:2015ksj,Gosset:2016,Kastoryano_2018,Lemm:2019pxa,Abdul-Rahman:2019wna,Anshu:2019mhi,Lemm:2019jxn,Pomata:2019uap,Andrei:2022yym,Young_2024,Lemm:2024yre,Rai:2024rge} for some examples in diverse dimensions.

However, our model is defined on an arbitrary bipartite graph (with fixed $D$ and $\widehat D$). While the martingale method can be extended to this scenario in principle, to the best of our knowledge, it has never actually been attempted before.\footnote{Commuting projector models are an exception, but they are obviously gapped. Our statement is about frustration-free Hamiltonians that are not commuting projector models.}

\subsubsection{Coarse-graining a bipartite graph}\label{sec:coarse-grain-bi-graph}
Our first challenge then is to coarse-grain a bipartite graph $\mathcal G$. As explained above, there is a delicate balance that needs to be struck, and \emph{a priori}, it is not clear that this should be possible for arbitrary bipartite graphs (with fixed $D$ and $\widehat D$). One of our main technical achievements is to show that it is, in fact, possible. We defer the details to Appendix \ref{app:coarse-grain} and summarise the upshot below. We note that this summary is tailored for the model at hand, whereas the appendix considers a more general setting.

Given a positive interger $n> 2D \widehat D$ and a bipartite graph $\mathcal G$ with $|V|\ge n$, $\mldeg(\mathcal G) \le \widehat D$, and $\mrdeg(\mathcal G) \le D$, there is a collection $\{ \mathcal G_i \}$ of subgraphs of $\mathcal G$ that satisfy the following conditions:
\begin{enumerate}
\item each $\mathcal G_i$ is connected, induced, and left-closed,

\item for any $v\in V$ and $\hat v \in \widehat V$ such that $d(v,\hat v)=3$, every shortest path between $v$ and $\hat v$ is contained in at least one $\mathcal G_i$ and at most $m = m_{D,\widehat D}(n)$ distinct $\mathcal G_i$'s,

\item for each $i$, $n\le |V_i| \le N = N_{D,\widehat D}(n)$, and

\item for each $i\ne j$, if $\mathcal G_i$ and $\mathcal G_j$ intersect nontrivially, then $|V_i \cap V_j| \ge n$ and $\widehat V_i \cap \widehat V_j \ne \emptyset$.
\end{enumerate}
We refer to such a collection as a \emph{good cover} of $\mathcal G$ with parameters $m_{D,\widehat D}(n)$ and $N_{D,\widehat D}(n)$. The explicit expressions for the parameters can be inferred from the discussion in Appendix \ref{app:coarse-grain}---what is important is that they are independent of the system size $|V|$ and grow only polynomially in $n$ for fixed $D$ and $\widehat D$.

The independent parameter $n$ controls the size of a block, the number of blocks adjacent to it, and the size of intersection of two adjacent blocks. Making it larger, but still keeping it independent of the system size, is how we achieve the balance needed to satisfy the martingale condition.

\subsubsection{Proof of gap using the martingale method}\label{sec:proofofgap-deformedmodel}
We can now apply the above steps of the martingale method to prove the gap of $H'$ \eqref{graph-deformedH'}. Define the \emph{local Hamiltonian} on $\mathcal G_i$ as
\ie
H'_i = 2J\sum_{v,\hat v\in \mathcal G_i\,:\,d(v,\hat v) = 3} P_v Q_{\hat v}~,
\fe
Since $\mathcal G_i$ is connected and $|V_i|\ge n \ge 2D \widehat D$ by properties 1 and 3 of the good cover, $H'_i$ has precisely $2^{\nu_i}+1$ zero-energy states, where $\nu_i := \dim \ker \mathfrak h_i$ and $\mathfrak h_i$ is the biadjacency matrix of $\mathcal G_i$.\footnote{The ground states of $H'_i$ can be computed explicitly as in Section \ref{sec:graph-proofofGSD}.} Due to properties 1 and 2 of the good cover, we have
\ie
H' \ge \frac1{m_{D,\widehat D}(n)} \sum_i H'_i~.
\fe
Let $\Pi_i$ be the projection onto the nontrivial kernel of $H'_i$ and let $\Pi_i^\perp := 1-\Pi_i$. Since $|V_i|\le N_{D,\widehat D}(n)$ by property 3 of the good cover, there is an $\alpha_{D,\widehat D}(n) > 0$ such that $\gamma(H'_i) \ge \alpha_{D,\widehat D}(n)$ for all $i$. Therefore, we have $H'_i \ge \alpha_{D,\widehat D}(n) \Pi_i^\perp$ and hence, 
\ie
H' \ge \frac{\alpha_{D,\widehat D}(n)}{m_{D,\widehat D}(n)} \bar H~,\qquad \text{where} \qquad \bar H := \sum_i \Pi_i^\perp~.
\fe
It suffices to show that $\bar H$ is gapped in the infinite volume limit.

We now use Knabe's argument. Consider the square of the Hamiltonian $\bar H$,
\ie
\bar H^2 = \sum_i \Pi_i^\perp + \frac12 \sum_i \sum_{j\ne i\,:\,V_i \cap V_j \ne \emptyset} \{\Pi_i^\perp, \Pi_j^\perp\} + \frac12 \sum_i \sum_{j\ne i\,:\,V_i \cap V_j = \emptyset} \{\Pi_i^\perp, \Pi_j^\perp\}~.
\fe
The third term on the right hand side is positive semi-definite, but the second term may not be because $\Pi_i^\perp$ commutes with $\Pi_j^\perp$ if and only if $V_i \cap V_j = \emptyset$. Instead, for the second term, we have
\ie
\{ \Pi_i^\perp, \Pi_j^\perp\} \ge - \delta_{i,j} ( \Pi_i^\perp +  \Pi_j^\perp)~,
\fe
where we defined $\delta_{i,j} := \Vert  \Pi_i  \Pi_j -  \Pi_i \wedge  \Pi_j \Vert$. It follows that
\ie
\frac12 \sum_i \sum_{j\ne i\,:\,V_i \cap V_j \ne \emptyset} \{\Pi_i^\perp, \Pi_j^\perp\} \ge -\sum_i \Pi_i^\perp \sum_{j\ne i\,:\,V_i \cap V_j \ne \emptyset} \delta_{i,j}~,
\fe

In Appendix \ref{sec:graph-delta}, we derive an upper bound on $\delta_{i,j}$ in terms of $
|V_i\cap V_j|$,
\ie\label{up-bnd-delta}
\delta_{i,j} \le c \times 2^{-|V_i \cap V_j|/4\max(D,\widehat D)}~,
\fe
where $c$ is a positive constant, whose precise value can be found in \eqref{graph-delta-upper-bound}. This bound holds whenever $V_i \cap V_j$ and $\widehat V_i \cap \widehat V_j$ are both non-empty, which is indeed the case by property 4 of the good cover. Using properties 2, 3, and 4 of the good cover along with this inequality, we have
\ie
\sum_{j\ne i\,:\,V_i \cap V_j \ne \emptyset} \delta_{i,j} &\le c \sum_{j\ne i\,:\,V_i \cap V_j \ne \emptyset} 2^{-|V_i \cap V_j|/4\max(D,\widehat D)}
\\
&\le c \times N_{D,\widehat D}(n) \times D^2\widehat D^2(1+\widehat D) \times m_{D,\widehat D}(n) \times 2^{-n/4\max(D,\widehat D)}
\\
&=: \beta_{D,\widehat D}(n)~.
\fe
The prefactor in the second line comes from the following counting argument: each $\mathcal G_i$ contains at most $N_{D,\widehat D}(n)$ vertices of $V$, each of which participates in at most $D^2\widehat D^2(1+\widehat D)$ interaction terms (see Example 2 in Appendix \ref{app:int-hypergraph} for this counting), each of which is contained in at most $m_{D,\widehat D}(n)$ subgraphs $\mathcal G_j$'s.

Since $m_{D,\widehat D}(n)$ and $N_{D,\widehat D}(n)$ grow only polynomially in $n$ for fixed $\widehat D,D$, we can choose a large enough $n$, yet independent of $|V|$, such that $\beta_{D,\widehat D}(n)$ is strictly less than $1$, say at most $1/2$. This ensures that the martingale condition is satisfied, and it follows that
\ie
\bar H^2 \ge \frac12 \sum_i \Pi_i^\perp = \frac12 \bar H \implies \gamma(\bar H) \ge \frac12~.
\fe
In other words, $\bar H$ is gapped in the infinite volume limit.

\subsection{Examples}

\begin{figure}[t]
\centering
\hfill \raisebox{-0.5\height}{\includegraphics[scale=0.25]{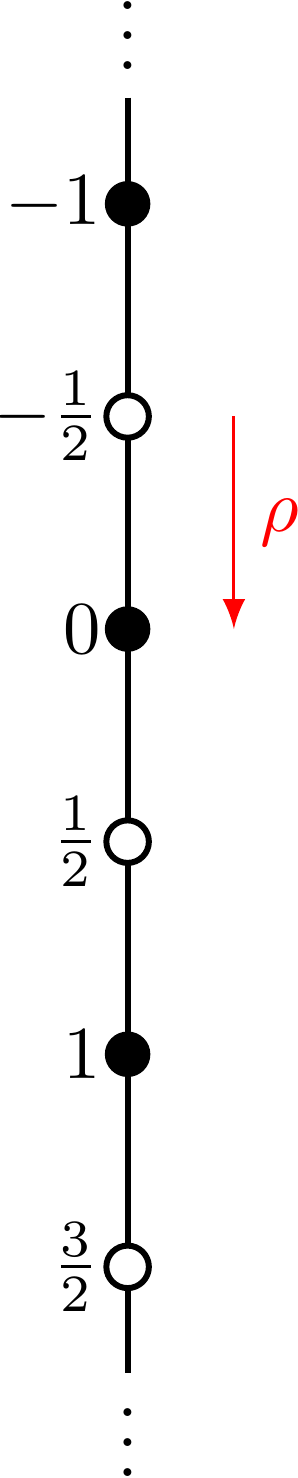}} \hfill \raisebox{-0.5\height}{\includegraphics[scale=0.2]{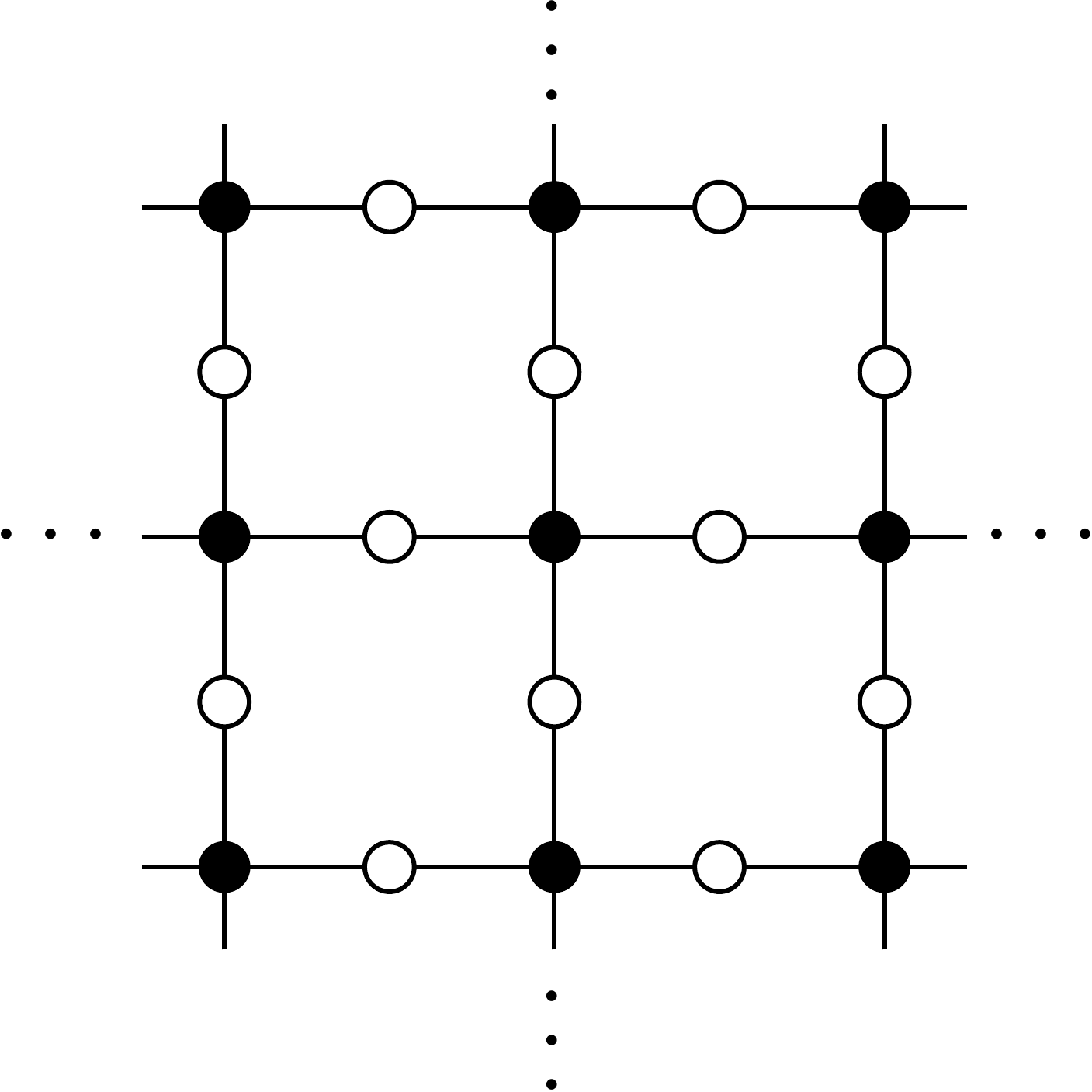}} \hfill \raisebox{-0.5\height}{\includegraphics[scale=0.25]{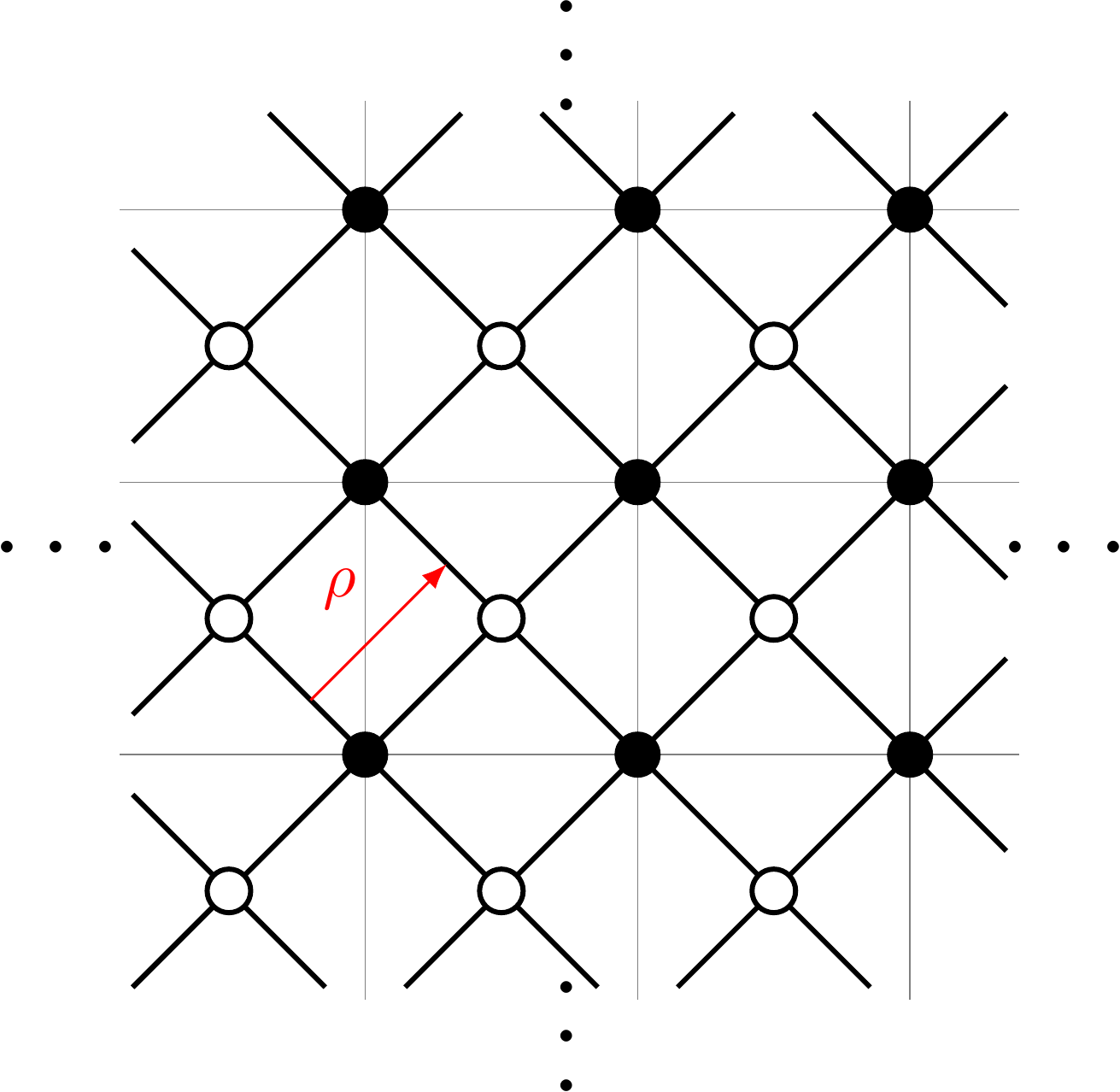}} \hfill \hfill
\\~\\
\hfill (a)~~~~~~~~~~~~~~~~~~\hfill (b) ~~~~~~~~~~~~~~~~~~~~~~~~~~~~~~~~\hfill (c)~~~~~ \hfill \hfill
\caption{The bipartite graphs associated with (a) the 1+1d Ising model, (b) the 2+1d Ising model, and (c) the 2+1d plaquette Ising model. In all cases, the solid (black) vertices represent the physical qubits in $V$ and the hollow (white) vertices represent the interactions in $\widehat V$. In (a) and (c), the red arrow $\rho$ represents a reversing automorphism, given by a ``half-translation'', that generates the non-invertible duality symmetry. The thin grey lines in (c) represent the underlying square lattice. Since the 2+1d Ising model and the 2+1d Toric Code are related via gauging, they have the same bipartite graph, except for exchanging the solid and hollow vertices (and a ``half-translation'' in the (1,1) direction).}\label{fig:gTFIM}
\end{figure}

\subsubsection{1+1d Ising model}
Our first example is a prototypical example with a duality symmetry on the lattice: the 1+1d transverse-field Ising model described by the Hamiltonian
\ie
H_\text{1d-Is} &= - J \sum_i Z_i Z_{i+1} - h \sum_i X_i~,
\fe
on a periodic chain with $L$ sites with a qubit on each site. The associated bipartite graph is simply a cycle on $2L$ vertices, with half of them in $V$ being the sites labelled by $i$ and the other half in $\widehat V$ being the links labelled by $i+\frac12$, as illustrated in Fig. \ref{fig:gTFIM}(a).

This model has a $\mathbb Z_2$ 0-form symmetry generated by $\eta = \prod_i X_i$. It is also invariant under lattice translations and reflections. In addition, there is an obvious reversing automorphism given by the ``half-translation'' of the cycle. It corresponds to the well known Kramers-Wannier duality symmetry that exchanges the Ising and transverse-field terms \cite{Seiberg:2023cdc,Seiberg:2024gek}.

When $h < J$, the $\mathbb Z_2$ 0-form symmetry is spontaneously broken leading to a ferromagnetic phase. In particular, at $h=0$, the two ground states are $|\Zero\> = |0\cdots0\>$ and $|\One\> := |1\cdots1\>$. On the other hand, when $h>J$, the model is in a paramagnetic ($\mathbb Z_2$ symmetric) phase. In particular, at $J=0$, the ground state is $|\Plus\> = |{+}\cdots{+}\>$. The two gapped phases are separated by a second order transition described by the 2d Ising CFT. The transition occurs precisely at $h=J$ because of the Kramers-Wannier duality.

The deformed Hamiltonian is 
\ie
H_\text{1d-Is}^\lambda = \sum_i \left[ - J Z_i Z_{i+1} - h X_i + \frac{J\lambda}2 (X_i Z_{i+1} Z_{i+2} + Z_i Z_{i+1} X_{i+2})\right]~,
\fe
which is precisely the one studied by O'Brien and Fendley in \cite{OBrien:2017wmx}.\footnote{See also \cite{Sannomiya:2017foz} for a similar deformation.} In particular, at the frustration-free point ($J=h,\lambda = 1$) along the deformation, there are three exactly degenerate ground states, $|\Plus\>$, $|\Zero\>$, and $|\One\>$, which remain gapped in the thermodynamic limit \cite{OBrien:2017wmx,Gorantla:2024ptu}. The Kramers-Wannier duality symmetry exchanges $|\Plus\>$ and $|\Zero\>+|\One\>$, so this point along the deformation realises the coexistence of paramagnetic and ferromagnetic phases, where the duality symmetry is spontaneously broken.

\subsubsection{2+1d Ising model}
Our second example is the 2+1d transverse-field Ising model, described by the Hamiltonian
\ie
H_\text{2d-Is} = -J \sum_\ell \left( \begin{tikzpicture}[baseline=0]
\draw (0,0) node[fill=white,inner sep = 2pt]{\footnotesize$Z$} -- (1,0) node[fill=white,inner sep = 2pt]{\footnotesize$Z$};
\end{tikzpicture} + \begin{tikzpicture}[baseline=0]
\draw (0,0) node[fill=white,inner sep = 2pt]{\footnotesize$Z$} -- (0,1) node[fill=white,inner sep = 2pt]{\footnotesize$Z$};
\end{tikzpicture} \right) - h \sum_s X~,
\fe
on a periodic square lattice with $L_x \times L_y$ sites with a qubit on each site. The associated bipartite graph is the Lieb lattice shown in Fig. \ref{fig:gTFIM}(b).

This model has a $\mathbb Z_2$ 0-form symmetry generated by $\eta = \prod_s X_s$. It is also invaraint under lattice translations, rotations, and reflections. However, unlike its 1+1d cousin, there is no reversing automorphism in this case because there are twice as many links in $\widehat V$ as there are sites in $V$, so there is no non-invertible duality symmetry in this model.

Like in the 1+1d case, there are two gapped phases: a ferromagnetic phase with two ground states for $h \ll J$, and a paramagnetic phase with one ground state for $h\gg J$. However, since there is no duality in this case, the transition between them is hard to analyse. Numerics suggests that there is a second-order transition described by the 3d Ising CFT is at $h/J \approx 3.044$ \cite{du_Croo_de_Jongh_1998,Blote:2002ieo}.

Nevertheless, consider deforming $H_\text{2d-Is}$ by the following terms: 
\ie\label{2d-Is-deform}
\begin{tikzpicture}[baseline=0]
\draw (0,0) node[fill=white,inner sep = 2pt]{\footnotesize$Z$} -- (1,0) node[fill=white,inner sep = 2pt]{\footnotesize$Z$} -- (2,0) node[fill=white,inner sep = 2pt]{\footnotesize$X$};
\end{tikzpicture} + \begin{tikzpicture}[baseline=0]
\draw (0,0) node[fill=white,inner sep = 2pt]{\footnotesize$Z$} -- (1,0) node[fill=white,inner sep = 2pt]{\footnotesize$Z$} -- (1,1) node[fill=white,inner sep = 2pt]{\footnotesize$X$};
\end{tikzpicture} + \cdots~,
\fe
where ``$\cdots$'' includes all terms obtained by translating, rotating, and reflecting the shown terms. Overall, for each Ising term, there are two deformation terms of the first type and four of the second. They manifestly preserve the $\mathbb Z_2$ internal symmetry and all the spatial symmetries, i.e., lattice translations, rotations, and reflections. Moreover, there is a frustration-free point along this deformation, with $h=2J$,\footnote{The reason for the factor of $2$ is that there are twice as many Ising terms as the transverse-field terms.} where there are three exactly degenerate ground states, $|\Plus\>$, $|\Zero\>$, and $|\One\>$, which remain gapped in the thermodynamic limit. Therefore, this point realises a coexistence of paramagnetic and ferromagnetic phases of the 2+1d transverse-field Ising model. However, there is no duality symmetry that exchanges these two phases.

\subsubsection{2+1d plaquette Ising model}
Another example, now with fractonic physics, is the 2+1d transverse-field plaquette Ising model, also known as the Xu-Moore model \cite{Xu_2005}, described by the Hamiltonian
\ie
H_\text{2d-plaq-Is} = -J \sum_p \begin{tikzpicture}[baseline=0]
\draw (0,0) node[fill=white,inner sep = 2pt]{\footnotesize$Z$} -- (1,0) node[fill=white,inner sep = 2pt]{\footnotesize$Z$} -- (1,1) node[fill=white,inner sep = 2pt]{\footnotesize$Z$} -- (0,1) node[fill=white,inner sep = 2pt]{\footnotesize$Z$} -- cycle;
\end{tikzpicture} - h \sum_s X~,
\fe
on a periodic square lattice with $L_x \times L_y$ sites with a qubit on each site. The associated bipartite graph is a $45^\circ$-tilted square lattice shown in Fig. \ref{fig:gTFIM}(c).

This model has a $\mathbb Z_2$ subsystem symmetry generated by the rigid line operators $\eta^y_i = \prod_j X_{i,j}$ and $\eta^x_j = \prod_i X_{i,j}$, of which there are $L_x + L_y - 1$ independent generators. It is also invariant under lattice translations, rotations, and reflections. Like in the 1+1d Ising model example, there is a reversing automorphism given by the ``half-translation'' in the $(1,1)$ direction. The corresponding non-invertible duality symmetry has been studied before \cite{Cao:2023doz, Cao:2023rrb,ParayilMana:2024txy,Spieler:2024fby}.

Once again, there are two gapped phases: a ferromagnetic phase with $2^{L_x + L_y -1}$ ground states for $h\ll J$ and a paramagnetic phase with one ground state for $h\gg J$. Due to the non-invertible duality symmetry, there must be at least one transition at $h=J$. Numerics suggests that, in fact, there is a first-order transition at that point \cite{Or_s_2009}.

The deformation terms in this case are
\ie
\begin{tikzpicture}[baseline=0]
\draw (1,1) -- (2,1) node[fill=white,inner sep = 2pt]{\footnotesize$X$};
\draw (0,0) node[fill=white,inner sep = 2pt]{\footnotesize$Z$} -- (1,0) node[fill=white,inner sep = 2pt]{\footnotesize$Z$} -- (1,1) node[fill=white,inner sep = 2pt]{\footnotesize$Z$} -- (0,1) node[fill=white,inner sep = 2pt]{\footnotesize$Z$} -- cycle;
\end{tikzpicture} + \begin{tikzpicture}[baseline=0]
\draw (1,1) -| (2,2) node[fill=white,inner sep = 2pt]{\footnotesize$X$} -| cycle;
\draw (0,0) node[fill=white,inner sep = 2pt]{\footnotesize$Z$} -- (1,0) node[fill=white,inner sep = 2pt]{\footnotesize$Z$} -- (1,1) node[fill=white,inner sep = 2pt]{\footnotesize$Z$} -- (0,1) node[fill=white,inner sep = 2pt]{\footnotesize$Z$} -- cycle;
\end{tikzpicture} + \cdots~,
\fe
where ``$\cdots$'' includes all terms obtained by translating, rotating, and reflecting the shown terms. The two types of terms are mapped to terms of the same type under the duality symmetry. Overall, for each plaquette-Ising term, there are eight deformations terms of the first type and four of the second. At the frustration-free point, with $h=J$, there are $1+2^{L_x + L_y -1}$ exactly degenerate ground states---one of which is the trivial product state $|\Plus\>$ (paramagnetic phase), and $2^{L_x + L_y -1}$ are the ground states at $h=0$ (ferromagnetic phase) of the 2+1d plaquette Ising model---and they remain gapped in the thermodynamic limit. As expected, the non-invertible duality symmetry exchanges these two phases, so it is spontaneously broken.

\section{Deformed quantum CSS code in transverse field}\label{sec:css}

So far, our discussion was about models based on classical LDPC codes, which realise coexistence of trivial and symmetry breaking phases. In this section, we extend this discussion to a much larger class of models based on quantum CSS codes. They realise coexistence of trivial and topologically ordered phases, including fracton phases. In fact, they realise a gapped phase where non-invertible (Wegner-like) duality symmetries are spontaneously broken. The details are identical to the ones above, so we will be brief here.

Consider another bipartite graph $\widetilde{\mathcal G}$ with vertex-set $\widetilde V \sqcup V$, where $V$ is the same as before, and biadjacency matrix $\tilde{\mathfrak h}$. As before, we assume that $\widetilde{\mathcal G}$ is connected, and both left- and right-degrees are bounded. In other words, it corresponds to another classical LDPC code. In addition, we assume that $\tilde{\mathfrak h}$ satisfies
\ie\label{css-matrix-cond}
\mathfrak h \tilde{\mathfrak h}^\intercal = 0~.
\fe
That is, the columns of $\tilde{\mathfrak h}^\intercal$ are in the kernel of $\mathfrak h$. A pair of classical codes, $\mathcal G$ and $\widetilde{\mathcal G}$, related in this way form a quantum CSS code \cite{Breuckmann:2021yvk}.

Let us incorporate this new information into the quantum Hamiltonian. For each $\tilde v \in \widetilde V$, define the operators
\ie
G_{\tilde v} := \prod_{v\in V} X_v^{\tilde{\mathfrak h}_{\tilde v,v}} = \prod_{v\in V\,:\,v\sim \tilde v} X_v~. 
\fe
The condition \eqref{css-matrix-cond} implies that the $G_{\tilde v}$'s commute with every term of the Hamiltonian \eqref{H-gTFIM}, i.e., they are $\mathbb Z_2^\nu$ symmetry operators. We add these terms to the Hamiltonian \eqref{H-gTFIM} to get a new Hamiltonian
\ie
H_\text{new} &= H - \sum_{\tilde v \in \widetilde V} g_{\tilde v} G_{\tilde v}
\\
&= - \sum_{\hat v\in\widehat V} J_{\hat v} F_{\hat v} - \sum_{v\in V} h_v X_v - \sum_{\tilde v \in \widetilde V} g_{\tilde v} G_{\tilde v}~,
\fe
where $g_{\tilde v}$ is a new coupling constant. The last term penalises the states that are not invariant under $G_{\tilde v}$'s. This is reminiscent of imposing the Gauss law energetically in lattice $\mathbb Z_2$ gauge theory. Indeed, our notation is here suggestive: the $F_{\hat v}$ term is the magnetic ``f''lux, the single $X_v$ term is the electric field, and the $G_{\tilde v}$ term is the ``G''auss law.

The internal $\mathbb Z_2^\nu$ symmetries and the non-invertible duality symmetries of $H$ continue to be symmetries of $H_\text{new}$. In constrast, the spatial symmetries may not be preserved due to the last term, unless the graph $\widetilde{\mathcal G}$ is chosen appropriately.\footnote{For instance, one can take $\widetilde{\mathcal G}$ to be the bipartite graph associated with all $\mathbb Z_2^\nu$ symmetry operators $\eta_a$ that are supported on at most $\widetilde D$ qubits, where $\widetilde D$ is a positive integer. Any spatial symmetry permutes the $\mathbb Z_2^\nu$ symmetry operators according to the top-left equation of \eqref{gTFIM-fusion-alg}, so the above collection of operators with finite support is preserved under spatial symmetries. One can then take all $g_{\tilde v}$ to be equal so that $H_\text{new}$ is invariant under the spatial symmetries of $H$.}

When $J_{\hat v} = 0$, $h_v >0$, and $g_{\tilde v} \ge 0$, the only ground state of $H_\text{new}$ is $|\Plus\>$. Borrowing terminology from lattice $\mathbb Z_2$ gauge theory, we refer to this phase as the confining or trivial or $\mathbb Z_2^\nu$-preserving phase. On the other hand, when $h_v = 0$ and $J_{\hat v},g_{\tilde v} >0$, the Hamiltonian is a commuting projector model based on the quantum CSS code. Its ground states are the simultaneous eigenstates of $F_{\hat v}$'s and $G_{\tilde v}$'s with eigenvalue $1$, which form the \emph{code space}. More concretely, they are
\ie
\prod_{\tilde v \in \widetilde V} \left( \frac{1+G_{\tilde v}}2 \right) |a\>~,
\fe
where $|a\>$ is the ground state of $H$ at $h_v=0$ defined in \eqref{graph-eta-state}. Note that, because of the projection, not all $a \in \ker \mathfrak h$ give distinct states. In particular, if $\eta_a$ and $\eta_b$ differ by product of some $G_{\tilde v}$'s, then $|a\>$ and $|b\>$ are projected to the same ground state of $H_\text{new}$. The other ground states of $H$ that do not preserve $G_{\tilde v}$'s are pushed to energies proportional $g_{\tilde v}$'s. We refer to this phase as the Higgs or topologically-ordered or $\mathbb Z_2^\nu$-broken phase.

Like in Section \ref{sec:graph-deformed}, one can consider a deformation of $H_\text{new}$ that preserves all of its symmetries. We are particularly interested in the analog of \eqref{graph-deformedH'}:
\ie\label{css-deformedH'}
H'_\text{new} = H' + \sum_{\tilde v \in \widetilde V} g_{\tilde v} (1-G_{\tilde v}) = J \sum_{v,\hat v\,:\, d(v,\hat v)=3} P_v Q_{\hat v} + \sum_{\tilde v \in \widetilde V} g_{\tilde v} (1-G_{\tilde v})~,
\fe
where we made the trivial change $G_{\tilde v} \to -(1-G_{\tilde v})$ to ensure that $H'_\text{new}$ is positive semi-definite, i.e., $H'_\text{new} \ge 0$. It is easy to check that
\ie
|\Plus\>~,\quad \text{and} \quad \prod_{\tilde v \in \widetilde V} \left( \frac{1+G_{\tilde v}}2 \right) |a\>~,\quad a\in \ker \mathfrak h~,
\fe
are zero energy ground states of $H'_\text{new}$, which makes $H'_\text{new}$ frustration-free. These are precisely the ground states of $H'$ projected onto the subspace preserving $G_{\tilde v}$'s. In fact, the proof that there are no other ground states is the same as the one for $H'$. Furthermore, as long as $g_{\tilde v}$'s are larger than the gap of $H'$, the proof of gap of $H'_\text{new}$ in the thermodynamic limit also goes through unchanged.

If the original Hamiltonian $H_\text{new}$ has a duality symmetry, then the deformed Hamiltonian $H'_\text{new}$ preserves it too. The action of the duality operators on the above ground states can be inferred from \eqref{gTFIM-D-action}. In particular, the duality symmetry exchanges the trivial and topologically-ordered phases. Therefore, the Hamiltonian $H'_\text{new}$ provides a concrete realisation of a gapped phase where the duality symmetry is spontaneously broken.

\subsection{Examples}

\subsubsection{2+1d Toric Code}
Our first example is a prototypical example of topological order: the 2+1d Toric Code (in the presence of transverse field for $h>0$) described by the Hamiltonian
\ie
H_\text{2d-TC} = -J \sum_p \begin{tikzpicture}[baseline=0]
\draw (0,0) -- node[fill=white,inner sep = 2pt]{\footnotesize$Z$} (1,0) -- node[fill=white,inner sep = 2pt]{\footnotesize$Z$} (1,1) -- node[fill=white,inner sep = 2pt]{\footnotesize$Z$} (0,1) -- node[fill=white,inner sep = 2pt]{\footnotesize$Z$} cycle;
\end{tikzpicture} - h \sum_\ell \left( \begin{tikzpicture}[baseline=0]
\draw (0,0) -- node[fill=white,inner sep = 2pt]{\footnotesize$X$} (1,0);
\end{tikzpicture} + \begin{tikzpicture}[baseline=0]
\draw (0,0) -- node[fill=white,inner sep = 2pt]{\footnotesize$X$} (0,1);
\end{tikzpicture} \right) - g \sum_s \begin{tikzpicture}[baseline=0]
\draw (-1,0) -- node[fill=white,inner sep = 2pt]{\footnotesize$X$} (0,0) -- node[fill=white,inner sep = 2pt]{\footnotesize$X$} (1,0) (0,-1) -- node[fill=white,inner sep = 2pt]{\footnotesize$X$} (0,0) -- node[fill=white,inner sep = 2pt]{\footnotesize$X$} (0,1);
\end{tikzpicture}~,
\fe
on a periodic square lattice with $L_x \times L_y$ sites with a qubit on each link. Equivalently, this is the Hamiltonian of the 2+1d lattice $\mathbb Z_2$ gauge theory, where the first term is the magnetic flux, the second is the electric field, and the third is the Gauss law. The associated bipartite graph is the same as the one for the 2+1d Ising model, i.e., the Lieb lattice shown in Fig. \ref{fig:gTFIM}(b), with $V$ and $\widehat V$ exchanged. This is a consequence of the fact that the two models are related via gauging.

This model has a $\mathbb Z_2$ 1-form symmetry generated by the line operators $\eta(\hat \gamma) = \prod_{\ell \in \hat \gamma} X_\ell$, where $\hat \gamma$ is a dual curve on the dual lattice, and $\ell \in \hat \gamma$ means the link $\ell$ pierces $\hat \gamma$. This includes the right-most terms in the Hamiltonian, i.e., the vertex (Gauss law) terms. It is also invariant under lattice translations, rotations, and reflections. However, like the 2+1d Ising model, this model does not have a non-invertible duality symmetry.\footnote{When $h=0$ and $J=g$, this model has a different symmetry known as the $em$-duality that exchanges the first and the third terms. This is an invertible symmetry and is not related to gauging the $\mathbb Z_2$ 1-form symmetry. More generally, when the two graphs $\mathcal G$ and $\widetilde{\mathcal G}$ are isomorphic, the resulting quantum CSS code is self-dual in the absence of the transverse field.}

When $h\ll J$, the model is topologically ordered with four ground states, i.e., the $\mathbb Z_2$ 1-form symmetry is spontaneously broken. In particular, at $h=0$, the ground states are precisely the 2+1d Toric Code ground states. On the other hand, when $h\gg J$, there is only one ground state corresponding to a trivial phase. In particular, at $J=0$, the ground state is the trivial product state $|\Plus\>$. Since this model is related to the 2+1d Ising model via gauging a discrete symmetry, their phase diagrams are in one-one correspondence except for the exchange $h\leftrightarrow J$. In particular, the transition is second-order, described by the 3d Ising* CFT.

The deformations terms are
\ie\label{2d-TC-deform}
\begin{tikzpicture}[baseline=0]
\draw (1,1) -- node[fill=white,inner sep = 2pt]{\footnotesize$X$} (2,1);
\draw (0,0) -- node[fill=white,inner sep = 2pt]{\footnotesize$Z$} (1,0) -- node[fill=white,inner sep = 2pt]{\footnotesize$Z$} (1,1) -- node[fill=white,inner sep = 2pt]{\footnotesize$Z$} (0,1) -- node[fill=white,inner sep = 2pt]{\footnotesize$Z$} cycle;
\end{tikzpicture} + \begin{tikzpicture}[baseline=0]
\draw (1,1) -- (2,1) -- node[fill=white,inner sep = 2pt]{\footnotesize$X$} (2,0) -| cycle;
\draw (0,0) -- node[fill=white,inner sep = 2pt]{\footnotesize$Z$} (1,0) -- node[fill=white,inner sep = 2pt]{\footnotesize$Z$} (1,1) -- node[fill=white,inner sep = 2pt]{\footnotesize$Z$} (0,1) -- node[fill=white,inner sep = 2pt]{\footnotesize$Z$} cycle;
\end{tikzpicture} + \cdots~,
\fe
where ``$\cdots$'' includes all terms obtained by translating, rotating, and reflecting the shown terms. Under gauging the $\mathbb Z_2$ 1-form symmetry, the two types of terms are mapped to the two types of terms in \eqref{2d-Is-deform}. Overall, for each plaquette (magnetic flux) term, there are eight deformation terms of the first type and four of the second. At the frustration free point, with $h=J/2$,\footnote{The reason for the factor of $2$ is that there are twice as many link (electric field) terms as the plaquette (magnetic flux) terms.} there are five exactly degenerate ground states, one of which is the trivial product state $|\Plus\>$ and the rest are the four Toric-Code ground states, and they remain gapped in the thermodynamic limit. Therefore, this point realises the coexistence of the trivial and topologically ordered phases. However, there is no duality symmetry that exchanges the two phases.

\begin{figure}
\centering
\hfill \raisebox{-0.5\height}{\includegraphics[scale=0.16]{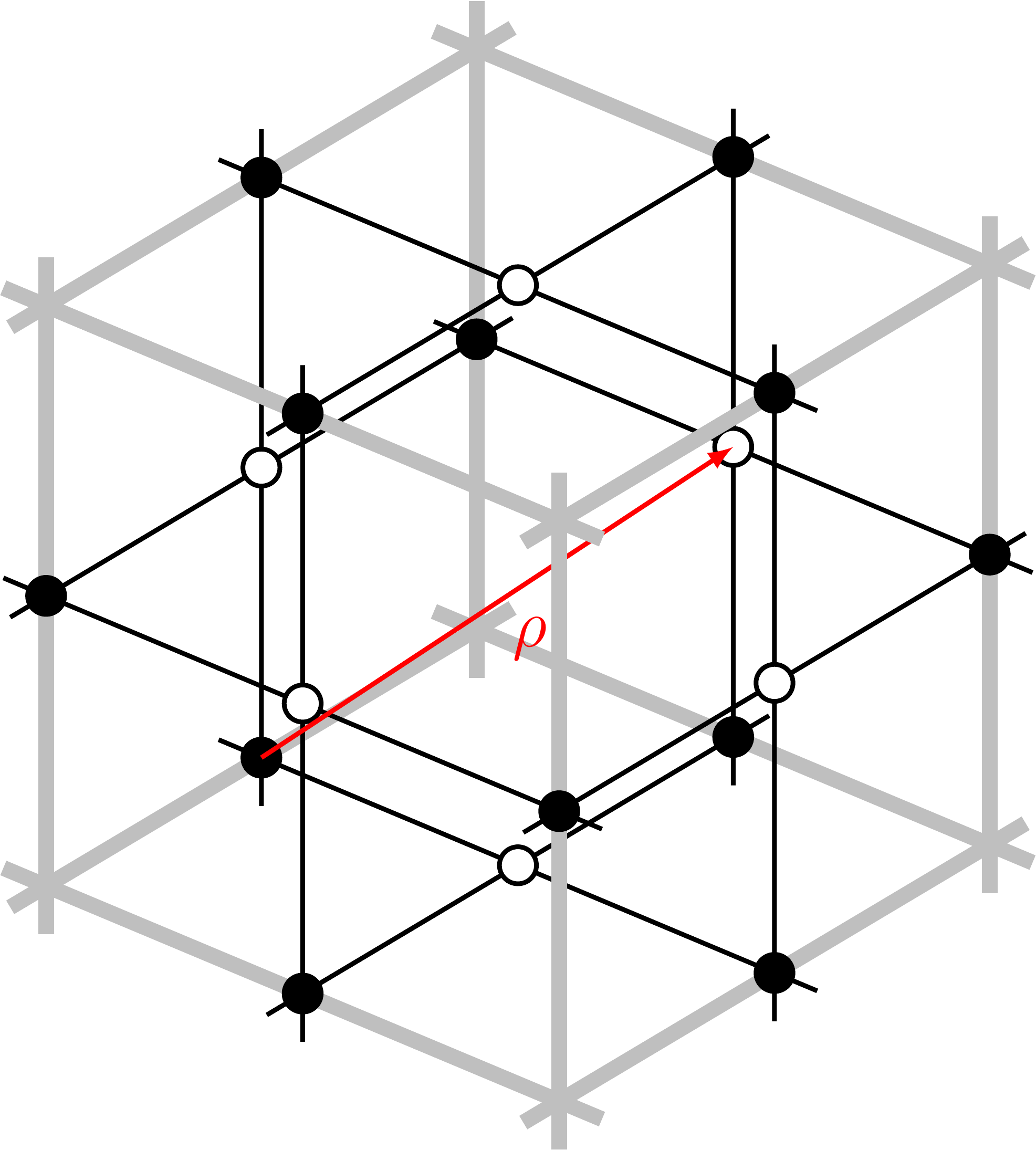}} \hfill \raisebox{-0.5\height}{\includegraphics[scale=0.16]{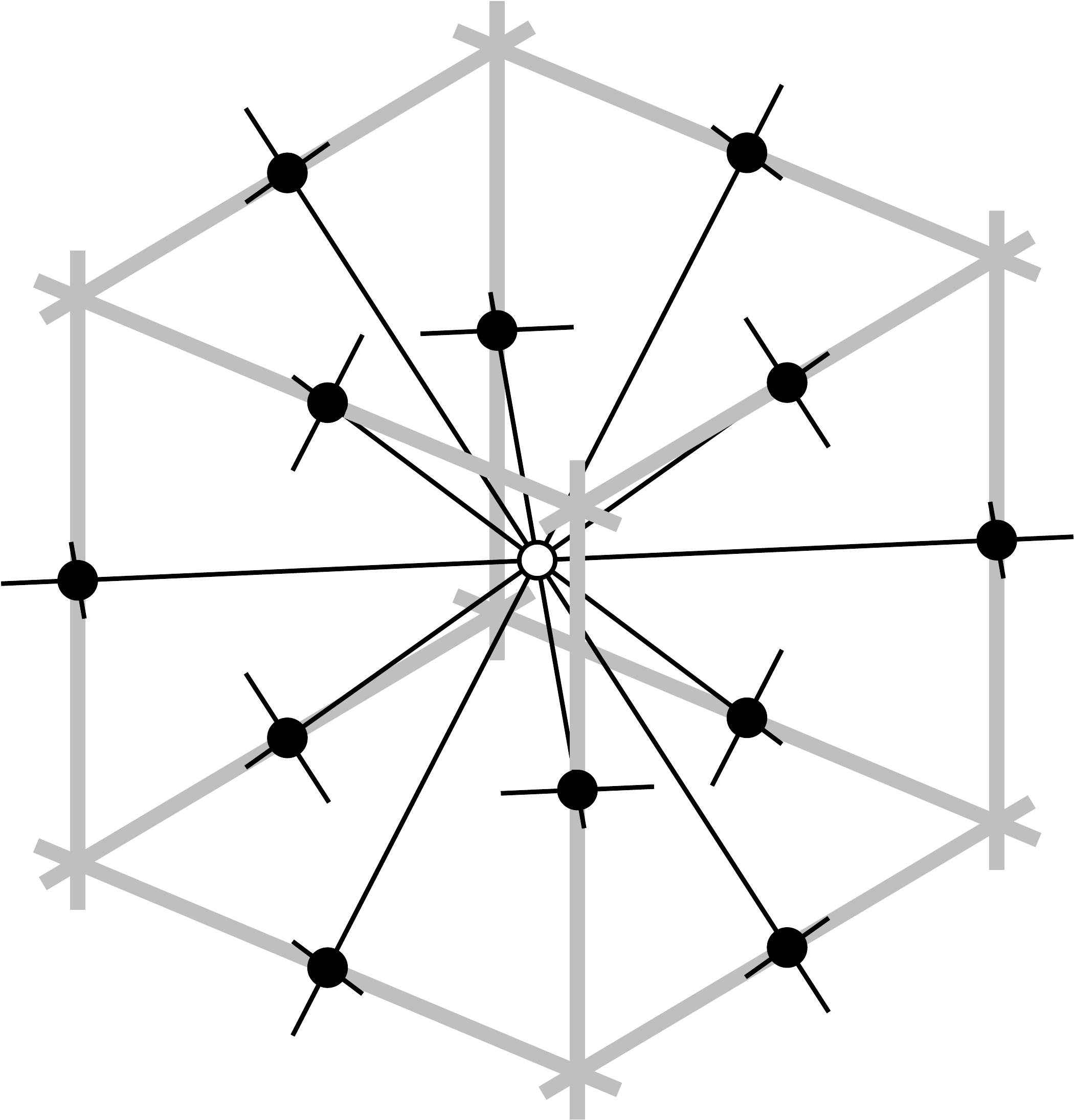}} \hfill ~
\\~\\
\hfill (a)~~~~~~~~~~\hfill ~~~~~~~~~~(b)\hfill ~~~
\caption{The bipartite graphs associated with (a) the 3+1d Toric Code and (b) the X-Cube model. In both cases, the solid (black) vertices represent the physical qubits in $V$ and the hollow (white) vertices represent the interactions in $\widehat V$. In (a), the red arrow $\rho$ represents a reversing automorphism, given by a ``half-translation'' in the $(1,1,1)$ direction, that generates the non-invertible Wegner duality symmetry. The thick grey lines represent the underlying cubic lattice.}
\label{fig:CSS}
\end{figure}

\subsubsection{3+1d Toric Code}
Our next example is the 3+1d Toric Code (in the presence of a transverse field for $h>0$) described by the Hamiltonian
\tdplotsetmaincoords{60}{130}
\ie
H_\text{3d-TC} &= -J \sum_p \left( \begin{tikzpicture}[tdplot_main_coords,scale=1.25,baseline=0]
\draw (0,0,0) -- node[fill=white,inner sep = 1pt]{\footnotesize$Z$} (1,0,0) -- node[fill=white,inner sep = 1pt]{\footnotesize$Z$} (1,0,1) -- node[fill=white,inner sep = 1pt]{\footnotesize$Z$} (0,0,1) -- node[fill=white,inner sep = 1pt]{\footnotesize$Z$} cycle;
\end{tikzpicture} + \begin{tikzpicture}[tdplot_main_coords,scale=1.25,baseline=0]
\draw (0,0,0) -- node[fill=white,inner sep = 1pt]{\footnotesize$Z$} (1,0,0) -- node[fill=white,inner sep = 1pt]{\footnotesize$Z$} (1,1,0) -- node[fill=white,inner sep = 1pt]{\footnotesize$Z$} (0,1,0) -- node[fill=white,inner sep = 1pt]{\footnotesize$Z$} cycle;
\end{tikzpicture} +
\begin{tikzpicture}[tdplot_main_coords,scale=1.25,baseline=0]
\draw (0,0,0) -- node[fill=white,inner sep = 1pt]{\footnotesize$Z$} (0,1,0) -- node[fill=white,inner sep = 1pt]{\footnotesize$Z$} (0,1,1) -- node[fill=white,inner sep = 1pt]{\footnotesize$Z$} (0,0,1) -- node[fill=white,inner sep = 1pt]{\footnotesize$Z$} cycle;
\end{tikzpicture} \right)
\\
&\quad - h \sum_\ell \left( \begin{tikzpicture}[tdplot_main_coords,scale=1.25,baseline=0]
\draw (0,0,0) -- node[fill=white,inner sep = 1pt]{\footnotesize$X$} (0,1,0);
\end{tikzpicture}+\begin{tikzpicture}[tdplot_main_coords,scale=1.25,baseline=0]
\draw (0,0,0) -- node[fill=white,inner sep = 1pt]{\footnotesize$X$} (0,0,1);
\end{tikzpicture}+
\begin{tikzpicture}[tdplot_main_coords,scale=1.25,baseline=0]
\draw (0,0,0) -- node[fill=white,inner sep = 1pt]{\footnotesize$X$} (1,0,0);
\end{tikzpicture} \right) - g \sum_s \begin{tikzpicture}[tdplot_main_coords,scale=1.25,baseline=0]
\draw (-1,0,0) -- node[fill=white,inner sep = 1pt]{\footnotesize$X$} (0,0,0) -- node[fill=white,inner sep = 1pt]{\footnotesize$X$} (1,0,0);
\draw (0,-1,0) -- node[fill=white,inner sep = 1pt]{\footnotesize$X$} (0,0,0) -- node[fill=white,inner sep = 1pt]{\footnotesize$X$} (0,1,0);
\draw (0,0,-1) -- node[fill=white,inner sep = 1pt]{\footnotesize$X$} (0,0,0) -- node[fill=white,inner sep = 1pt]{\footnotesize$X$} (0,0,1);
\end{tikzpicture}~,
\fe
on a periodic cubic lattice with $L_x \times L_y \times L_z$ sites with a qubit on each link. (We show only one orientation of plaquette and link terms for brevity.) Equivalently, this is the Hamiltonian of the 3+1d lattice $\mathbb Z_2$ gauge theory, where the first term is the magnetic flux, the second is the electric field, and the third is the Gauss law. The associated bipartite graph is the cubic lattice with edge centres in $V$ and face centres in $\widehat V$, as shown in Fig. \ref{fig:CSS}(a).

Like the 2+1d Toric Code, this model has a $\mathbb Z_2$ 1-form symmetry generated by the surface operators $\eta(\widehat \Sigma) = \prod_{\ell \in \widehat \Sigma} X_\ell$, where $\widehat \Sigma$ is a dual curve on the dual lattice, and $\ell \in \widehat \Sigma$ means the link $\ell$ pierces $\widehat \Sigma$. This includes the right-most terms in the Hamiltonian, i.e., the vertex (Gauss law) terms. It is also invariant under lattice translations, rotations, and reflections. Unlike its 2+1d cousin, this model does have a non-invertible duality symmetry associated with the reversing automorphism given by the ``half-translation'' in the $(1,1,1)$ direction. It is the well known Wegner duality symmetry, which exchanges the magnetic flux and the electric field terms \cite{Koide:2021zxj,Choi:2021kmx,Kaidi:2021xfk,Gorantla:2024ocs}.

There are two gapped phases: a topologically ordered phase with eight ground states for $h\ll J$ and a trivial phase with one ground state for $h\gg J$. They must be separated by at least one transition at $h=J$ due to the Wegner duality symmetry. Numerics suggests that this transition is of first-order \cite{PhysRevLett.42.1390,PhysRevD.20.1915}.

The deformation terms are
\ie
\begin{tikzpicture}[tdplot_main_coords,scale=1.25,baseline=0]
\draw (0,0,0) -- node[fill=white,inner sep = 1pt]{\footnotesize$Z$} (0,1,0) -- node[fill=white,inner sep = 1pt]{\footnotesize$Z$} (0,1,1) -- node[fill=white,inner sep = 1pt]{\footnotesize$Z$} (0,0,1) -- node[fill=white,inner sep = 1pt]{\footnotesize$Z$} cycle;
\draw (0,1,1) -- node[fill=white,inner sep = 1pt]{\footnotesize$X$} (-1,1,1);
\end{tikzpicture} ~ +~  \begin{tikzpicture}[tdplot_main_coords,scale=1.25,baseline=0]
\draw (0,0,0) -- node[fill=white,inner sep = 1pt]{\footnotesize$Z$} (0,1,0) -- node[fill=white,inner sep = 1pt]{\footnotesize$Z$} (0,1,1) -- node[fill=white,inner sep = 1pt]{\footnotesize$Z$} (0,0,1) -- node[fill=white,inner sep = 1pt]{\footnotesize$Z$} cycle;
\draw (0,1,1) -- node[fill=white,inner sep = 1pt]{\footnotesize$X$} (0,1,2);
\end{tikzpicture} ~+~ \begin{tikzpicture}[tdplot_main_coords,scale=1.25,baseline=0]
\draw (0,0,0) -- node[fill=white,inner sep = 1pt]{\footnotesize$Z$} (0,1,0) -- node[fill=white,inner sep = 1pt]{\footnotesize$Z$} (0,1,1) -- node[fill=white,inner sep = 1pt]{\footnotesize$Z$} (0,0,1) -- node[fill=white,inner sep = 1pt]{\footnotesize$Z$} cycle;
\draw (0,0,1) -- (0,0,2) -- node[fill=white,inner sep = 1pt]{\footnotesize$X$} (0,1,2) -- (0,1,1);
\end{tikzpicture} ~+~ \begin{tikzpicture}[tdplot_main_coords,scale=1.25,baseline=0]
\draw (0,0,0) -- node[fill=white,inner sep = 1pt]{\footnotesize$Z$} (0,1,0) -- node[fill=white,inner sep = 1pt]{\footnotesize$Z$} (0,1,1) -- node[fill=white,inner sep = 1pt]{\footnotesize$Z$} (0,0,1) -- node[fill=white,inner sep = 1pt]{\footnotesize$Z$} cycle;
\draw (0,0,1) -- (-1,0,1) -- node[fill=white,inner sep = 1pt]{\footnotesize$X$} (-1,1,1) -- (0,1,1);
\end{tikzpicture} ~+ \cdots~,
\fe
where ``$\cdots$'' includes all terms obtained by translating, rotating, and reflecting the shown terms. Terms of the first and third types are mapped to terms of the same type under the Wegner duality, whereas terms of the second and fourth types are exchanged. Overall, for each plaquette (magnetic flux) term, there are eight deformation terms of the first type, eight of the second, four of the third, and eight of the fourth. At the frustration-free point, with $h=J$, there are nine exactly degenerate ground states, one of which is the product state $|\Plus\>$ and the rest are the 3+1d Toric Code ground states, and they remain gapped in the thermodynamic limit. Therefore, this point realises the coexistence of the trivial phase and a topologically ordered phase. Moreover, the Wegner duality symmetry exchanges the two phases, so it is spontaneously broken. In a recent work \cite{Gorantla:2024ptu}, we considered only deformations of the first type, and even then we showed the existence of a frustration-free point with the same properties.

\subsubsection{X-Cube model}
Here is another example of topological order, now with fractonic physics: the X-Cube model \cite{Vijay:2016phm} (in the presence of a transverse field for $h>0$) described by the Hamiltonian\footnote{In the original X-Cube model, the $Z$'s and $X$'s are exchanged, so the cube term is a product of $X$'s instead of $Z$'s, leading to the name ``X-Cube''.}
\ie
H_\text{XC} &= -J \sum_c \begin{tikzpicture}[tdplot_main_coords,scale=1.25,baseline=0]
\foreach \x in {0,1} {
  \foreach \y in {0,1} {
    \draw (\x,\y,0) -- node[fill=white,inner sep = 1pt]{\footnotesize$Z$} (\x,\y,1);
  }
}

\foreach \x in {0,1} {
  \foreach \y in {0,1} {
    \draw (\x,0,\y) -- node[fill=white,inner sep = 1pt]{\footnotesize$Z$} (\x,1,\y);
  }
}

\foreach \x in {0,1} {
  \foreach \y in {0,1} {
    \draw (0,\x,\y) -- node[fill=white,inner sep = 1pt]{\footnotesize$Z$} (1,\x,\y);
  }
}
\end{tikzpicture} - h \sum_\ell \left( \begin{tikzpicture}[tdplot_main_coords,scale=1.25,baseline=0]
\draw (0,0,0) -- node[fill=white,inner sep = 1pt]{\footnotesize$X$} (0,1,0);
\end{tikzpicture}+\begin{tikzpicture}[tdplot_main_coords,scale=1.25,baseline=0]
\draw (0,0,0) -- node[fill=white,inner sep = 1pt]{\footnotesize$X$} (0,0,1);
\end{tikzpicture}+
\begin{tikzpicture}[tdplot_main_coords,scale=1.25,baseline=0]
\draw (0,0,0) -- node[fill=white,inner sep = 1pt]{\footnotesize$X$} (1,0,0);
\end{tikzpicture} \right)
\\
&\quad - g \sum_s \left( \begin{tikzpicture}[tdplot_main_coords,scale=1.25,baseline=0]
\draw (-1,0,0) -- node[fill=white,inner sep = 1pt]{\footnotesize$X$} (0,0,0) -- node[fill=white,inner sep = 1pt]{\footnotesize$X$} (1,0,0);
\draw (0,0,-1) -- node[fill=white,inner sep = 1pt]{\footnotesize$X$} (0,0,0) -- node[fill=white,inner sep = 1pt]{\footnotesize$X$} (0,0,1);
\end{tikzpicture} + \begin{tikzpicture}[tdplot_main_coords,scale=1.25,baseline=0]
\draw (-1,0,0) -- node[fill=white,inner sep = 1pt]{\footnotesize$X$} (0,0,0) -- node[fill=white,inner sep = 1pt]{\footnotesize$X$} (1,0,0);
\draw (0,-1,0) -- node[fill=white,inner sep = 1pt]{\footnotesize$X$} (0,0,0) -- node[fill=white,inner sep = 1pt]{\footnotesize$X$} (0,1,0);
\end{tikzpicture} + 
\begin{tikzpicture}[tdplot_main_coords,scale=1.25,baseline=0]
\draw (0,-1,0) -- node[fill=white,inner sep = 1pt]{\footnotesize$X$} (0,0,0) -- node[fill=white,inner sep = 1pt]{\footnotesize$X$} (0,1,0);
\draw (0,0,-1) -- node[fill=white,inner sep = 1pt]{\footnotesize$X$} (0,0,0) -- node[fill=white,inner sep = 1pt]{\footnotesize$X$} (0,0,1);
\end{tikzpicture} \right)
\fe
on a periodic cubic lattice with $L_x \times L_y \times L_z$ sites with a qubit on each link. The associated bipartite graph is the cubic lattice with edge centres in $V$ and body centres in $\widehat V$, as shown in Fig. \ref{fig:CSS}(b).

This model has a planar $\mathbb Z_2$ subsystem symmetry generated by the (rigid) operators $\eta^x_{j,k+\frac12} = \prod_i X_{i,j,k+\frac12}$ and similar operators in the other directions. It is also invariant under lattice translations, rotations, and reflections. However, there is no reversing automorphism in this case because $V$ has thrice as many vertices as $\widehat V$, or equivalently, there are thrice as many qubits (on links) as there are X-Cube terms (on cubes). Therefore, this model has no non-invertible duality symmetry.

Once again, there are two gapped phases: a (foliated or type I) fracton phase with $2^{2L_x + 2L_y + 2L_z -3}$ ground states for $h\ll J$ and a trivial phase with one ground state for $h\gg J$. Numerics suggests that they are separated by a first-order transition at around $h/J \approx 0.293$ \cite{Devakul:2017ays,Muhlhauser:2019rjg,Zhou:2022ebw}.

Nevertheless, consider deforming $H_\text{XC}$ by the following terms:
\ie
\begin{tikzpicture}[tdplot_main_coords,scale=1.25,baseline=0]
\foreach \x in {0,1} {
  \foreach \y in {0,1} {
    \draw (\x,\y,0) -- node[fill=white,inner sep = 1pt]{\footnotesize$Z$} (\x,\y,1);
  }
}

\foreach \x in {0,1} {
  \foreach \y in {0,1} {
    \draw (\x,0,\y) -- node[fill=white,inner sep = 1pt]{\footnotesize$Z$} (\x,1,\y);
  }
}

\foreach \x in {0,1} {
  \foreach \y in {0,1} {
    \draw (0,\x,\y) -- node[fill=white,inner sep = 1pt]{\footnotesize$Z$} (1,\x,\y);
  }
}

\draw (0,0,1) -- node[fill=white,inner sep = 1pt]{\footnotesize$X$} (0,0,2);
\end{tikzpicture} ~+~ \begin{tikzpicture}[tdplot_main_coords,scale=1.25,baseline=0]
\foreach \x in {0,1} {
  \foreach \y in {0,1} {
    \draw (\x,\y,0) -- node[fill=white,inner sep = 1pt]{\footnotesize$Z$} (\x,\y,1);
  }
}

\foreach \x in {0,1} {
  \foreach \y in {0,1} {
    \draw (\x,0,\y) -- node[fill=white,inner sep = 1pt]{\footnotesize$Z$} (\x,1,\y);
  }
}

\foreach \x in {0,1} {
  \foreach \y in {0,1} {
    \draw (0,\x,\y) -- node[fill=white,inner sep = 1pt]{\footnotesize$Z$} (1,\x,\y);
  }
}

\draw (0,0,1) -- (0,0,2) -- node[fill=white,inner sep = 1pt]{\footnotesize$X$} (0,1,2) -- (0,1,1);
\end{tikzpicture} ~+~ \begin{tikzpicture}[tdplot_main_coords,scale=1.25,baseline=0]
\foreach \x in {0,1} {
  \foreach \y in {0,1} {
    \draw (\x,\y,0) -- node[fill=white,inner sep = 1pt]{\footnotesize$Z$} (\x,\y,1);
  }
}

\foreach \x in {0,1} {
  \foreach \y in {0,1} {
    \draw (\x,0,\y) -- node[fill=white,inner sep = 1pt]{\footnotesize$Z$} (\x,1,\y);
  }
}

\foreach \x in {0,1} {
  \foreach \y in {0,1} {
    \draw (0,\x,\y) -- node[fill=white,inner sep = 1pt]{\footnotesize$Z$} (1,\x,\y);
  }
}

\draw (0,0,1) -- (0,0,2) -- node[fill=white,inner sep = 1pt]{\footnotesize$X$} (-1,0,2) -- (-1,0,1) -- cycle;
\end{tikzpicture} ~+~ \begin{tikzpicture}[tdplot_main_coords,scale=1.25,baseline=0]
\foreach \x in {0,1} {
  \foreach \y in {0,1} {
    \draw (\x,\y,0) -- node[fill=white,inner sep = 1pt]{\footnotesize$Z$} (\x,\y,1);
  }
}

\foreach \x in {0,1} {
  \foreach \y in {0,1} {
    \draw (\x,0,\y) -- node[fill=white,inner sep = 1pt]{\footnotesize$Z$} (\x,1,\y);
  }
}

\foreach \x in {0,1} {
  \foreach \y in {0,1} {
    \draw (0,\x,\y) -- node[fill=white,inner sep = 1pt]{\footnotesize$Z$} (1,\x,\y);
  }
}

\draw (0,0,1) -- (-1,0,1) -- (-1,0,2) -- node[fill=white,inner sep = 1pt]{\footnotesize$X$} (-1,1,2) -- (-1,1,1) -- (0,1,1);
\draw (-1,0,1) -- (-1,1,1);
\end{tikzpicture} ~+ \cdots~,
\fe
where ``$\cdots$'' includes all terms obtained by translating, rotating, and reflecting the shown terms. Overall, for each cube term, there are twenty-four terms of the first type, twenty-four of the second, forty-eight of the third, and twelve of the fourth. At the frustration-free point, with $h=J/3$,\footnote{The reason for the factor of $3$ is that there are thrice as many link terms as there are cube terms.} there are $1+2^{2L_x + 2L_y + 2L_z -3}$ ground states, one of which is the trivial product state $|\Plus\>$ and the rest are the X-Cube ground states, and they remain gapped in the thermodynamic limit. Therefore, this point realises the coexistence of the trivial phase and the (foliated or type I) fracton order of the X-Cube model. However, there is no non-invertible duality symmetry that exchanges these two phases.

Surprisingly, the value of $h/J$ at the frustration-free point along the deformation ($h/J=1/3\approx 0.33$) and the transition point in the absence of deformation ($h/J \approx 0.293$) are very close, with the latter being slightly smaller.

\section{Discussion}\label{sec:discuss}

In this work, we proposed symmetry-preserving deformations of classical LDPC codes and quantum CSS codes in transverse field, and identified frustration-free points along these deformations with interesting features: (i) the ground states include a trivial product state and the code space of the original code, and (ii) they remain gapped in the thermodynamic limit. In other words, the frustration-free point realises a coexistence of the trivial phase and the non-trivial gapped phase described by the underlying code. Furthermore, if the original code in transverse field has a non-invertible duality symmetry, then the deformation preserves it, and it is spontaneously broken at the frustration-free point.

\paragraph{Phase diagram.} We analysed several familiar examples of classical LDPC codes and quantum CSS codes on Euclidean lattices in diverse dimensions. In all these examples, the phase diagram is expected to look like one of the two possibilities on the left in Figure \ref{fig:phase-diag}. When there is a non-invertible duality symmetry, the entire transition line is along the self-dual line $h=J$. For instance, in the 1+1d and 2+1d Ising models, and the 2+1d Toric Code, the transition point (TP) at $\lambda = 0$ is second-order, so the phase diagram is the one in the middle. For the other examples, TP is first-order, so the phase diagram is the one on the left.

\begin{figure}
\centering
\hfill \raisebox{-0.5\height}{\includegraphics[scale=0.23]{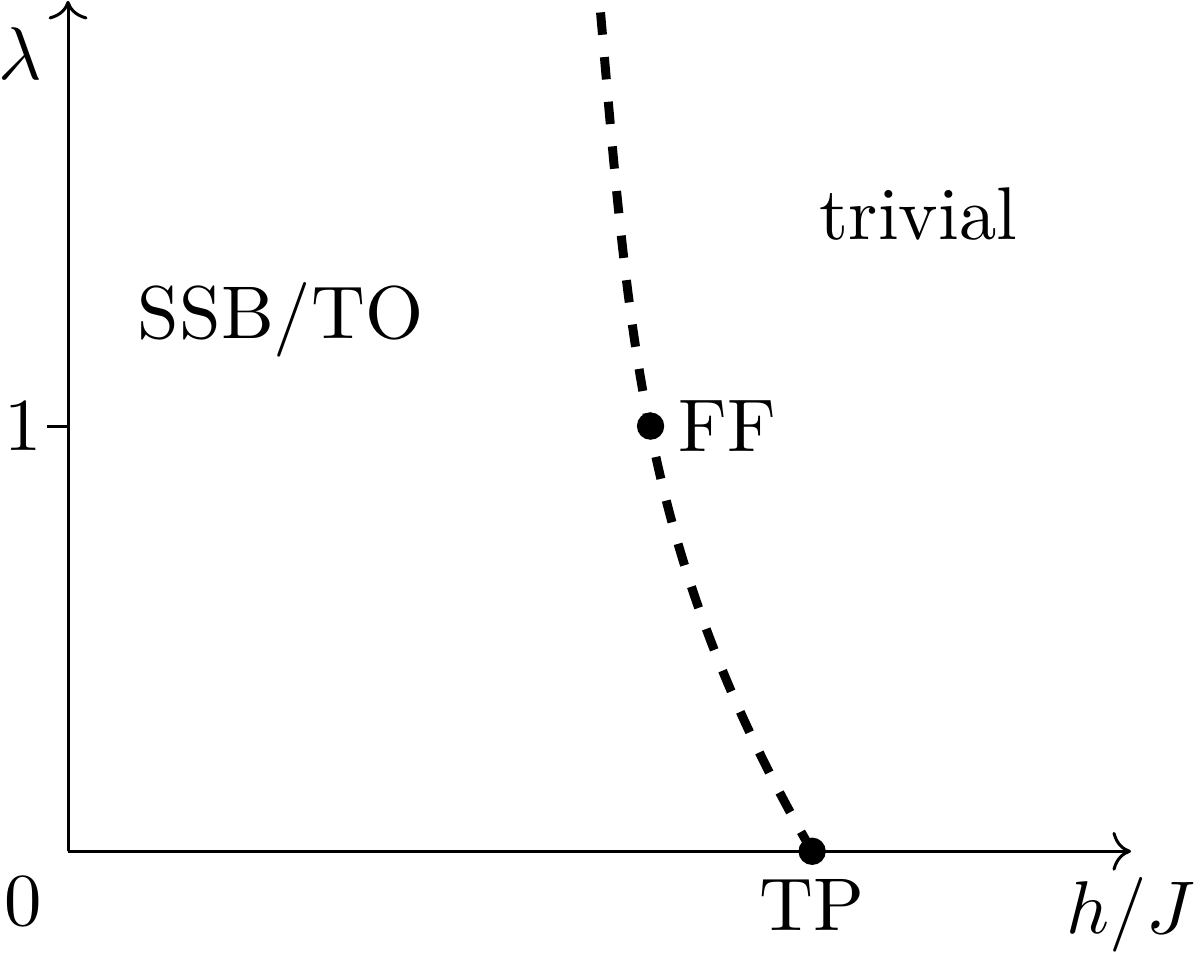}} \hfill \raisebox{-0.5\height}{\includegraphics[scale=0.23]{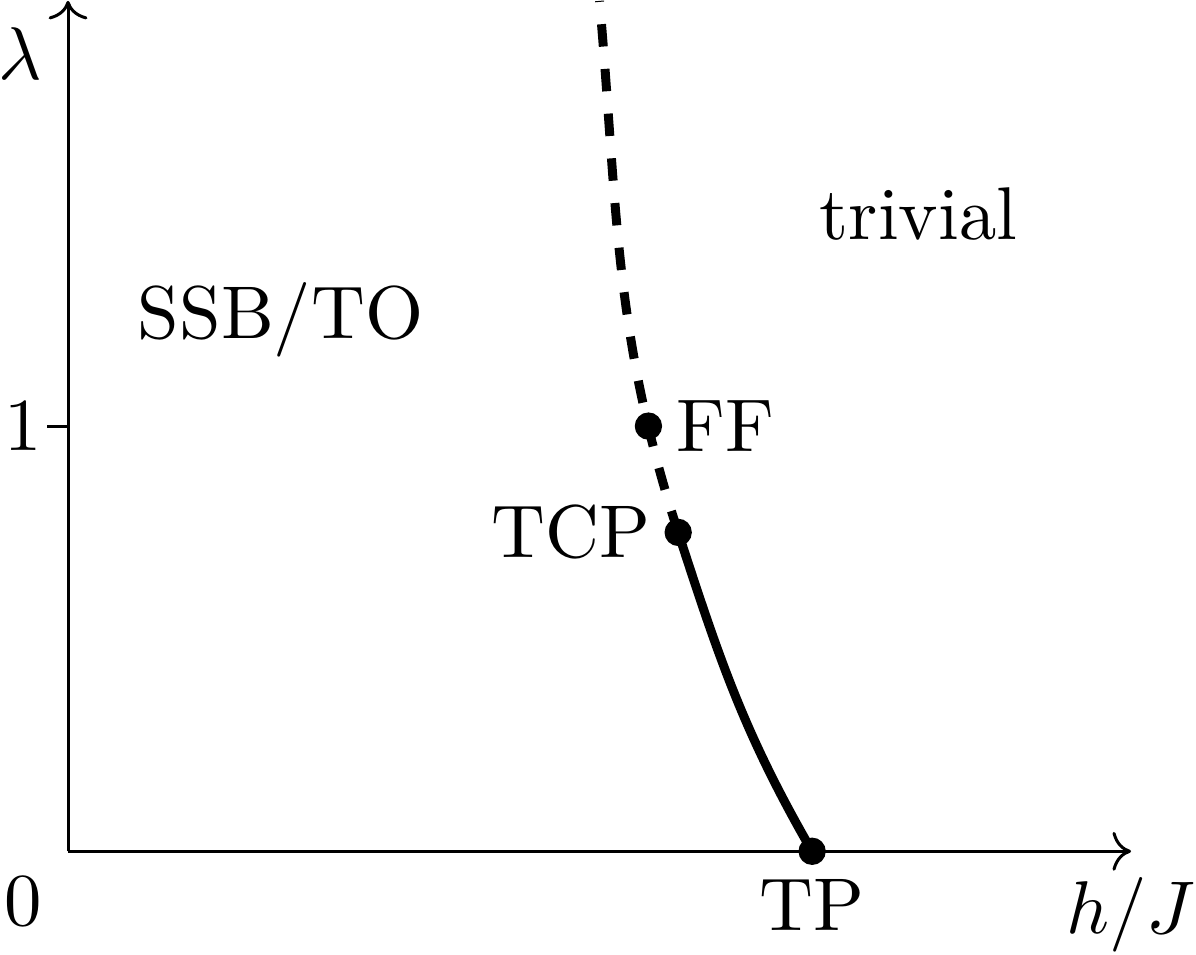}} \hfill \raisebox{-0.5\height}{\includegraphics[scale=0.23]{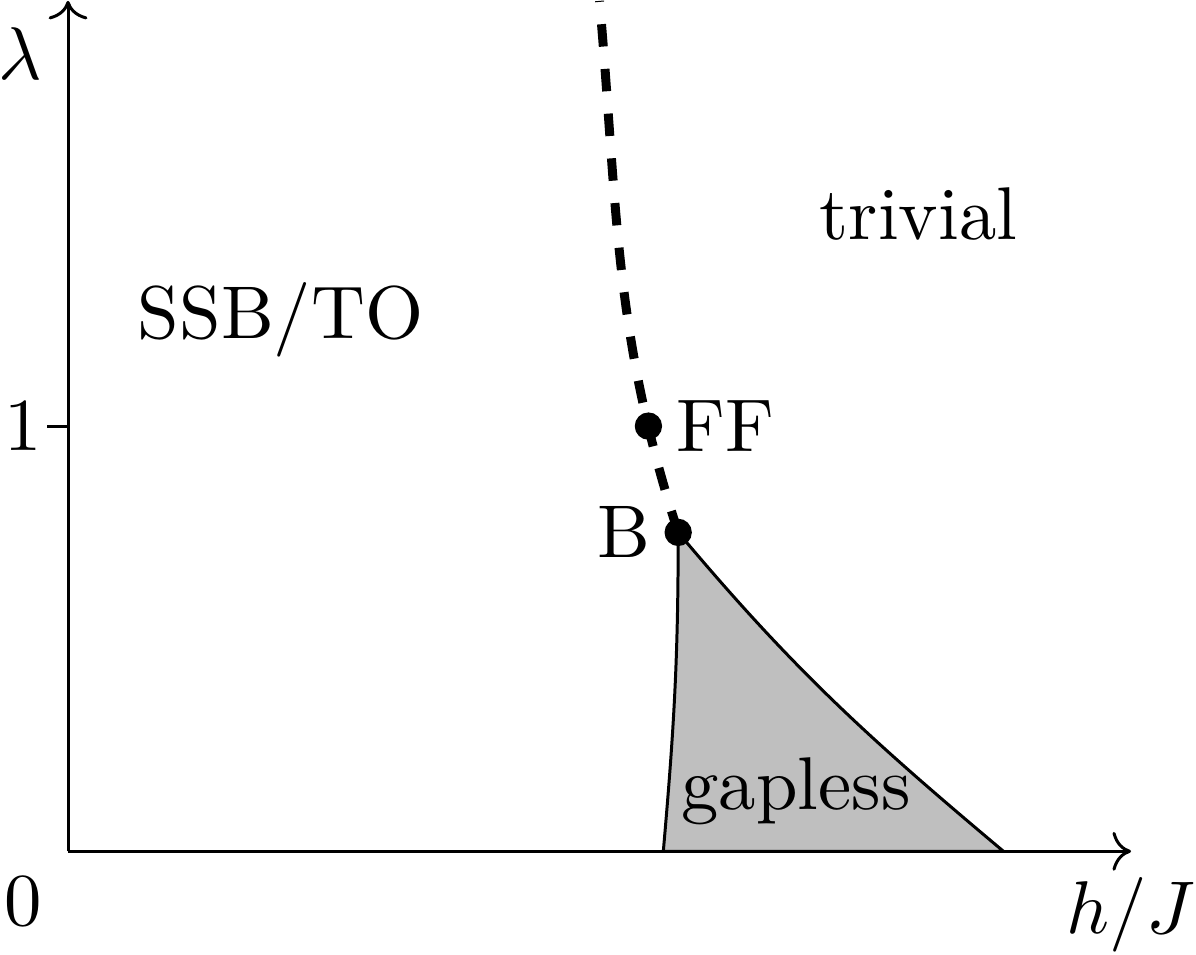}} \hfill ~
\caption{The possible phase diagram of the deformed model on a Euclidean lattice. The dashed (solid) line represents a first (second/higher) order transition between the trivial phase and the spontaneously symmetry broken (SSB) or topologically ordered (TO) phase (the latter includes fracton order). TP denotes the transition point between the two phases at $\lambda = 0$, FF denotes the frustration-free point at $\lambda = 1$, TCP denotes a tricritical point, and B denotes a bifurcation point. The transition line shown here is just a representative, and depending on the model, it could bend the other way. In particular, when there is a non-invertible duality symmetry, the entire transition line is vertical along the self-dual line $h=J$.}\label{fig:phase-diag}
\end{figure}

The phase diagram along the transition line is also quite interesting. Recall that the frustration-free point (FF) at $\lambda = 1$ always realises a first-order transition. Moreover, it was suggested in \cite{OBrien:2017wmx} that the frustration-free point of \eqref{obrien-fendley-deform} could be the basin of the gapped phase along the transition line. Let us assume that this is true more generally for any deformed model \eqref{graph-deformedH} on a Euclidean lattice. Now, if TP is also first-order, then it is reasonable to expect that the entire transition line is gapped and FF is the basin for this gapped phase, as in the phase diagram on the left. On the other hand, if TP is second-/higher-order, then there must be a \emph{tricritical point} (TCP) along the transition line between TP and FF, as in the middle phase diagram. In fact, TCP must be an unstable fixed point of RG, which flows to a gapped phase in the direction of FF and a gapless phase in the direction of TP. For instance, in the 1+1d Ising model, TP is second-order described by the 2d Ising CFT and TCP is described by the 2d tricritical Ising CFT \cite{OBrien:2017wmx}, and indeed, the latter has a relevant perturbation which flows to the 2d Ising CFT or a gapped phase with three vacua depending on its sign \cite{Huse:1984mn}. A more interesting example is the 2+1d Ising model, where TP is second-order described by the 3d Ising CFT, but the status of TCP is still open. We leave the detailed investigation of such novel tricritical points to the future.

\paragraph{Generalising to $\mathbb Z_N$ qudits.} The full phase diagram gets even richer when the qubits are replaced by $\mathbb Z_N$ qudits. All our results go through straightforwardly in this case with minor cosmetic changes: (i) the matrix $\mathfrak h$ now has entries in $\{0,1,\ldots,N-1\} \cong \mathbb Z_N$, which can be interpreted as adding weights to the edges, but the underlying bipartite graph remains the same, and (ii) $P_v$ and $Q_{\hat v}$ should be replaced by $P_v + P_v^\dagger$ and $Q_v + Q_v^\dagger$ to ensure that the Hamiltonians \eqref{graph-deformedH'} and \eqref{css-deformedH'} are Hermitian.\footnote{One can also consider other replacements, such as Potts-like terms instead of clock-like terms.} Now, for the deformed $\mathbb Z_N$ model on a Euclidean lattice, when $N$ is large enough, there is a possibility of a gapless window between the two gapped phases at $\lambda = 0$, as shown in the right phase diagram. But FF at $\lambda = 1$ is still a direct first-order transition between the two gapped phases, so the gapless window must shrink along the deformation to a \emph{bifurcation point} (B) at some $0 < \lambda < 1$. Understanding the field theory around this point would be another interesting future direction.

Besides understanding the phase diagram, there are several potential future directions:
\begin{enumerate}
\item An important question that we have not addressed in this work is the stability of the deformed model \eqref{graph-deformedH'}. It is known that not all LDPC codes correspond to stable gapped phases, and the ones that are stable require additional restrictions---e.g., the code distance must grow with the number of qubits \cite{Rakovszky:2023fng,Rakovszky:2024iks}. One can ask if the deformed model based on such an LDPC code is similarly stable under local (possibly symmetry-preserving) perturbations.

\item Can we extend the symmetry-preserving deformation to models based on non-abelian groups, or even fusion categories? The expectation is to obtain a frustration-free model realising the coexistence of trivial and arbitrary SSB/SPT/TO phases?\footnote{Here, an SPT phase refers to a symmetry-protected topological phase.} What about realising coexistence of two arbitrary gapped phases? Of course, this should not always be possible---e.g., the two gapped phases could have different anomalies in the IR. But when the anomalies match, intuitively, this should be possible.

\item In the examples where there is no non-invertible duality symmetry, is there a different symmetry (perhaps non-invertible) that exchanges the two phases? One could potentially construct such a symmetry from a sequential quantum circuit that maps between the two gapped phases \cite{Chen:2023qst}.

\item It would be interesting to explore the applications of our coarse-graining/blocking procedure on arbitrary (bounded-degree, bipartite) graphs to other local, frustration-free, non-commuting projector Hamiltonians. For instance, one could use it to obtain local gap thresholds of frustration-free Hamiltonians on non-Euclidean lattices, similar to the known thresholds on Euclidean lattices \cite{Gosset:2016,Lemm:2019pxa,Anshu:2019mhi}.
\end{enumerate}

\section*{Acknowledgements}
We thank Anurag Anshu, Michael Kastoryano, Angelo Lucia, Zhu-Xi Luo, Abhinav Prem, Marvin Qi, Shu-Heng Shao, and Nathanan Tantivasadakarn for stimulating discussions. We are grateful to Paul Fendley, Angelo Lucia, and Nathanan Tantivasadakarn for thoughtful comments on the draft. We are indebted to Dömötör Pálvölgyi for help with the proof of existence of a good cover of a hypergraph. This work was supported by the Simons Collaboration on Global Categorical Symmetries.

\appendix

\section{Proof of Claim \ref{clm:eq-wt}}\label{app:eq-wt}
In this appendix, we prove Claim \ref{clm:eq-wt}. Recall that the constraints \eqref{psi-constraint} can be phrased as follows. Given an X-state $|\sigma\> \ne |\Plus\>$, a vertex $v$ with $\sigma_v = -$ and a vertex $\hat v$ such that $d(v,\hat v) = 3$, let $|\sigma'\> = F_{\hat v} |\sigma\>$. We refer to such flips as \emph{special flips}. Then, by \eqref{psi-constraint}, we have $\psi_{\sigma'} = \psi_\sigma$. That is, X-states related by special flips have equal weights in the decomposition \eqref{psi-decomp}.

Our goal is to show that $\psi_{\sigma'} = \psi_\sigma$ whenever $|\sigma\>$ and $|\sigma'\>$, neither of which is $|\Plus\>$, carry the same charges under $\mathbb Z_2^\nu$. Recall that, from the discussion below \eqref{eta-product-eigenstates}, there is a subset $\widehat U \subseteq \widehat V$ such that $|\sigma'\> = \prod_{\hat v \in \widehat U} F_{\hat v} |\sigma\>$. So all we need to show is that we can ``simulate'' the flip $F_{\hat v}$ for all $\hat v\in \widehat U$ using only the special flips.

Given an X-state $|\sigma\> \ne |\Plus\>$, say we want to simulate the flip $F_{\hat v}$. There are three scenarios:
\begin{enumerate}
\item There is a vertex $v \not\sim \hat v$ such that $\sigma_v = -$. Since $\mathcal G$ is connected, there is a $v\not\sim \hat v$ such that $\sigma_v = -$ \emph{and} there is a path from $v$ to $\hat v$,
\ie\nonumber
\begin{tikzcd}[column sep = 0.2cm]
\underset{\color{blue}\phantom{v=w}-}{v =w_1} \ar[r,-] &|[yshift=1cm]| \hat w_1 \ar[r,-] & \underset{\color{blue}+}{w_2} \ar[r,-] &|[yshift=1cm]| \hat w_2 \ar[r,-] & \underset{\color{blue}+}{w_3} \ar[r,-] &|[yshift=1cm]| \hat w_3 \ar[r,-] &[0.2cm]|[yshift=0.5cm]|  \cdots  \ar[r,-] &[0.2cm] \underset{\color{blue}+}{w_{\ell-1}} \ar[r,-] &|[yshift=1cm]| \hat w_{\ell-1} \ar[r,-] & w_\ell \ar[r,-] &|[yshift=1cm]| \hat w_\ell = \hat v
\end{tikzcd}
\fe
with $\ell\ge 2$ and $\sigma_{w_i} = +$ for $2\le i \le \ell-1$, i.e., all internal vertices of the path have only $+$ on them.\footnote{If an internal vertex along this path has a $-$, then redefine $v$ as the internal vertex with $-$ that is closest to $\hat v$ along this path.} Since $d(w_1,\hat w_2)=3$ and $\sigma_{w_1}=-$, we can apply the special flip $F_{\hat w_2}$, so that $w_2$ and $w_3$ have $-$ after this flip. Since $d(w_3,\hat w_4)=3$ and $w_3$ has $-$, we can apply the special flip $F_{\hat w_4}$, so that $w_4$ and $w_5$ have $-$ after this flip. We repeat this process until we apply either the special flip $F_{\hat w_{\ell-1}}$ if $\ell$ is odd or the special flip $F_{\hat w_\ell}$ if $\ell$ is even. When $\ell$ is odd, we apply one more special flip $F_{\hat w_\ell}$, which is allowed because $d(w_{\ell-1},\hat w_\ell)=3$ and $w_{\ell-1}$ has $-$. Now, we ``undo'' all the special flips, except for the last one, by applying them in the reverse order. That this is possible is easy to check. In the end, we have simulated the flip $F_{\hat w_\ell} = F_{\hat v}$, which is exactly what we wanted to show.

\item Every $v \not \sim \hat v$ has $\sigma_v =+$ (i.e., we are not in scenario 1), but there are two vertices $v_1,v_2$ adjacent to $\hat v$ with $\sigma_{v_1} = -$ and $\sigma_{v_2}=+$. Since $|V| > 2D\widehat D$ and $\mathcal G$ is connected by assumption, there are vertices $w\in V$ and $\hat w\in \widehat V$ that satisfy $w\sim \hat w$, $\hat w\not \sim v_{1,2}$, and $w\not\sim \hat v$.\footnote{To see this, consider the union of balls $B(v_1,2)\cup B(v_2,2)$. It contains at most $2D\widehat D$ vertices of $V$. Since $|V| > 2D\widehat D$ by assumption, there is a vertex $w \in V$ such that $d(w,v_{1,2})>2$. It follows that $w \not\sim \hat v$. Since $\mathcal G$ is connected, there must be a $\hat w \in \widehat V$ such that $\hat w \sim w$. Moreover, $\hat w\not\sim v_{1,2}$ because $d(w,v_{1,2}) > 2$.} By the hypothesis of this scenario, $w\not\sim \hat v$ implies that $\sigma_w=+$. Now, $v_1$ and $\hat w$ satisfy scenario 1, so we can simulate the flip $F_{\hat w}$, so that $w$ has $-$ after this flip. Then, $w$ and $\hat v$ satisfy scenario 1, so we can simulate the flip $F_{\hat v}$, so that $v_1$ has $+$ and $v_2$ has $-$ after this flip. Finally, $v_2$ and $\hat w$ satisfy scenario 1, so we can simulate the flip $F_{\hat w}$ again, undoing the first flip. In the end, we have simulated the flip $F_{\hat v}$.

\item Every $v\not\sim \hat v$ has $\sigma_v = +$ and every $v\sim \hat v$ has $\sigma_v = -$ (the latter cannot be $+$ because $|\sigma\> \ne |\Plus\>$ by hypothesis). That is, we are not in either of the two scenarios above. We cannot simulate the flip $F_{\hat v}$ in this scenario because that would result in $|\Plus\>$, but there are no constraints involving $\psi_{\Plus}$. Note that, in this scenario, $|\sigma\> = F_{\hat v}|\Plus\>$, which is invariant under the $\mathbb Z_2^\nu$ symmetry.
\end{enumerate}

We are now ready to prove Claim \ref{clm:eq-wt}. There are two cases to consider.
\begin{itemize}
\item \underline{$|\sigma\>$ is not invariant under $\mathbb Z_2^\nu$}: In this case, we are never in scenario 3. Hence, we can simulate the flips $F_{\hat v}$ for all $\hat v \in \widehat U$ in any order.

\item \underline{$|\sigma\>$ is invariant under $\mathbb Z_2^\nu$}: Say we have already simulated the flips $F_{\hat w}$ for all $\hat w \in \widehat W \subseteq \widehat U$. After these flips, the state is $|\sigma^{\widehat W}\> := \prod_{\hat w \in \widehat W} F_{\hat w} |\sigma\>$. Now pick a new vertex $\hat v \in \widehat U \sm \widehat W$. If $|\sigma^{\widehat W}\> \ne F_{\hat v} |\Plus\>$, then we are not in scenario 3, so we can simulate the flip $F_{\hat v}$, add $\hat v$ to $\widehat W$, and proceed to the next flip. On the other hand, if $|\sigma^{\widehat W}\> = F_{\hat v} |\Plus\>$, then we pick a different vertex $\hat v \ne \hat v' \in \widehat U \sm \widehat W$ and proceed. We are guaranteed that $|\sigma^{\widehat W}\> \ne F_{\hat v'} |\Plus\>$, so we are not in scenario 3. The only potential issue is when $\hat v$ is the last vertex in $\widehat U$, i.e., $\widehat W = \widehat U \sm \{\hat v\}$, and we do not have the option to pick another vertex. But since $|\sigma'\> \ne |\Plus\>$ by the hypothesis of Claim \ref{clm:eq-wt}, it cannot be the case that $|\sigma^{\widehat U \sm \{\hat v\}}\> = F_{\hat v} |\Plus\>$, so we are guaranteed to not be in scenario 3. In the end, we have simulated the flip $F_{\hat v}$ for all $\hat v \in \widehat U$.
\end{itemize}
This proves the claim.

\section{Coarse-graining a hypergraph}\label{app:coarse-grain}

In this appendix, we define the interaction hypergraph of a local Hamiltonian and give a procedure to coarse-grain it so that the delicate balance discussed in Section \ref{sec:mart-meth} is achieved. But first, we mention some useful notions of hypergraphs.

\subsection{Basic notions of hypergraphs}
Let $G$ be a hypergraph with vertex-set $V$ and hyperedge-set $E \subseteq 2^{V} \sm \{\emptyset\}$, where $2^V$ is the power set of $V$. The \emph{degree} of a vertex is the number of hyperedges it is contained in and the \emph{order} of a hyperedge is the number of vertices it contains. For instance, if $G$ is an ordinary graph, then every hyperedge is of order $2$. Let $\mdeg(G) := \max_{v\in V} \deg(v)$ be the maximum of degrees of all vertices and $\mord(G) := \max_{e\in E} |e|$ be the maximum of orders of all hyperedges, respectively. The \emph{incidence matrix} $\mathfrak i$ of $G$ is a $|E| \times |V|$ matrix given by $\mathfrak i_{e,v} = 1$ if $v\in e$ and $0$ otherwise.

Two hyperedges are \emph{adjacent} if they share a vertex and two vertices are \emph{adjacent} if there is a hyperedge that contains them. A \emph{path} between two vertices $u$ and $v$ is a sequence of distinct vertices $u=v_1,v_2,\ldots,v_n=v$ such that $v_i$ and $v_{i+1}$ are adjacent for any $1\le i < n$. We say $G$ is \emph{connected} if there is a path between any two vertices. We define $\mathscr G(\Delta,k)$ as the collection of all finite connected hypergraphs with $\mdeg(G) \le \Delta$ and $\mord(G) \le k$, where $\Delta$ and $k$ are positive integers.

We say $G'$ is a \emph{sub-hypergraph} of $G$ if $V(G') \subseteq V(G)$, $E(G') \subseteq E(G)$, and $\bigcup_{e\in E(G')} e \subseteq V(G')$. We say $G'$ is \emph{induced} if $e\in E(G)$ and $e\subseteq V(G')$ implies $e\in E(G')$. We say $G'$ is a \emph{connected component} of $G$ if it is induced, connected, and no hyperedge in $E(G)\sm E(G')$ intersects $G'$. In particular, $G$ is connected if and only if it has only one connected component.

The \emph{intersection} of two sub-hypergraphs $G_1$ and $G_2$ is a sub-hypergraph, denoted as $G_1\cap G_2$, whose vertex-set is $V(G_1) \cap V(G_2)$ and hyperedge-set is $E(G_1) \cap E(G_2)$. It follows that, if $G_1$ and $G_2$ are both induced, then so is $G_1 \cap G_2$. However, even if $G_1$ and $G_2$ are connected, $G_1 \cap G_2$ need not be connected. The \emph{union} of two sub-hypergraphs is defined similarly. Note that, unlike the intersection, $G_1 \cup G_2$ is connected if $G_1$ and $G_2$ are connected and $G_1 \cap G_2 \ne \emptyset$, but it need not be induced even if $G_1$ and $G_2$ are induced.

A \emph{connected partition} of $G$ is a partition of $V(G)$ such that each part induces a connected sub-hypergraph. Two parts of a partition are said to be \emph{adjacent} if there is a hyperedge that intersects both the parts nontrivially (i.e., if there is a hyperedge that contains a vertex from both parts).

The \emph{incidence graph} of $G$ is the bipartite graph $BG$ with vertex set $BV:=E\sqcup V$ and edge-set $BE := \{\{e,v\}:e \in E,v\in V,v\in e\}$. In other words, $BG$ is the bipartite graph whose biadjacency matrix is the incidence matrix of $G$. This gives a one-one correspondence between hypergraphs and bipartite graphs. For instance, $G$ is connected if and only if $BG$ is connected. Also, $\mdeg(G) = \mrdeg(BG)$ and $\mord(G) = \mldeg(BG)$. A sub-hypergraph $G'$ of $G$ corresponds to a left-closed subgraph $BG'$ of $BG$, and as the notation suggests, $BG'$ is the incidence graph of $G'$. A non-left-closed subgraph of $BG$ does not correspond to a natural sub-hypergraph of $G$.

\subsection{Interaction hypergraph of a Hamiltonian}\label{app:int-hypergraph}

A local Hamiltonian on a tensor product Hilbert space has the following structure. Let $v\in V$ label the local Hilbert space $\mathcal H_v$. The local Hamiltonian can be written as $H= \sum_{e\in E} H_e$, where $e$ labels the local interactions. We can associate $e$ with the subset of qubits that participate in the interaction labelled by $e$, i.e., $e\subseteq V$. This yields a hypergraph with vertex-set $V$ and hyperedge-set $E$, which is referred to as the \emph{interaction hypergraph} of the Hamiltonian $H$. The corresponding incidence graph is a bipartite graph, which is commonly referred to as the \emph{interaction (bipartite) graph} of the Hamiltonian $H$.

A natural notion of a ``local, short-ranged Hamiltonian'' is when each $H_e$ involves at most $k$ qubits, and each qubit participates in at most $\Delta$ interaction terms. The corresponding interaction hypergraph belongs to $\mathscr G(\Delta,k)$.

Let us consider two examples discussed in the main text.
\begin{itemize}
\item \textbf{Example 1:} When $h_{\hat v} = 0$ for all $\hat v \in \widehat V$, the interaction (bipartite) graph of the Hamiltonian \eqref{H-gTFIM} of the classical LDPC code in transverse field is precisely the bipartite graph $\mathcal G$ (i.e., the Tanner graph) that it is based on. Therefore, in this case, we can choose $\Delta = D$ and $k = \widehat D$.

\item \textbf{Example 2:} Consider the Hamiltonian \eqref{graph-deformedH'}, which is a deformation of the Hamiltonian in Example 1. Each term in this Hamiltonian involves the qubit on $v$ and the qubits on the neighbours of $\hat v$. Therefore, each interaction term involves at most $k = 1+\widehat D$ qubits. On the other hand, each qubit participates in an interaction term either as $v$ via $P_v$ or as a neighbour of $\hat v$ via $Q_{\hat v}$. For each $v$, $P_v$ appears in at most $D^2 \widehat D$ terms because there are at most that many $\hat v$'s at distance $3$ from $v$. So each $v$ appears at most $D^2 \widehat D$ times via $P_v$. Similarly, for each $\hat v$, $Q_{\hat v}$ appears in at most $D \widehat D^2$ terms. Since each $v$ has at most $D$ neighbours, it appears at most $D^2 \widehat D^2$ times via $Q_{\hat v}$. In total, each qubit participates in at most $\Delta = D^2 \widehat D(1+\widehat D)$ interaction terms. To summarise, the interaction hypergraph of the Hamiltonian \eqref{graph-deformedH'} belongs to $\mathscr G(\Delta,k)$, where $\Delta = D^2 \widehat D(1+\widehat D)$ and $k= 1+\widehat D$. Since $D$ and $\widehat D$ are fixed, $\Delta$ and $k$ are also fixed in the thermodynamic limit.

\end{itemize}

\subsection{Good cover of hypergraph}

Given positive integers $n,\Delta,k,m$, and $N\ge n$, and a finite connected hypergraph $G \in \mathscr G(\Delta,k)$ with $|V(G)| \ge n$, we define a \emph{good cover} of $G$ with parameters $m,N$ as a finite collection $\{G_i\}$ of sub-hypergraphs of $G$ such that:
\begin{enumerate}
\item every $G_i$ is a connected induced sub-hypergraph,

\item every $e\in E$ is contained in at least one $G_i$ and at most $m$ distinct $G_i$'s,

\item for any $i$, $n \le |V(G_i)| \le N$, and

\item for any $i\ne j$, if $G_i$ and $G_j$ intersect, then $G_i \cap G_j$ has at least one connected component with at least $n$ vertices.

\end{enumerate}
Of course, for any given hypergraph, one can always find $m$ and $N$ such that the above conditions are all satisfied (provided $|V(G)| \ge n$). The question is if one can choose $m,N$ \emph{independent} of the hypergraph. That is, are there $m,N$ such that for all sufficiently large $n$, \emph{every} hypergraph with at least $n$ vertices, degree at most $\Delta$, and order at most $k$ has a good cover with parameters $m,N$? The answer is yes!

\begin{theorem}\label{thm:goodcover}
For any $n \ge k$, every hypergraph $G\in\mathscr G(\Delta,k)$ with $|V(G)|\ge n$ has a good cover with parameters
\ie
m=m_{\Delta,k}(n) := 2^k \Delta^{2k} (k!)^2 n^{k(k+1)}~,\qquad N=N_{\Delta,k}(n) := 2^k \Delta^k\,k!\, n^{(k+1)(k+2)/2}~.
\fe
\end{theorem}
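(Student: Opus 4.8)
The plan is to build the good cover in two stages: first partition $V(G)$ into ``cells'' of controlled size, then fatten each cell by its neighbouring cells to obtain the blocks $G_i$.

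\emph{Stage 1: a connected partition.} First I would show that any $G\in\mathscr G(\Delta,k)$ with $|V(G)|\ge n\ge k$ admits a partition of $V(G)$ into cells $\{C_a\}$, each inducing a \emph{connected} sub-hypergraph of $G$, with $n\le|C_a|\le N_0$ for some $N_0=N_0(\Delta,k,n)$ polynomial in $n$. A cell is grown greedily: start from one hyperedge and repeatedly adjoin a hyperedge sharing a vertex with the current cell (so that the induced sub-hypergraph stays connected), stopping the instant the vertex count reaches $n$, which overshoots by at most $k-1<n$. The difficulty, absent on a Euclidean lattice, is that deleting a cell may split the remainder into several induced components, some with fewer than $n$ vertices, and, because hyperedges are not ``truncated'' on passing to an induced sub-hypergraph, such a stranded small component need not be joined to the rest by a hyperedge contained in any proposed merge. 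One must therefore continue growing the cell, guided by a spanning tree of the incidence graph $BG$, so that re-absorbing the small leftovers does not cascade, and then bound the number of absorption rounds. (When $k=2$ this reduces to the classical partition of a bounded-degree tree on $\ge n$ vertices into connected subtrees of size $O(n)$; the genuinely new issue is $k\ge 3$.) This is the main obstacle, and the polynomial degree of $N_0$ that it produces -- governed by the order $k$ of the hyperedges one must close up -- is the source of the exponents $k(k+1)$ and $(k+1)(k+2)/2$ in the statement.

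\emph{Stage 2: fattening and verification.} Given the partition, form the cell-adjacency graph $\mathcal C$ (put $C_a\sim C_b$ iff some hyperedge of $G$ meets both); since each cell has $\le N_0$ vertices, each lying in $\le\Delta$ hyperedges, each meeting $\le k$ cells, $\mathcal C$ has maximum degree $\le N_0\Delta k$. Define $G_a$ to be the sub-hypergraph of $G$ induced on $C_a\cup\bigcup_{C_b\sim C_a}C_b$, and verify the four good-cover conditions. For condition 1, $G_a$ is induced by construction, and connected because any hyperedge meeting $C_a$ and a neighbour $C_b$ has all of its vertices in cells adjacent to $C_a$, hence inside $V(G_a)$, so it lies in $E(G_a)$ and links the internally connected cells $C_a,C_b$. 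For condition 2, a hyperedge $e$ has its vertices in cells $c_1,\dots,c_j$ that are pairwise $\mathcal C$-adjacent (they all meet $e$), so $e\subseteq V(G_{c_1})$ and thus $e\in E(G_{c_1})$; and $e\subseteq V(G_a)$ forces $c_1\in\{a\}\cup\mathscr N_{\mathcal C}(a)$, so at most $1+N_0\Delta k$ blocks contain $e$, bounding $m$. For condition 3, $n\le|C_a|\le|V(G_a)|\le(1+N_0\Delta k)N_0$, bounding $N$. For condition 4, $V(G_a)\cap V(G_b)$, when nonempty, is a union of whole cells and hence contains some cell $C_c$; its induced sub-hypergraph is connected and sits inside $G_a\cap G_b$, so the connected component of $G_a\cap G_b$ through $C_c$ has at least $|C_c|\ge n$ vertices. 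Substituting the explicit $N_0$ from Stage 1 into the bounds for $m$ and $N$ yields them in the stated form (with non-optimal constants).

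In short, Stage 2 is routine bookkeeping with the degree and order bounds, while the real content -- and the main obstacle -- is the connected-partition construction of Stage 1: running the greedy cell-growth, anchored to a spanning tree of the incidence graph, so that absorbing the unavoidable small leftover components terminates with cell sizes polynomially bounded in $n$. The non-truncation of hyperedges in induced sub-hypergraphs is precisely what makes this step delicate on an arbitrary bounded-degree, bounded-order hypergraph, where no lattice geometry is available to guide the blocking.
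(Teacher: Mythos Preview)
Your Stage 2 is essentially the paper's fattening step and is fine. The gap is Stage 1: you do not actually prove that a connected partition with \emph{every} cell of size at least $n$ (and at most some $N_0(\Delta,k,n)$) exists. You correctly identify the obstruction---after carving out a cell the remainder may shed small components, and since induced sub-hypergraphs keep only hyperedges lying entirely inside, re-absorbing a small piece into an existing cell need not restore connectedness elsewhere---but ``guided by a spanning tree of the incidence graph so that re-absorbing the small leftovers does not cascade'' is an intention, not an argument. In particular you give no bound on how many absorption rounds occur or why the process terminates with cell sizes polynomial in $n$; this is precisely the point where the explicit exponents in $m_{\Delta,k}(n)$ and $N_{\Delta,k}(n)$ have to come from, and without it the proof is incomplete.

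The paper sidesteps this difficulty entirely by \emph{not} insisting that all parts be large. It greedily builds ``good parts'' of size between $n$ and $n+k$ and simply accepts the leftover ``bad parts'' of size $<n$. The key observation is that any hyperedge of $G$ meeting two or more bad parts must also meet a good part (otherwise those bad parts would have merged into a larger component), so the auxiliary hypergraph $G'$ whose vertices are the bad parts and whose hyperedges record which bad parts a given $e\in E(G)$ meets has maximum order at most $k-1$. One then applies the \emph{induction hypothesis on $k$} to $G'$ (which lies in $\mathscr G(\Delta n,k-1)$) to cover the bad parts by ``bad patches'', and fattens both good parts and bad patches by their adjacent good parts. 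The recursions $m_{\Delta,k}(n)=2\Delta^2 k^2 n^2\, m_{\Delta n,k-1}(n)$ and $N_{\Delta,k}(n)=2\Delta k n^2\, N_{\Delta n,k-1}(n)$ then yield the stated closed forms. So the missing idea in your proposal is this induction on the hyperedge order, which replaces the uncontrolled absorption of small leftovers by a recursive call on a strictly lower-order hypergraph.
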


Before proving this theorem, let us comment on its implications for our model. In Section \ref{sec:coarse-grain-bi-graph}, we claimed that there is a good cover of a bipartite graph with four properties. This follows immediately by applying the above theorem to the interaction hypergraph of the Hamiltonian \eqref{graph-deformedH'} discussed in Example 2 of Appendix \ref{app:int-hypergraph}. We omit the details and proceed to prove Theorem \ref{thm:goodcover}.

\begin{proof}[Proof of Theorem \ref{thm:goodcover}]
Any connected hypergraph of maximum order at most $k=1$ has at most one vertex, so the statement is trivially true when $k=1$. For $k>1$, we proceed by induction with the base case $k=1$ being trivially true.

We define a \emph{good part} as a subset of $V(G)$ with at least $n$ vertices and at most $n+k$ vertices such that the induced sub-hypergraph is connected. One can construct a good part by picking a hyperedge, then another hyperedge adjacent to it, then another hyperedge adjacent to one of the last two, and so on. We stop when we have at least $n$ vertices. Since the maximum order of $G$ is at most $k$, this part will have at most $n+k$ vertices, so it is a good part---call it $B_1$. Since $|V(G)|\ge n$, there is always at least one good part.

If a connected component of the remaining vertices $V(G)\sm B_1$ has fewer than $n$ vertices, call it a \emph{bad part}---call it $C_1$; else, if it has at least $n$ vertices, then repeat the above procedure to construct a new good part. Repeat these steps as long as possible. In the end, we are left with good parts $\mathcal B = \{B_i\}$ and bad parts $\mathcal C = \{C_a\}$, and together, they form a connected partition of $V(G)$.

Given this partition of vertices, the hyperedges of $G$ can be divided into four types: I-B are those that are contained within a single good part, I-C are those that are contained within a single bad part, II are those that intersect multiple good parts but no bad part, and III are those that intersect at least one good part and at least one bad part. Note that there is no hyperedge that intersects multiple bad parts but no good part because otherwise, those bad parts form a larger connected component, which can either be a bigger bad part (reducing the number of bad parts by at least 1 while keeping the number of good parts the same), or yield a new good part (increasing the number of good parts by at least 1). Therefore, the above four types exhausts all hyperedges.

Consider a new hypergraph $G'$ whose vertex set is the collection of bad parts, i.e., $V(G') = \mathcal C$. The hyperedges $E(G')$ are constructed as follows: if there is a hyperedge $e\in E(G)$ that intersects more than one bad part, then add a hyperedge containing those bad parts to $E(G')$. For example, if $e$ intersects $C_1,C_2,C_5$ and no other $C_a$, then add the hyperedge $\{C_1,C_2,C_5\}$ to $E(G')$. Moreover, $e$ must intersect a good part as explained in the last paragraph, so each hyperedge in $E(G')$ contains at most $k-1$ vertices. Also, each bad part contains at most $n$ vertices of $G$, each with degree at most $\Delta$, so the maximum degree of $G'$ is at most $\Delta n$. Therefore, $G'\in\mathscr G(\Delta n,k-1)$.

We now construct a new collection $\mathcal C'$. In general, $G'$ is not connected, so pick a connected component $H'$ of $G'$.\footnote{$H'$ here should not be confused with the Hamiltonian.} If $H'$ has fewer than $n$ vertices, then add it to $\mathcal C'$. On the other hand, if $H'$ has at least $n$ vertices, then by the induction hypothesis, there is a good cover of $H'$ with parameters $m_{\Delta n,k-1}(n)$ and $N_{\Delta n,k-1}(n)$; add the elements of this good cover to $\mathcal C'$. We shall refer to the elements of $\mathcal C'$ as \emph{bad patches} for convenience.\footnote{They should not be referred to as ``parts'' because they do not form a partition of bad vertices in the original hypergraph $G$.}

Let us finally construct a good cover of $G$ using the collections $\mathcal B$ and $\mathcal C'$. For each good part $B\in \mathcal B$, define $G_B$ as the induced sub-hypergraph on the vertices in the union of $B$ and all the good parts that are adjacent to $B$. Similarly, for each bad patch $C'\in \mathcal C'$, define $G_{C'}$ as the induced sub-hypergraph on the vertices in the union of $C'$ and all the good parts adjacent to $C'$. We claim that $\{G_B : B\in \mathcal B\} \cup \{G_{C'} : C'\in \mathcal C'\}$ is a good cover of $G$.

\begin{enumerate}
\item Each $G_B$ and each $G_{C'}$ is a connected induced sub-hypergraph of $G$ by construction. Therefore, condition 1 of good cover is satisfied.

\item Any type I-B or type II hyperedge is contained in some $G_B$. Any type I-C or type III hyperedge is contained in some $G_{C'}$.

Each good part has at most $n+k$ vertices, so it has at most $\Delta (k-1) (n+k)$ adjacent parts. On the other hand, each vertex of $G'$ has degree at most $\Delta n$ and each hyperedge of $G'$ is contained in at most $m_{\Delta n,k-1}(n)$ bad patches by the definition of good cover, so each bad part is covered by at most $\Delta n \times m_{\Delta n,k-1}(n)$ bad patches.
\begin{itemize}
\item Any type I-B edge contained in a good part $B$ is covered by (i) $G_B$, (ii) all $G_{\bar B}$, where $\bar B$ is adjacent to $B$, and (iii) all $G_{C'}$, where $C'$ is adjacent to $B$. This gives
\ie
m^{\text{I-B}}_{\Delta,k}(n) &\le 1+ \Delta (k-1) (n+k) \times \Delta n \times m_{\Delta n,k-1}(n)
\\
&\le 2\Delta^2 k n^2 m_{\Delta n,k-1}(n)~,
\fe
where we used $n \ge k$.

\item Any type II edge that intersects good parts, say, $B_1,\ldots,B_\ell$ (with $\ell\le k$) is covered by (i) $G_{B_1},\ldots, G_{B_\ell}$, (ii) some $G_{\bar B}$, where $\bar B$ is adjacent to $B_j$ for some $1\le j\le \ell$, and (iii) some $G_{C'}$, where $C'$ is adjacent to $B_j$ for some $1\le j\le \ell$. This gives
\ie
m^{\text{II}}_{\Delta,k}(n) \le \ell \times m^{\text{I-B}}_{\Delta,k}(n) \le 2\Delta^2 k^2 n^2 m_{\Delta n,k-1}(n)~.
\fe

\item Any type I-C edge contained in single a bad part $C$ is covered by all $G_{C'}$, where $C'$ is a bad patch that contains $C$. This gives
\ie
m^{\text{I-C}}_{\Delta,k}(n) \le 
\Delta n m_{\Delta n,k-1}(n)~.
\fe

\item Any type III edge corresponds to a unique hyperedge of $G'$, so it is covered by $m_{\Delta n,k-1}(n)$ bad patches. This gives
\ie
m^{\text{III}}_{\Delta,k}(n) \le m_{\Delta n,k-1}(n)~.
\fe
\end{itemize}
Therefore, we can satisfy condition 2 of good cover by choosing
\ie
m_{\Delta,k}(n) = 2\Delta^2 k^2 n^2 m_{\Delta n,k-1}(n)~.
\fe
Solving the recursion relation gives
\ie
m_{\Delta,k}(n) = 2^k \Delta^{2k} (k!)^2 n^{k(k+1)}~,
\fe
which grows polynomially in $n$ as required.

\item Each good part has at least $n$ vertices and at most $n+k$ vertices. Each bad patch has at least one and at most $N_{\Delta n,k-1}(n)$ bad parts, which translates to at least one and at most $n \times N_{\Delta n,k-1}(n)$ vertices of $G$.
\begin{itemize}
\item Since $G_B$ contains the good part $B$, it has at least $n$ vertices. And it contains at most $1+ \Delta (k-1) (n+k)$ good parts, each with at most $n+k$ vertices, so it has at most
\ie
[1+ \Delta (k-1) (n+k)]\times (n+k) \le 4\Delta k n^2
\fe
vertices, where we used $n\ge k$.

\item Since each bad patch $C'$ is adjacent to at least one good part, $G_{C'}$ contains at least $n$ vertices. Moreover, each bad patch $C'$ contains at most $N_{\Delta n,k-1}(n)$ bad parts, and each bad part is adjacent to at most $\Delta (k-1) n$ good parts, so $G_{C'}$ contains at most 
\ie
[n+\Delta (k-1) n \times (n+k)] N_{\Delta n,k-1}(n) \le 2\Delta k n^2 N_{\Delta n,k-1}(n)
\fe
vertices.
\end{itemize}
We can satisfy condition 3 of good cover by choosing
\ie
N_{\Delta,k}(n) = 2\Delta k n^2 N_{\Delta n,k-1}(n)~.
\fe
Solving the recursion relation gives
\ie
N_{\Delta,k}(n) = 2^k \Delta^k\,k!\, n^{(k+1)(k+2)/2}~,
\fe
which also grows polynomially in $n$ as required.

\item It is easy to check that if any two elements of $\{G_B : B\in \mathcal B\} \cup \{G_{C'} : C'\in \mathcal C'\}$ intersect, they do so in a good part, which is connected and has at least $n$ vertices by construction. Therefore, condition 4 is also satisfied.
\end{enumerate}
This completes the proof of Theorem \ref{thm:goodcover}.
\end{proof}

\section{Upper bound on the martingale function}\label{sec:graph-delta}

In this appendix, we derive the upper bound \eqref{up-bnd-delta} on the martingale function for the deformed model given by the Hamiltonian \eqref{graph-deformedH'}.

Let $A$ be a connected induced left-closed subgraph of $\mathcal G$ on the vertices $\widehat V_A \sqcup V_A$ with $|V_A| > 2D \widehat D$. Let $\mathfrak h_A$ be the $|\widehat V_A| \times |V_A|$ biadjacency matrix of $A$, i.e., $\mathfrak h_A$ is the restriction of $\mathfrak h$ to the rows and columns associated with $\widehat V_A$ and $V_A$, respectively. Let $\mathcal B_A$ be a basis of $\ker \mathfrak h_A$ and $\nu_A := |\mathcal B_A| = \dim \ker \mathfrak h_A$.

Let $\mathcal H_A := \bigotimes_{v \in V_A} \mathcal H_v$ be the tensor product Hilbert space of qubits in $A$. For any $a\in\{0,1\}^{|V_A|}$, let us also define the product state $|a\>_A:= \prod_{v\in V_A} X_v^{a_v} |\Zero\>_A$, where $|\Zero\>_A := |0\cdots0\>_A$. In other words, $|a\>_A$ is the eigenstate of $Z_v$ with eigenvalue $a_v$.

Consider the restriction of the Hamiltonian $H'$ \eqref{graph-deformedH'} to $A$,
\ie
H'_A := \sum_{v,\hat v\in A\,:\, d(v,\hat v)=3} P_v Q_{\hat v}~.
\fe
By the assumptions on $A$, using an argument similar to the one in Section \ref{sec:graph-proofofGSD}, we can show that the ground states of $H'_A$ are,
\ie
&|\Plus\>_A := |{+}\cdots{+}\>_A~,\qquad |a\>_A = \eta_{a|A} |\Zero\>_A~,
\fe
where $a\in \ker \mathfrak h_A$, and we defined the $\mathbb Z_2$ symmetry operator
\ie
\eta_{a|A} := \prod_{v\in V_A} X_v^{a_v}~,
\fe
which commutes with the Hamiltonian $H'_A$. There are $1+2^{\nu_A}$ ground states in total. Moreover, while the states $|a\>_A$ are orthogonal to each other, they are not (but almost) orthogonal to $|\Plus\>_A$, i.e.,
\ie
{}_A\<\Plus|a\>_A = {}_A\<\Plus|\Zero\>_A = \frac{1}{2^{|V_A|/2}} > 0~.
\fe
We use $\mathcal H^0_A$ to denote the subspace spanned by $|\Plus\>_A$ and $|a\>_A$, and $\Pi_A$ to denote the orthogonal projection onto $\mathcal H^0_A$, i.e., $\im( \Pi_A) = \mathcal H^0_A$.

Now, consider another connected induced left-closed subgraph $B$ on vertex-set $\widehat V_B \sqcup V_B$, with $|V_B| > 2D \widehat D$, such that $V_{A\cap B} \ne \emptyset$ and $\widehat V_{A\cap B} \ne \emptyset$. Recall that (i) both $A\cup B$ and $A\cap B$ are left-closed, (ii) while $A\cup B$ is connected, $A\cap B$ is not necessarily connected, and (iii) while $A\cap B$ is induced, $A\cup B$ is not necessarily induced. Define $A'$ as the subgraph with vertex-set $\widehat V_{A'} \sqcup V_{A'} := (\widehat V_A \sm \widehat V_B) \sqcup (V_A \sm V_B)$ and no edges, and define $B'$ similarly. It is clear that $A'$ and $B'$ are neither left-closed, nor connected, nor induced.

With the above conventions, the Hilbert space on $A\cup B$ factorises as:
\ie\label{graph-factorisation}
\mathcal H_{A\cup B} = \mathcal H_{A'} \otimes \mathcal H_{A\cap B} \otimes \mathcal H_{B'}~.
\fe
Moreover, since $A$, $B$, $A\cap B$, and $A\cup B$ are all left-closed, the matrix $\mathfrak h_{A\cup B}$ has the following structure:
\ie
\mathfrak h_{A\cup B} = \begin{pNiceMatrix}[first-row,first-col]
 & {\scriptstyle V_{A'}} & {\scriptstyle V_{A\cap B}} & {\scriptstyle V_{B'}} \\
{\scriptstyle \widehat V_{A'}} & * & * & 0 \\
{\scriptstyle \widehat V_{A\cap B}} & 0 & * & 0 \\
{\scriptstyle \widehat V_{B'}} & 0 & * & *\\
\CodeAfter
  \begin{tikzpicture}
  \node [draw=red, thick, rounded corners=2pt, fit = (1-1) (2-2)] {} ;
  \node [draw=green!90!black, thick, rounded corners=2pt, inner sep = 7pt, fit = (2-2) (2-2) ] {} ;
  \node [draw=blue, thick, rounded corners=2pt, fit = (2-2) (3-3) ] {} ;
  \end{tikzpicture}
\end{pNiceMatrix}~,
\fe
where $*$'s represent nontrivial matrices of appropriate dimensions, and the red, green, and blue blocks correspond to $\mathfrak h_A$, $\mathfrak h_{A\cap B}$, and $\mathfrak h_B$, respectively. It follows that, for any $c\in \ker \mathfrak h_{A\cup B}$, we have
\ie\label{graph-factorisation2}
&c_{|A} = c_{|A'} \oplus c_{|A\cap B} \in \ker \mathfrak h_A~,
\\
&c_{|A\cap B} \in \ker \mathfrak h_{A\cap B}~,
\\
&c_{|B} = c_{|A\cap B} \oplus c_{|B'} \in \ker \mathfrak h_B~,
\fe
where $c_{|A}$ denotes the restriction of $c$ to $A$ and so on. Hence, for any $|a\>_A$ and $|b\>_B$, we have
\ie\label{graph-innerproduct}
{}_A\<\Plus|\Plus\>_B &= |\Plus\>_{B'}~{}_{A'}\<\Plus|~,
\\
{}_A\<a|\Plus\>_B &= \frac1{2^{|V_{A\cap B}|/2}} |\Plus\>_{B'}~{}_{A'}\<a_{|A'}|~,
\\
{}_A\<\Plus|b\>_B &= \frac1{2^{|V_{A\cap B}|/2}} |b_{|B'}\>_{B'}~{}_{A'}\<\Plus|~,
\\
{}_A\<a|b\>_B &= {}_{A'}\<a_{|A'}| \left( {}_{A\cap B}\<\Zero| \prod_{v\in V_{A\cap B}} X_v^{a_v + b_v} |\Zero\>_{A\cap B} \right) |b_{|B'}\>_{B'} 
\\
&= \begin{cases}
|b_{|B'}\>_{B'} ~{}_{A'}\<a_{|A'}|~, & \text{if}~~ a_{|A\cap B} = b_{|A\cap B}~,
\\
0~,&\text{otherwise}~.
\end{cases}
\fe
Furthermore, if $a\in \ker \mathfrak h_A$ and $b\in \ker \mathfrak h_B$, and they satisfy $a_{|A\cap B} = b_{|A\cap B}$, then
\ie\label{graph-factorisation3}
a \oplus b_{|B'} = a_{|A'} \oplus b \in \ker \mathfrak h_{A\cup B}~,
\fe
or equivalently, there is a $c\in \mathfrak h_{A\cup B}$ such that $c|_A = a$ and $c|_B = b$.

Finally, let us comment on the projections onto the respective ground state spaces. Note that $ \Pi_A  \Pi_{A\cup B} =  \Pi_{A\cup B}  \Pi_A =  \Pi_{A\cup B}$, and similarly for $ \Pi_B$, because of frustration-freeness. It follows that $ \Pi_A -  \Pi_{A\cup B}$ and $ \Pi_B -  \Pi_{A\cup B}$ are orthogonal projections too. In general, $\Pi_A \wedge  \Pi_B \ne \Pi_{A\cup B}$, or equivalently, $\mathcal H^0_{A\cup B} \ne \mathcal H^0_A \cap \mathcal H^0_B$. While frustration-free implies $\mathcal H^0_{A\cup B} \subseteq \mathcal H^0_A \cap \mathcal H^0_B$, the other direction is not true in general. Fortunately, it is true in our model, i.e., $\Pi_A \wedge  \Pi_B = \Pi_{A\cup B}$, or equivalently, $\mathcal H^0_{A\cup B} = \mathcal H^0_A \cap \mathcal H^0_B$. For a proof, see Appendix \ref{app:proj-wedge-union}.

We are now ready to upper bound the martingale function. Recall its definition:
\ie\label{graph-delta-def}
\delta(A,B) := \Vert  \Pi_A  \Pi_B -  \Pi_A \wedge  \Pi_B \Vert = \Vert  \Pi_A  \Pi_B -  \Pi_{A\cup B} \Vert = \Vert ( \Pi_A - \Pi_{A\cup B})( \Pi_B -  \Pi_{A\cup B}) \Vert~,
\fe
Since the norm of product of orthogonal projections is always at most $1$, we have $0\le \delta(A,B) \le 1$. Below, we will show that
\ie\label{graph-delta-upper-bound}
\delta(A,B) &\le \frac{3\sqrt 2}{\sqrt 2-1} \times 2^{-|V_{A\cap B}|/4\max(D,\widehat D)}~.
\fe
In order to prove this bound, we consider an equivalent definition of $\delta(A,B)$ in terms of states instead of projections:
\ie\label{graph-delta-state-def}
\delta(A,B) &= \sup\{ |\<\psi|\chi\>|: |\psi\>,|\chi\> \in \mathcal H_{A\cup B}\,,\, \<\psi|\psi\> \le 1\,,\,\<\chi|\chi\> \le 1\,,
\\
&\qquad \qquad \qquad \quad( \Pi_A - \Pi_{A\cup B})|\psi\> =|\psi\>\,,\, ( \Pi_B - \Pi_{A\cup B})|\chi\> =|\chi\>\}~.
\fe
Note that the condition $( \Pi_A - \Pi_{A\cup B})|\psi\> =|\psi\>$ means that $|\psi\> \in (\mathcal H^0_A \otimes \mathcal H_{B'}) \cap (\mathcal H^0_{A\cup B})^\perp$, i.e., $|\psi\>$ is a ground state in $A$ and it is orthogonal to the ground states in $A\cup B$,\footnote{Indeed, $ \Pi_A |\psi\> =  \Pi_A ( \Pi_A - \Pi_{A\cup B})|\psi\> = ( \Pi_A - \Pi_{A\cup B})|\psi\> = |\psi\>$, whereas $ \Pi_{A\cup B}|\psi\> =  \Pi_{A\cup B}( \Pi_A - \Pi_{A\cup B})|\psi\> = ( \Pi_{A\cup B} - \Pi_{A\cup B})|\psi\> = 0$.} and similarly for $|\chi\>$. So we write
\ie
&|\psi\> = |\Plus\>_A |\psi_+\>_{B'} + \sum_{a\in \ker \mathfrak h_A} |a\>_A |\psi_a\>_{B'}~,
\\
&|\chi\> = |\chi_+\>_{A'} |\Plus\>_B + \sum_{b\in \ker \mathfrak h_B} |\chi_b\>_{A'} |b\>_B~,
\fe
for some states $|\psi_{+,a}\>_{B'} \in \mathcal H_{B'}$ and $|\chi_{+,b}\>_{A'} \in \mathcal H_{A'}$, so that $|\psi\>$ and $|\chi\>$ are ground states in $A$ and $B$, respectively.

We now restrict them to be orthogonal to the ground states in $A\cup B$. First, we have 
\ie\label{graph-ortho1}
0 = {}_{A\cup B}\<\Plus|\psi\> = {}_{B'}\<\Plus|\psi_+\>_{B'} + \frac{1}{2^{|V_A|/2}} \sum_{a\in \ker \mathfrak h_A}~{}_{B'}\<\Plus|\psi_a\>_{B'}~.
\fe
Next, for any $c\in \ker \mathfrak h_{A\cup B}$, we have  
\ie\label{graph-ortho2}
0 &= {}_{A\cup B}\<c|\psi\> = \frac{1}{2^{|V_A|/2}}~{}_{B'}\<c_{|B'}|\psi_+\>_{B'} + {}_{B'}\<c_{|B'}|\psi_{c_{|A}}\>_{B'}~,
\fe
where we used the fact that $c_{|A} \in \ker \mathfrak h_A$. Similarly, we have
\ie\label{graph-ortho3}
&{}_{A\cup B}\<\Plus|\chi\> = 0 \implies {}_{A'}\<\Plus|\chi_+\>_{A'} + \frac{1}{2^{|V_B|/2}} \sum_{b\in \ker \mathfrak h_B}~{}_{A'}\<\Plus|\chi_b\>_{A'} = 0~,
\\
&{}_{A\cup B}\<c|\chi\> = 0 \implies \frac{1}{2^{|V_B|/2}}~{}_{A'}\<c_{|A'}|\chi_+\>_{A'} + {}_{A'}\<c_{|A'}|\chi_{c_{|B}}\>_{A'} = 0~.
\fe
The constraints on the norms give\footnote{Here, we use the inequality
\ie
{}_{B'}\<\psi_+|\psi_a\>_{B'} + {}_{B'}\<\psi_a|\psi_+\>_{B'} \ge -\frac1\varepsilon~{}_{B'}\<\psi_+|\psi_+\>_{B'} - \varepsilon~{}_{B'}\<\psi_a|\psi_a\>_{B'}~,
\fe
for any $\varepsilon>0$, which follows from $\Vert |\psi_+\>_{B'} + \varepsilon |\psi_a\>_{B'} \Vert \ge 0$. We then substitute $\varepsilon = 2^{\nu_A/2}$.}
\ie\label{graph-norm1}
1&\ge\<\psi|\psi\>
\\
&= {}_{B'}\<\psi_+|\psi_+\>_{B'} + \sum_{a\in \ker \mathfrak h_A}~{}_{B'}\<\psi_a|\psi_a\>_{B'} + \frac{1}{2^{|V_A|/2}} \sum_{a\in \ker \mathfrak h_A} \left( {}_{B'}\<\psi_+|\psi_a\>_{B'} + {}_{B'}\<\psi_a|\psi_+\>_{B'} \right)
\\
&\ge \left( 1-\frac{2^{\nu_A/2}}{2^{|V_A|/2}} \right) \left( {}_{B'}\<\psi_+|\psi_+\>_{B'} + \sum_{a\in \ker \mathfrak h_A}~{}_{B'}\<\psi_a|\psi_a\>_{B'} \right)~,
\fe
and similarly,
\ie\label{graph-norm2}
1 \ge \left( 1-\frac{2^{\nu_B/2}}{2^{|V_B|/2}} \right) \left( {}_{A'}\<\chi_+|\chi_+\>_{A'} + \sum_{b\in \ker \mathfrak h_B}~{}_{A'}\<\chi_b|\chi_b\>_{A'} \right)~.
\fe

Now, the overlap between $|\psi\>$ and $|\chi\>$ is
\ie
\<\psi|\chi\> &= {}_{A'}\<\Plus|\chi_+\>_{A'}~{}_{B'}\<\psi_+|\Plus\>_{B'}
\\
&\quad + \frac{1}{2^{|V_{A\cap B}|/2}} \sum_{a\in \ker \mathfrak h_A} ~{}_{A'}\<a_{|A'}|\chi_+\>_{A'}~{}_{B'}\<\psi_a|\Plus\>_{B'}
\\
&\quad + \frac{1}{2^{|V_{A\cap B}|/2}} \sum_{b\in \ker \mathfrak h_B} ~{}_{A'}\<\Plus|\chi_b\>_{A'}~{}_{B'}\<\psi_+|b_{|B'}\>_{B'}
\\
&\quad + \sum_{a\in \ker \mathfrak h_A, b\in \ker \mathfrak h_B \atop a_{|A\cap B} = b_{|A \cap B}} ~{}_{A'}\<a_{|A'}|\chi_b\>_{A'}~{}_{B'}\<\psi_a|b_{|B'}\>_{B'}~,
\fe
where we used \eqref{graph-innerproduct}. Using \eqref{graph-factorisation3}, the last line can be written as
\ie
\sum_{a\in \ker \mathfrak h_A, b\in \ker \mathfrak h_B \atop a_{|A\cap B} = b_{|A \cap B}} ~{}_{A'}\<a_{|A'}|\chi_b\>_{A'}~{}_{B'}\<\psi_a|b_{|B'}\>_{B'} = \sum_{c \in \ker \mathfrak h_{A\cup B}} ~{}_{A'}\<c_{|A'}|\chi_{c_{|B}}\>_{A'}~{}_{B'}\<\psi_{c_{|A}}|c_{|B'}\>_{B'}~.
\fe
Then, using the constraints \eqref{graph-ortho1}, \eqref{graph-ortho2}, and \eqref{graph-ortho3}, we have
\ie
\<\psi|\chi\> &= \frac{1}{2^{(|V_A|+|V_B|)/2}} \sum_{a\in \ker \mathfrak h_A, b\in \ker \mathfrak h_B}~{}_{A'}\<\Plus|\chi_b\>_{A'}~{}_{B'}\<\psi_a|\Plus\>_{B'}
\\
&\quad + \frac{1}{2^{|V_{A\cap B}|/2}} \sum_{a\in \ker \mathfrak h_A} ~{}_{A'}\<a_{|A'}|\chi_+\>_{A'}~{}_{B'}\<\psi_a|\Plus\>_{B'}
\\
&\quad + \frac{1}{2^{|V_{A\cap B}|/2}} \sum_{b\in \ker \mathfrak h_B} ~{}_{A'}\<\Plus|\chi_b\>_{A'}~{}_{B'}\<\psi_+|b_{|B'}\>_{B'}
\\
&\quad + \frac{1}{2^{(|V_A|+|V_B|)/2}} \sum_{c \in \ker \mathfrak h_{A\cup B}} ~{}_{A'}\<c_{|A'}|\chi_+\>_{A'}~{}_{B'}\<\psi_+|c_{|B'}\>_{B'}~.
\fe
Taking the absolute value gives
\ie
|\<\psi|\chi\>| &\le \frac{1}{2^{(|V_A|+|V_B|)/2}} \sum_{a\in \ker \mathfrak h_A, b\in \ker \mathfrak h_B}~|{}_{A'}\<\Plus|\chi_b\>_{A'}|~|{}_{B'}\<\psi_a|\Plus\>_{B'}|
\\
&\quad + \frac{1}{2^{|V_{A\cap B}|/2}} \sum_{a\in \ker \mathfrak h_A} ~|{}_{A'}\<a_{|A'}|\chi_+\>_{A'}|~|{}_{B'}\<\psi_a|\Plus\>_{B'}|
\\
&\quad + \frac{1}{2^{|V_{A\cap B}|/2}} \sum_{b\in \ker \mathfrak h_B} ~|{}_{A'}\<\Plus|\chi_b\>_{A'}|~|{}_{B'}\<\psi_+|b_{|B'}\>_{B'}|
\\
&\quad + \frac{1}{2^{(|V_A|+|V_B|)/2}} \sum_{c \in \ker \mathfrak h_{A\cup B}} ~|{}_{A'}\<c_{|A'}|\chi_+\>_{A'}|~|{}_{B'}\<\psi_+|c_{|B'}\>_{B'}|~.
\fe
Let us take care of each line separately using Cauchy-Schwarz inequality and the norm constraints \eqref{graph-norm1} and \eqref{graph-norm2}. First, we have
\ie
&\sum_{a\in \ker \mathfrak h_A, b\in \ker \mathfrak h_B}~|{}_{A'}\<\Plus|\chi_b\>_{A'}|~|{}_{B'}\<\psi_a|\Plus\>_{B'}|
\\
&\le \sqrt{2^{\nu_B} \sum_{b\in \ker \mathfrak h_B} |{}_{A'}\<\Plus|\chi_b\>_{A'}|^2}~\sqrt{2^{\nu_A} \sum_{a\in \ker \mathfrak h_A} |{}_{B'}\<\psi_a|\Plus\>_{B'}|^2}
\\
&\le 2^{(\nu_A+\nu_B)/2} \sqrt{\sum_{b\in \ker \mathfrak h_B} {}_{A'}\<\chi_b|\chi_b\>_{A'}}~\sqrt{\sum_{a\in \ker \mathfrak h_A} {}_{B'}\<\psi_a|\psi_a\>_{B'}}
\\
&\le \frac{2^{(\nu_A+\nu_B)/2}}{\sqrt{\left( 1-\frac{2^{\nu_B/2}}{2^{|V_B|/2}} \right) \left( 1-\frac{2^{\nu_A/2}}{2^{|V_A|/2}} \right)}}~.
\fe
Next, we have
\ie
&\sum_{a\in \ker \mathfrak h_A} ~|{}_{A'}\<a_{|A'}|\chi_+\>_{A'}|~|{}_{B'}\<\psi_a|\Plus\>_{B'}|
\\
&\le \sqrt{\sum_{a\in \ker \mathfrak h_A} |{}_{A'}\<a_{|A'}|\chi_+\>_{A'}|^2}~\sqrt{\sum_{a\in \ker \mathfrak h_A} |{}_{B'}\<\psi_a|\Plus\>_{B'}|^2}
\\
&\le \sqrt{2^{\nu_{A\cap B}} \sum_{a' \in \{0,1\}^{|V_{A'}|}} |{}_{A'}\<a'|\chi_+\>_{A'}|^2}~\sqrt{\sum_{a\in \ker \mathfrak h_A} |{}_{B'}\<\psi_a|\Plus\>_{B'}|^2}
\\
&\le 2^{\nu_{A\cap B}/2} \sqrt{{}_{A'}\<\chi_+|\chi_+\>_{A'}}~\sqrt{\sum_{a\in \ker \mathfrak h_A} {}_{B'}\<\psi_a|\psi_a\>_{B'}}
\\
&\le \frac{2^{\nu_{A\cap B}/2}}{\sqrt{\left( 1-\frac{2^{\nu_B/2}}{2^{|V_B|/2}} \right) \left( 1-\frac{2^{\nu_A/2}}{2^{|V_A|/2}} \right)}}~,
\fe
where, in the second inequality, we used the fact that if $a_1,a_2\in \ker\mathfrak h_A$ satisfy $a_{1|A'}=a_{2|A'}$, then $a_1+a_2 \in \ker \mathfrak h_{A\cap B}$. Similarly, we have
\ie
\sum_{b\in \ker \mathfrak h_B} ~|{}_{A'}\<\Plus|\chi_b\>_{A'}|~|{}_{B'}\<\psi_+|b_{|B'}\>_{B'}| \le \frac{2^{\nu_{A\cap B}/2}}{\sqrt{\left( 1-\frac{2^{\nu_B/2}}{2^{|V_B|/2}} \right) \left( 1-\frac{2^{\nu_A/2}}{2^{|V_A|/2}} \right)}}~.
\fe
And finally, we have
\ie
&\sum_{c \in \ker \mathfrak h_{A\cup B}} ~|{}_{A'}\<c_{|A'}|\chi_+\>_{A'}|~|{}_{B'}\<\psi_+|c_{|B'}\>_{B'}|
\\
&\le 2^{\nu_{A\cap B}} \sum_{a'\in \{0,1\}^{|V_{A'}|},b'\in \{0,1\}^{|V_{B'}|}} |{}_{A'}\<a'|\chi_+\>_{A'}|~|{}_{B'}\<\psi_+|b'\>_{B'}|
\\
&\le 2^{\nu_{A\cap B}} \sqrt{2^{|V_{A'}|} \sum_{a'\in \{0,1\}^{|V_{A'}|}} |{}_{A'}\<a'|\chi_+\>_{A'}|^2}~\sqrt{2^{|V_{B'}|} \sum_{b'\in \{0,1\}^{|V_{B'}|}} |{}_{B'}\<\psi_+|b'\>_{B'}|^2}
\\
&= 2^{\nu_{A\cap B} + (|V_{A'}|+|V_{B'}|)/2} \sqrt{{}_{A'}\<\chi_+|\chi_+\>_{A'}}~\sqrt{{}_{B'}\<\psi_+|\psi_+\>_{B'}}
\\
&\le \frac{2^{\nu_{A\cap B} + (|V_{A'}|+|V_{B'}|)/2}}{\sqrt{\left( 1-\frac{2^{\nu_B/2}}{2^{|V_B|/2}} \right) \left( 1-\frac{2^{\nu_A/2}}{2^{|V_A|/2}} \right)}}~.
\fe
Combining these inequalities, we have
\ie
|\<\psi|\chi\>| &\le \frac{1}{\sqrt{\left( 1-\frac{2^{\nu_B/2}}{2^{|V_B|/2}} \right) \left( 1-\frac{2^{\nu_A/2}}{2^{|V_A|/2}} \right)}} \left( \frac{2^{(\nu_A+\nu_B)/2}}{2^{(|V_A|+|V_B|)/2}} + \frac{2\times 2^{\nu_{A\cap B}/2}}{2^{|V_{A\cap B}|/2}} + \frac{2^{\nu_{A\cap B}}}{2^{|V_{A\cap B}|}} \right)~,
\fe
where we used $|V_A| = |V_{A'}| + |V_{A\cap B}|$ and $|V_B| =  |V_{A\cap B}| + |V_{B'}|$ to simplify the last term in the sum.

To simplify this inequality further, we need an upper bound on $\nu_A$ in terms of $|V_A|$, $D$, and $\widehat D$:
\ie
\nu_A \le \left(1-\frac1{2\Delta}\right)|V|~,
\fe
where $\Delta = \max(D,\widehat D)$. For a proof of this inequality, see Appendix \ref{app:nu-ineq}. Using this upper bound, the inequality simplifies to
\ie
|\<\psi|\chi\>| \le \frac{1}{\sqrt{\left( 1-\frac{1}{2^{|V_B|/4\Delta}} \right) \left( 1-\frac{1}{2^{|V_A|/4\Delta}} \right)}} \left( \frac{1}{2^{(|V_A|+|V_B|)/4\Delta}} + \frac{1}{2^{|V_{A\cap B}|/4\Delta}} + \frac{1}{2^{|V_{A\cap B}|/2\Delta}} \right)~.
\fe
Since $|V_A|,|V_B| \ge 2D \widehat D \ge 2\Delta$, we can further simplify this inequality to
\ie
|\<\psi|\chi\>| \le \frac{3\sqrt 2}{\sqrt 2-1} \times 2^{-|V_{A\cap B}|/4\Delta}~,
\fe
where we used $|V_A|,|V_B| \ge |V_{A\cap B}|$ to simplify the first term in the sum, and then used the fact that $\epsilon^2 < \epsilon$ for $0<\epsilon<1$ on the first and the third terms. Since this inequality holds for all $|\psi\>$ and $|\chi\>$ satisfying the constraints in \eqref{graph-delta-state-def}, we get the upper bound \eqref{graph-delta-upper-bound} on $\delta(A,B)$ for all sufficiently large $A$ and $B$.

\subsection{Proof of $\Pi_A \wedge \Pi_B = \Pi_{A\cup B}$}\label{app:proj-wedge-union}
This is equivalent to the statement $\mathcal H^0_{A\cup B} = \mathcal H^0_A \cap \mathcal H^0_B$. The direction $\mathcal H^0_{A\cup B} \subseteq \mathcal H^0_A \cap \mathcal H^0_B$ follows from frustration-freeness. Let us prove the other direction.

Consider a state $|\psi\> \in \mathcal H^0_A \cap \mathcal H^0_B$. It satisfies $P_v Q_{\hat v} |\psi\> = 0$ for all $v,\hat v \in A$ with $d(v,\hat v) = 3$ and for all $v,\hat v \in B$ with $d(v,\hat v) = 3$. But for this state to be contained in $\mathcal H^0_{A\cup B}$, the above constraint should hold for all $v,\hat v \in A\cup B$ with $d(v,\hat v) = 3$. The subtlety here is that it is possible to have $v\in A'$ and $\hat v\in B'$, or vice versa. Then, the constraint $P_v Q_{\hat v} |\psi\> = 0$ does not follow immediately. Nonetheless, we proceed as follows.

Our argument here is very similar to the one in Appendix \ref{app:eq-wt}. First, we write
\ie
|\psi\> = \sum_{\sigma\in \{+,1\}^{|V_{A\cup B}|}} \psi_\sigma |\sigma\>_{A\cup B}~.
\fe
where $|\sigma\>_{A\cup B}$ is an X-state and $\psi_\sigma$ is the ``wave function'' in the ``sign-configuration space''. Let $v\in A'$ and $\hat v \in B'$ with $d(v,\hat v) = 3$. (The case where $v\in B'$ and $\hat v \in A'$ with $d(v,\hat v) = 3$ is similar.) Then, the constraint $P_v Q_{\hat v} |\psi\> = 0$ is satisfied if (and only if) the following is true: for any $|\sigma\>_{A\cup B} \ne |\Plus\>_{A\cup B}$, if $\sigma_v = -$, then $\psi_{\sigma'} = \psi_{\sigma}$, where $|\sigma'\>_{A\cup B} = F_{\hat v} |\sigma\>_{A\cup B}$. In other words, we want to show that the flip $F_{\hat v}$ can always be ``simulated'' using only flips contained entirely in $A$ or in $B$.

Pick a vertex $\hat w \in \widehat V_{A\cap B}$ (recall that $\widehat V_{A\cap B} \ne \emptyset$ by assumption). Since $A$ and $B$ are connected, there are two paths, one from $v$ to $\hat w$ within $A$ and another from $\hat w$ to $\hat v$ within $B$. This yields a path in $A\cup B$ from $v$ to $\hat v$ via $\hat w$. One can apply the sequence of special flips described in scenario 1 of Appendix \ref{app:eq-wt} along this path such that each special flip is contained entirely in $A$ or in $B$. Therefore, we have simulated the flip $F_{\hat v}$.

\subsection{Upper bound on $\nu_A$}\label{app:nu-ineq}
For any (not necessarily bipartite) graph $\mathcal G$ with vertex-set $\mathcal V$, adjacency matrix $\mathfrak a$, and maximum degree at most $\Delta$, we have the inequality
\ie\label{nu-a-ineq}
\dim \ker \mathfrak a \le \left(1 - \frac1\Delta \right) |\mathcal V|~.
\fe
or equivalently,
\ie
\rank \mathfrak a \ge \frac{|\mathcal V|}\Delta~.
\fe
One can derive this inequality from the following significantly general inequality:
\begin{theorem}[{{\!\!\cite[Proposition 2.4]{MinrankZeroforcing}}}]\label{thm:graph-nullity}
Let $\mathbb F$ be any field and $\mathcal G = (\mathcal V,\mathcal E)$ be a simple undirected graph. Let $S$ be any symmetric square matrix of size $|\mathcal V|$ with entries in $\mathbb F$ such that $S_{xy} \ne 0$ if $\{x,y\} \in \mathcal E$ and $0$ otherwise. Then,
\ie
\dim_{\mathbb F} \ker_{\mathbb F} S \le Z(\mathcal G)~,
\fe
where $Z(\mathcal G)$ is the \emph{zero forcing number} of $\mathcal G$.
\end{theorem}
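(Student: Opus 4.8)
The plan is to prove the bound via the classical ``a kernel element is determined by its restriction to a zero forcing set'' argument. Recall that $Z(\mathcal G)$ is the smallest cardinality of a \emph{zero forcing set}: a set $B\subseteq\mathcal V$ of black vertices such that iterating the colour-change rule — whenever a black vertex $u$ has exactly one white neighbour $w$, recolour $w$ black — eventually turns every vertex black. Fix such a $B$ with $|B|=Z(\mathcal G)$, together with a sequence of forces $(u_1\!\to\! w_1),\dots,(u_r\!\to\! w_r)$, $r=|\mathcal V|-|B|$, that realises it: the $w_t$ are exactly the vertices of $\mathcal V\setminus B$, each listed once, and when the force $(u_t\!\to\! w_t)$ is carried out the vertex $u_t$ is already black while $w_t$ is its unique white neighbour at that instant.

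It then suffices to show that the $\mathbb F$-linear restriction map $\rho\colon\ker_{\mathbb F}S\to\mathbb F^{B}$, $x\mapsto(x_v)_{v\in B}$, is injective, since $\dim_{\mathbb F}\mathbb F^{B}=|B|=Z(\mathcal G)$. So assume $Sx=0$ and $x_v=0$ for all $v\in B$; I claim, by induction on $t$, that $x_{w_t}=0$ for every $t$. When the force $(u_t\!\to\! w_t)$ is carried out the vertex $u_t$ is black, hence either $u_t\in B$ or $u_t=w_s$ for some $s<t$, and in either case $x_{u_t}=0$. Likewise every neighbour of $u_t$ other than $w_t$ is black at that instant — this is precisely what the colour-change rule requires — so the same dichotomy gives $x_v=0$ for each such neighbour $v$. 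Extracting the $u_t$-th coordinate of the equation $Sx=0$ yields $S_{u_tu_t}x_{u_t}+\sum_{v\sim u_t}S_{u_tv}x_v=0$, in which every term vanishes except $S_{u_tw_t}x_{w_t}$; since $\{u_t,w_t\}\in\mathcal E$ we have $S_{u_tw_t}\neq0$, so $x_{w_t}=0$, completing the induction. After the final force every vertex is black, so $x\equiv0$, $\rho$ is injective, and $\dim_{\mathbb F}\ker_{\mathbb F}S\le Z(\mathcal G)$.

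I do not expect a genuine obstacle in this argument; the one point requiring care is the bookkeeping of the forcing order, so that ``black at the instant the force $(u_t\!\to\! w_t)$ is carried out'' is unambiguous and so that the inductive hypothesis legitimately applies both to $u_t$ itself — which may be a previously forced vertex rather than an original element of $B$ — and, simultaneously, to all neighbours of $u_t$ except $w_t$. I would also record that the proof uses nothing about $\mathbb F$ beyond its being a field and nothing at all about the diagonal entries $S_{vv}$, which is exactly why the bound holds uniformly over all symmetric matrices with the prescribed off-diagonal zero pattern; in particular it applies to $S=\mathfrak a$, the setting in which the theorem enters the derivation of \eqref{nu-a-ineq}.
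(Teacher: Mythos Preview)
The paper does not prove this theorem; it is quoted verbatim from the cited reference \cite{MinrankZeroforcing} and used as a black box, so there is no ``paper's own proof'' to compare against. Your argument is correct and is precisely the standard proof of this result: fix a minimum zero forcing set, propagate zeros along a forcing chronology using the rows of $Sx=0$, and conclude that the restriction-to-$B$ map is injective on $\ker S$.
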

\noindent For our purposes, we do not need to know how $Z(\mathcal G)$ is defined, but what is important is that it depends only on the graph $\mathcal G$, i.e., it is independent of the matrix $S$ and the base field $\mathbb F$. In fact,
\begin{theorem}[{{\!\!\cite[Proposition 2]{GENTNER2018203}}}]\label{thm:graph-zero-forcing}
For any bounded-degree graph, where $\mdeg(\mathcal G) \le \Delta$, its zero forcing number satisfies\footnote{Actually, we need additionally that $\Delta \ge 3$ and $\mathcal G$ is not the complete graph $K_{\Delta+1}$. The former can be satisfied trivially by simply stating that $\Delta \ge 3$ and the latter is never encountered because we are interested in the case where $|\mathcal V|$ is increasing while $\Delta$ is held fixed.}
\ie
Z(\mathcal G) \le \left(1 - \frac1\Delta \right) |\mathcal V|~.
\fe
\end{theorem}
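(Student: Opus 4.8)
The inequality is \cite[Proposition 2]{GENTNER2018203}, so strictly I would simply cite it; but here is the shape of a proof. Recall the zero forcing process: given an initial set $S\subseteq\mathcal V$ of ``black'' vertices (all others ``white''), one repeatedly applies the colour-change rule---if a black vertex has exactly one white neighbour, that neighbour turns black---and $S$ is a \emph{zero forcing set} if iterating the rule eventually blackens all of $\mathcal V$, with $Z(\mathcal G)$ the minimum size of such a set. The task is therefore to exhibit a zero forcing set of size at most $(1-\frac1\Delta)|\mathcal V|$.

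The plan is to reduce this to a purely combinatorial statement: it suffices to find a subset $F\subseteq\mathcal V$ with $|F|\ge|\mathcal V|/\Delta$ such that every $f\in F$ has a neighbour $g_f\notin F$ whose \emph{only} neighbour in $F$ is $f$. Indeed, $S:=\mathcal V\setminus F$ is then a zero forcing set: list $F=\{f_1,\dots,f_t\}$ in any order, start with $S$ black, and observe inductively that once $f_1,\dots,f_{i-1}$ have turned black, the vertex $g_{f_i}\in S$ has $f_i$ as its unique remaining white neighbour (since $N(g_{f_i})\cap F=\{f_i\}$), so $f_i$ is forced; after $t$ steps everything is black. This gives $Z(\mathcal G)\le|\mathcal V|-|F|\le(1-\frac1\Delta)|\mathcal V|$. (An equivalent route is a greedy forcing procedure---force whenever possible, otherwise add an arbitrary white vertex to the seed set---whose output is automatically a zero forcing set; one then has to bound the number of seeds.)

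The crux, which I expect to be the main obstacle, is producing such an $F$, and this is exactly where the hypotheses $\Delta\ge3$ and $\mathcal G\ne K_{\Delta+1}$ enter. I would try to grow $F$ greedily, at each step adjoining a new vertex together with a still-unused ``private'' neighbour while avoiding vertices adjacent to previously chosen private neighbours, and then analyse the configuration in which no further extension is possible. The naive bookkeeping over-counts the ``blocked'' vertices, so the real work is a sharper amortised argument (or a minimal-counterexample analysis with case work on the local structure around a vertex of degree $<\Delta$, and separately on $\Delta$-regular pieces) that isolates $K_{\Delta+1}$ as the sole obstruction---for which the bound genuinely fails, since there $Z=\Delta=|\mathcal V|-1>\Delta-\frac1\Delta=(1-\frac1\Delta)|\mathcal V|$. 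In the regime of interest ($|\mathcal V|\to\infty$ with $\Delta$ fixed) this exceptional case never arises, so in the paper we just invoke \cite{GENTNER2018203}.
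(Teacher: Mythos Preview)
Your proposal is correct and matches the paper exactly: the paper does not prove this statement at all but simply cites it as \cite[Proposition 2]{GENTNER2018203}, which is precisely what you say you would do. Your additional sketch of a possible direct proof is bonus content that the paper makes no attempt to supply, so there is nothing further to compare.
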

\noindent The bound \eqref{nu-a-ineq} follows from these two facts.

Coming back to our application, where $\mathcal G$ is a bipartite graph with vertex-set $\mathcal V = \widehat V \sqcup V$, the adjacency matrix $\mathfrak a$ takes the form \eqref{graph-adj-matrix}, where $\mathfrak h$ is the biadjacency matrix. So $\rank \mathfrak a = 2 \rank \mathfrak h$. It follows that $\rank \mathfrak h \ge |\mathcal V|/2\Delta$, which implies
\ie\label{nu-ineq}
\nu = \dim \ker \mathfrak h = |V| - \rank \mathfrak h \le |V| - \frac{|\mathcal V|}{2\Delta} \le \left(1-\frac1{2\Delta}\right)|V|~,
\fe
where $\Delta = \max(D,\widehat D)$.

One can generalise the above result from qubits to $\mathbb Z_N$ qudits easily. Indeed, when $N$ is prime, $\mathbb Z_N$ is a field, so the above argument goes through and we arrive at the same inequality \eqref{nu-ineq}. On the other hand, when $N$ is composite, one needs to be more careful.

First, there is no well defined notion of ``vector space'' and ``dimension'' in this case. Instead, by ``$\ker_{\mathbb Z_N} \mathfrak h$'', we simply mean the set of all $|V| \times 1$ column vectors that are annihilated by $\mathfrak h$ in $\mathbb Z_N$. The subtlety is that $|\ker_{\mathbb Z_N} \mathfrak h|$ is not necessarily a power of $N$, i.e., there is not always a $\nu$ such that $|\ker_{\mathbb Z_N} \mathfrak h| = N^\nu$. Fortunately, this does not mean we have to redo everything in Appendix \ref{sec:graph-delta}; all the of that discussion goes through except for the replacement
\ie
2^\nu = |\ker_{\mathbb Z_2} \mathfrak h| \to|\ker_{\mathbb Z_N} \mathfrak h|~,
\fe
which is well defined for all $N$. The desired generalisation of \eqref{nu-ineq} is, therefore,
\ie
|\ker_{\mathbb Z_N} \mathfrak h| \le N^{\left(1-\frac1{2\Delta}\right)|V|}~.
\fe
This inequality follows from the following result.
\begin{proposition}
Let $S$ be an $n\times n$ symmetric square matrix with entries in $\{0,1,\ldots,N-1\} \cong \mathbb Z_N$. Let $\mathcal G_S$ be the graph on $n$ vertices such that there is an edge between $x$ and $y$ if and only if $S_{x,y} \ne 0$. Say there is a positive integer $\Delta$ such that $\mdeg(\mathcal G_S) \le \Delta$, i.e., the number of non-zero entries in each row (or column) of $S$ is at most $\Delta$. Then,
\ie
|\ker_{\mathbb Z_N} S| \le N^{\left( 1-\frac1\Delta \right) n}~.
\fe
\end{proposition}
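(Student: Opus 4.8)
The plan is to reduce to the prime-power case via the Chinese Remainder Theorem and then induct on the exponent. Write $N = \prod_i p_i^{e_i}$. The ring isomorphism $\mathbb{Z}_N \cong \prod_i \mathbb{Z}_{p_i^{e_i}}$ is compatible with multiplication by the integer matrix $S$, so $\ker_{\mathbb{Z}_N} S \cong \prod_i \ker_{\mathbb{Z}_{p_i^{e_i}}} S$ as sets, whence $|\ker_{\mathbb{Z}_N} S| = \prod_i |\ker_{\mathbb{Z}_{p_i^{e_i}}} S|$. It therefore suffices to prove $|\ker_{\mathbb{Z}_{p^e}} S| \le (p^e)^{(1-1/\Delta)n}$ for every prime power $p^e$ and then take the product.

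For the prime-power bound I would induct on $e$, the base case $e=1$ being the field case. Reducing $S$ modulo $p$ gives a symmetric matrix $\bar S$ over $\mathbb{F}_p$; its nonzero pattern defines a subgraph $\mathcal G_{\bar S}$ of $\mathcal G_S$ (obtained by deleting the edges corresponding to entries of $S$ divisible by $p$), so $\mdeg(\mathcal G_{\bar S}) \le \mdeg(\mathcal G_S) \le \Delta$. Theorems \ref{thm:graph-nullity} and \ref{thm:graph-zero-forcing} then give $\dim_{\mathbb{F}_p} \ker_{\mathbb{F}_p} \bar S \le Z(\mathcal G_{\bar S}) \le (1-1/\Delta)n$, i.e. $|\ker_{\mathbb{Z}_p} \bar S| \le p^{(1-1/\Delta)n}$.

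For the inductive step, with $e \ge 2$, consider the reduction-mod-$p$ map $r\colon \ker_{\mathbb{Z}_{p^e}} S \to \ker_{\mathbb{Z}_p} \bar S$, $x \mapsto x \bmod p$. Its image lies in $\ker_{\mathbb{Z}_p}\bar S$, so has at most $p^{(1-1/\Delta)n}$ elements. To bound the fibers, note that two elements of $\ker_{\mathbb{Z}_{p^e}}S$ with the same image under $r$ differ by an element of $p\,(\mathbb{Z}_{p^e})^n \cap \ker_{\mathbb{Z}_{p^e}} S$; writing such an element as $py$ with $y$ ranging over $(\mathbb{Z}_{p^{e-1}})^n$ (via the group isomorphism $p\mathbb{Z}_{p^e}\cong \mathbb{Z}_{p^{e-1}}$, $py \leftrightarrow y$), the equation $S(py)=0$ in $\mathbb{Z}_{p^e}$ is equivalent to $Sy \equiv 0 \pmod{p^{e-1}}$, i.e. to $y \in \ker_{\mathbb{Z}_{p^{e-1}}} S$. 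Hence every nonempty fiber of $r$ has exactly $|\ker_{\mathbb{Z}_{p^{e-1}}} S|$ elements, and combining this with the inductive hypothesis for $e-1$,
\[
|\ker_{\mathbb{Z}_{p^e}} S| \;\le\; |\ker_{\mathbb{Z}_p} \bar S|\cdot |\ker_{\mathbb{Z}_{p^{e-1}}} S| \;\le\; p^{(1-1/\Delta)n}\cdot (p^{e-1})^{(1-1/\Delta)n} \;=\; (p^e)^{(1-1/\Delta)n}.
\]

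The only step requiring genuine care is the fiber computation in the inductive step — specifically the identification of $\ker_{\mathbb{Z}_{p^{e-1}}}S$ with $p\,(\mathbb{Z}_{p^e})^n \cap \ker_{\mathbb{Z}_{p^e}}S$ via multiplication by $p$ — but this is elementary bookkeeping with the $p$-adic filtration of $\mathbb{Z}_{p^e}$. Everything else (the CRT reduction, the subgraph observation in the base case, and the final multiplication of inequalities) is routine.
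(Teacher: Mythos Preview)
Your proof is correct, but it takes a different route from the paper's. The paper also starts by reducing $S$ modulo each prime $p$, observing that the support graph $\mathcal G^p_S$ is a subgraph of $\mathcal G_S$ (so still has maximum degree $\le \Delta$), and applying Theorems~\ref{thm:graph-nullity} and~\ref{thm:graph-zero-forcing} to conclude $\rank_{\mathbb Z_p} S \ge n/\Delta$ for every prime $p$. From there, however, the paper invokes the Smith normal form of $S$ over $\mathbb Z$: since the rank over every $\mathbb Z_p$ is at least $n/\Delta$, at least $n/\Delta$ of the invariant factors must equal $1$, and this immediately gives $|\im_{\mathbb Z_N} S| \ge N^{n/\Delta}$, hence $|\ker_{\mathbb Z_N} S| \le N^{(1-1/\Delta)n}$, for arbitrary $N$ in one stroke.

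Your argument replaces the Smith normal form by an explicit CRT decomposition followed by induction on the $p$-adic exponent via the filtration $\ker_{\mathbb Z_{p^e}} S \to \ker_{\mathbb Z_p} \bar S$ with fibers identified with $\ker_{\mathbb Z_{p^{e-1}}} S$. This is more elementary and self-contained (no appeal to the structure theorem for modules over a PID), at the cost of being a bit longer; the paper's route is slicker and handles all $N$ uniformly once the prime-field bound is in hand. Both ultimately rest on exactly the same graph-theoretic input.
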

\begin{proof}
When $N$ is prime, the desired inequality follows directly from Theorems \ref{thm:graph-nullity} and \ref{thm:graph-zero-forcing}. So let us focus on composite $N$.

Consider $S$ with the same entries, but as a matrix over $\mathbb Z_p$ for any prime $p$, and let $\mathcal G^p_S$ be the corresponding graph. It is clear that $\mdeg(\mathcal G^p_S) \le \mdeg(\mathcal G_S)$. Moreover, $\mathbb Z_p$ is a field, so we can use Theorems \ref{thm:graph-nullity} and \ref{thm:graph-zero-forcing} to get
\ie
\dim_{\mathbb Z_p} \ker_{\mathbb Z_p} S \le Z(\mathcal G^p_S) \le \left( 1-\frac1\Delta \right) n~,
\fe
or equivalently,
\ie
\rank_{\mathbb Z_p} S := \dim_{\mathbb Z_p} \im_{\mathbb Z_p} S \ge \frac n\Delta~.
\fe
Since this inequality is true for any prime $p$, the number of $1$'s in the \emph{Smith normal form} of $S$ is at least $n/\Delta$. But then, over $\mathbb Z_N$, we have
\ie
|\im_{\mathbb Z_N} S| \ge N^{n/\Delta}~,
\fe
which is equivalent to the desired inequality.
\end{proof}

\bibliographystyle{JHEP}
\bibliography{Deformed_graph}

\end{document}